\documentclass[a4, 10 pt, conference]{ieeetran}    
\IEEEoverridecommandlockouts 

\usepackage{amsmath,amssymb,amsfonts,bm}

\usepackage{amsthm}

\usepackage{booktabs}

\usepackage{mathtools}
\usepackage{graphicx}
\usepackage{pgfplots}
\usepackage[font=small]{subcaption}
\usepackage[font=small]{caption}

\usepackage{tabularx}
\usepackage[table]{xcolor}

\usepackage{colortbl}  

\newtheorem{thm}{Theorem}
\newtheorem{lem}{Lemma}
\newtheorem{cor}{Corollary}
\newtheorem{defn}{Definition}
\newtheorem{prop}{Proposition}
\newtheorem{rem}{Remark}

\newtheorem{asmpt}{Assumption}

\usepackage{enumitem}
\setlist[enumerate]{leftmargin=*}
\setlist[itemize]{leftmargin=*}

\usepackage[
    style=numeric,
    sorting=none,        
    sortcites=true,      
    backend=biber, 
    maxbibnames=99,      
    natbib=true,
    doi=false,   
    url=false,   
    isbn=false,  
    eprint=false, 
    giveninits=true
]{biblatex}

\title{\Huge Global Geometry of Orthogonal Foliations of Signed-Quadratic Systems}

\author{Antonio Franchi$^{1,2}$
\thanks{$^1$Robotics and Mechatronics Department, Electrical Engineering,  Mathematics, and Computer Science Faculty, University of Twente, The Netherlands. 
}
\thanks{$^2$Department of Computer, Control and Management Engineering, Sapienza University of Rome, 00185 Rome, Italy. {\tt\footnotesize schol@r-franchi.eu}}
\thanks{The work was partially funded by the European Commission Horizon Europe Framework under project Autoassess (101120732)}
}

\definecolor{arxivcolor}{rgb}{0.0, 0.15, 0.4}   

\newif\ifshortversion

\shortversionfalse 

\newcommand{\arxivfigure}[1]{%
  \ifshortversion
    \par\vspace{1ex}\noindent
    \textit{For a visual illustration of the results presented in the previous part please see the corresponding figure in the extended version~\cite{my_arxiv_citation}.}\par\vspace{1ex}
  \else
    {
      \let\oldfigure\figure
      \let\endoldfigure\endfigure
      \renewenvironment{figure}[1][htbp]{%
        \oldfigure[##1]\color{arxivcolor}%
        \captionsetup{labelfont={color=arxivcolor}, textfont={color=arxivcolor}}%
      }{\endoldfigure}%

      \expandafter\let\expandafter\oldfigurestar\csname figure*\endcsname
      \expandafter\let\expandafter\endoldfigurestar\csname endfigure*\endcsname
      \renewenvironment{figure*}[1][htbp]{%
        \oldfigurestar[##1]\color{arxivcolor}%
        \captionsetup{labelfont={color=arxivcolor}, textfont={color=arxivcolor}}%
      }{\endoldfigurestar}%
      
      \let\oldtable\table
      \let\endoldtable\endtable
      \renewenvironment{table}[1][htbp]{%
        \oldtable[##1]\color{arxivcolor}%
        \captionsetup{labelfont={color=arxivcolor}, textfont={color=arxivcolor}}%
        \arrayrulecolor{arxivcolor}
      }{\endoldtable}%
      
      {\color{arxivcolor} #1}
    }
  \fi
}

\newif\ifdraftversion
\draftversionfalse 

\ifdraftversion
    \usepackage{fancyhdr}
    \usepackage{xcolor}
    
    \fancypagestyle{plain}{
        \fancyhf{} 
        \fancyfoot[C]{\textcolor{red}{\textbf{DRAFT - CONFIDENTIAL}}}
        \fancyfoot[R]{\thepage}
        
    }
\fi

\begin{document}

\maketitle

\ifdraftversion
    \thispagestyle{fancy}
\fi

\begin{abstract}
This work formalizes the differential topology of redundancy resolution for systems governed by signed-quadratic actuation maps. By analyzing the minimally redundant case, the global topology of the continuous fiber bundle defining the nonlinear actuation null-space is established. The distribution orthogonal to these fibers is proven to be globally integrable and governed by an exact logarithmic potential field. This field foliates the actuator space, inducing a structural stratification of all orthants into transverse layers whose combinatorial sizes follow a strictly binomial progression. Within these layers, adjacent orthants are continuously connected via lower-dimensional strata termed reciprocal hinges, while the layers themselves are separated by boundary hyperplanes, or portals, that act as global sections of the fibers. This partition formally distinguishes extremal and transitional layers, which exhibit fundamentally distinct fiber topologies and foliation properties. Exploiting this geometric framework, we prove that the orthogonal manifolds within the extremal orthants form a global diffeomorphism to the entire unbounded task space. This establishes the theoretical existence of globally smooth right-inverses that permanently confine the system to a single orthant, guaranteeing the absolute avoidance of kinematic singularities. While motivated by the physical actuation maps of multirotor and marine vehicles, the results provide a strictly foundational topological classification of signed-quadratic surjective systems.
\end{abstract}

\begin{IEEEkeywords}
Control allocation, redundancy resolution, differential topology, singularity avoidance, redundant systems, nonlinear control systems.
\end{IEEEkeywords}

\section{Introduction}
\label{sec:introduction}

The redundancy resolution problem requires mapping a lower-dimensional task space onto a higher-dimensional space of actuator internal states~\cite{johansenFossen2013}. In systems governed by signed-quadratic actuation, the physical mapping from the kinetic actuator state to the generated generalized forces is strictly nonlinear and non-conformal. This surjective mapping is the fundamental mathematical model governing the actuation of virtually all propeller-driven platforms, spanning fully-actuated omnidirectional multirotors~\cite{Brescianini2018, odarTMECH2018, hamandiO7plus2020, Lee2025Omnirotor}, morphing and articulated platforms~\cite{omnimorphJINT2024, dragonIJRR2018, yigit2021novel}, modular aerial arrays~\cite{dfaIJRR2013, dodecacopter2025}, and overactuated marine and underwater vehicles~\cite{fossen2011handbook, antonelli2018underwater}.

Traditionally, both the control systems and aerospace communities have approached the inversion of this map as a constrained numerical optimization problem. Static Control Allocation (SCA) instantaneously distributes effort using generalized inverses and algebraic saturation logic~\cite{penrose1955, durham1993, bodson2002, santos2022control}, while Dynamic Control Allocation (DCA) optimizes internal flows strictly within the unobservable null-space of the actuation matrix~\cite{zaccarian2009, serrani2012, cristofaro2017, masocco2023, akbari2025}. Actively manipulating these redundant internal degrees of freedom is also the primary mechanism for fault-tolerant reallocation~\cite{cristofaro2014, liao2025active, baldini2023fault, cai2025} and windup mitigation~\cite{govoni2023}. Because the physical map is nonlinear, these established algorithmic methodologies generally operate by mapping the system into an auxiliary squared state space, artificially reducing the problem to a linear relation. While this pseudo-linear projection successfully satisfies instantaneous algebraic constraints, it systematically ignores the native differential topology of the nonlinear kinetic space. Numerical solvers operate locally, devoid of formal knowledge regarding the global geometry of the constant-task fibers. Consequently, when the auxiliary allocation is mapped back non-conformally to the physical actuator states, trajectories are frequently forced across geometric boundary hyperplanes, inadvertently triggering infinite-derivative kinetic singularities.

To address this foundational gap, this work shifts the paradigm from numerical optimization to differential topology~\cite{boothby2003, isidori1995}. Mathematically, the topology of the signed-quadratic surjection is isomorphic to the direct kinematics of redundant robotic manipulators: both define nonlinear maps from a high-dimensional state space to a lower-dimensional task space, generating infinite continuous fibers of functionally equivalent states. However, unlike classical trigonometric kinematics where the orthogonal distributions are structurally non-integrable (failing the Frobenius condition), the signed-quadratic system possesses a unique, highly structured global geometry that has not been formally classified in the literature.

It is essential to clarify the explicit scope of this work. We do not propose a novel feedback controller, a real-time allocation solver, or a numerical performance benchmark. Instead, we establish the absolute mathematical and differential footprint of the actuator space upon which all future allocation algorithms must theoretically operate. To expose the fundamental atomic unit of this continuous fiber topology, the present analysis is strictly confined to the minimally redundant (codimension-1) case. By isolating this structural baseline, we avoid the combinatorial obfuscation of higher-dimensional stratifications, allowing for exact, analytic proofs of the underlying geometric objects. Algorithms addressing effort optimality, conditioning complexity, and actuator saturation must ultimately be parameterized along the exact boundaries and manifolds derived herein.

The primary contributions of this mathematical framework are as follows:
\begin{enumerate}
    \item \emph{Differential Topology of the Actuation Map:} The geometric structures of the signed-quadratic mapping are formalized. It is proven that the continuous one-dimensional constant-task fibers asymptotically align with and converge to a unique central fiber, establishing the fundamental longitudinal flow of the nonlinear kinetic space.

    \item \emph{Integrability and the Global Logarithmic Potential:} We prove that the task-dimen\-sional distribution orthogonal to the task fibers is globally integrable. The analysis analytically derives the exact global logarithmic potential field whose level sets transversally foliate the actuator space, establishing the existence of exact orthogonal manifolds~\cite{sussmann1973}.
    
    \item \emph{Topological Classification of the Kinetic Space:} A structural stratification of the actuator space into \emph{transitional} and \emph{extremal} orthants is established. The exact boundary topologies of these regions—defining portals, folds, and reciprocal hinges—are characterized, proving that the orthogonal manifolds form a contiguous foliation across singular boundaries. 
    
    \item \emph{Combinatorial Layering and Global Sections:} Transitional orthants are mathematically partitioned into combinatorial groups, termed \emph{layers}, whose sizes are proven to follow a strict binomial progression. Within this architecture, exact global orthogonal sections are analytically constructed.
    
    \item \emph{Existence of Singularity-Free Diffeomorphisms:} It is established that the orthogonal manifolds within the extremal orthants form a global diffeomorphism to the entire unbounded task space. This proves the theoretical existence of a family of globally smooth, static right-inverses that permanently confine the system to a single orthant, guaranteeing the absolute avoidance of kinematic singularities without reliance on heuristic numerical boundary-handling.
\end{enumerate}

The remainder of this paper is organized as follows. Section~\ref{sec:model} formalizes the system model. Section~\ref{sec:fibers_topology} derives the differential topology of the constant-task fibers. Section~\ref{sec:orthogonal_manifolds_orthants} establishes the integrability of the orthogonal distribution and the global logarithmic potential field. Section~\ref{sec:layer_partition} formalizes the combinatorial layer-stratification and constructs the global orthogonal sections. Section~\ref{sec:summary_foliation} summarizes the discovered structures, and Section~\ref{sec:conclusion} concludes the work.

\section{Signed-Quadratic Actuation Model}
\label{sec:model}

This section formalizes the algebraic and differential properties of systems governed by a signed-quadratic actuation map. The mathematical formulation presented herein is strictly agnostic to the underlying physical platform, establishing a general geometric framework for redundancy resolution.

\begin{defn}[Kinetic space and state, Task, and Signed-quadratic Actuation Map]
Let $\mathcal{V} \cong \mathbb{R}^n$ denote the \emph{kinetic space}, representing the array of $n$ internal actuator states $v = [v_1, \dots, v_n]^T$ (the \emph{kinetic state}). Let $\mathcal{W} \cong \mathbb{R}^m$ denote the \emph{task space} of actuator outputs $w = [w_1, \dots, w_m]^T$ (the \emph{task}). A nonlinear signed-quadratic actuation system is characterized by 
an  \emph{actuation map}  defined by:
\begin{equation}
    f: \mathcal{V} \to \mathcal{W},\quad v \mapsto f(v) \triangleq A(v \odot |v|),
    \label{eq:wrench_map}
\end{equation}
where $A \in \mathbb{R}^{m \times n}$ is a constant allocation matrix, $|v| = [|v_1|, \dots, |v_n|]^T$ denotes the element-wise absolute value, and $\odot$ denotes the Hadamard (element-wise) product. 
\end{defn}

To ensure the physical realizability of arbitrary task-space commands and to isolate the native geometric properties of the map, the following structural assumption is established.

\begin{asmpt}[Surjectivity and Minimal  Redundancy]
\label{asmpt:redundancy}
The allocation matrix $A$ is full row rank, such that $\operatorname{rank}(A) = m$. Furthermore, the system is minimally redundant, possessing exactly one degree of actuation redundancy: $n = m + 1$.
\end{asmpt}

The restriction to the minimally redundant case is a deliberate analytical choice. As demonstrated in subsequent sections, this $n = m+1$ configuration uniquely isolates the fundamental one-dimensional fiber topology of the nonlinear null-space, which fundamentally governs the broader redundancy resolution problem.

The continuous execution of a time-varying task trajectory $w(t) \in \mathcal{W}$ requires a corresponding continuous lift $v(t) \in \mathcal{V}$. The relationship between the velocities in the task space and the kinetic space is governed by the differential of the actuation map, $df_v: T_v\mathcal{V} \to T_{f(v)}\mathcal{W}$, which is represented by the state-dependent Jacobian matrix $J(v) = \frac{\partial f}{\partial v} \in \mathbb{R}^{m \times n}$. This tangent-space mapping is given by:
\begin{equation}
    \dot{w} = J(v)\dot{v}.
    \label{eq:differential_map}
\end{equation}
In practical systems, the tangent vector $\dot{v}$ is strictly bounded by finite physical capacities (e.g., $\|\dot{v}\|_\infty \le \bar{u}$). A kinematic singularity occurs at states $v \in \mathcal{V}$ where a chosen global section (right-inverse mapping) from $\mathcal{W}$ to $\mathcal{V}$ fails to be smooth, forcing the required kinetic velocity to diverge ($\|\dot{v}\| \to \infty$) to maintain a bounded task velocity $\dot{w}$. Because physical limits prohibit infinite derivatives, these topological boundaries strictly limit the feasible continuous operational space. Consequently, defining the exact differential topology of $\mathcal{V}$ to guarantee the existence of smooth, singularity-free sections is a strict prerequisite for safe control allocation.

\begin{rem}[Physical Interpretation]
While the map~\eqref{eq:wrench_map} is analyzed geometrically, it inherently models a broad class of propeller-driven architectures. In the context of multirotor UAVs or marine vehicles, the kinetic state $v$ represents the rotor angular velocities, the task $w$ represents the spatial wrenches (forces and moments) exerted on the rigid body, and the matrix $A$ encapsulates the aerodynamic thrust/drag coefficients, rotor positions, and cant angles. The minimal redundancy condition $n=m+1$ corresponds, for example, to a fully actuated heptarotor controlling $m=6$ degrees of freedom, or an underactuated custom planar pentarotor controlling $m=4$ degrees of freedom. Furthermore, the abstract constraint $\|\dot{v}\|_\infty \le \bar{u}$ directly models the physical torque saturation limits of the brushless DC motors. Thus, a geometric singularity demanding $\|\dot{v}\| \to \infty$ physically manifests as actuator windup and likely loss of control.
\end{rem}

\section{Fiber Topology and Tangent Spaces}
\label{sec:fibers_topology}

Given the minimal redundancy condition $n = m + 1$, the level sets of the actuation map $f(v) = w$—that is, the $w$-parameterized sets $f^{-1}(w) = \{v \in \mathbb{R}^n \mid f(v) = w\}$—are one-dimensional continuous curves that foliate the kinetic space $\mathcal{V}$. We formally define the parameterization of these curves.

\begin{defn}[Transformed State and Fiber Parameterization]
\label{def:fiber_param}
Let $g: \mathcal{V}  \to \mathcal{X} \cong \mathbb{R}^n$ be the mapping defined by $v \mapsto g(v) = v \odot |v|$ (i.e., $g_i(v_i) = v_i |v_i|$). This mapping is a global homeomorphism and a diffeomorphism everywhere except on the coordinate hyperplanes (i.e., it fails to be a diffeomorphism wherever $\prod_{i=1}^n v_i = 0$ due to the Jacobian becoming singular). The variable $x=g(v)$ is defined as the \emph{transformed state} of $v$ and $\mathcal{X}$ is the \emph{transformed space}, under which the nonlinear actuation map simplifies to the linear map $f \circ g^{-1} : \mathcal{X} \to \mathcal{W}$ defined by $ x \mapsto Ax$.

For any task $w \in \mathcal{W}$, the level set in the transformed space $\mathcal{X}$ is the one-dimensional affine subspace $\{z_w^A + \lambda b^A \in \mathcal{X} \mid \lambda \in \mathbb{R}\}$, where $z_w^A := A^+ w$ is the minimum-norm particular solution, $A^+$ is the Moore-Penrose pseudoinverse, and $b^A \in \ker(A)\subset \mathcal{X}$ is a unit-norm vector spanning the one-dimensional null-space. 

Applying the inverse mapping $g^{-1}(x) = \operatorname{sign}(x) \odot \sqrt{|x|}$, the explicit parameterization of the \emph{fiber} in the native kinetic space is given by the continuous map $\gamma: \mathcal{W} \times \mathbb{R} \to \mathcal{V}$:
\begin{equation}
    \gamma(w, \lambda) = \operatorname{sign}(z_w^A + \lambda b^A) \odot \sqrt{|z_w^A + \lambda b^A|}.
    \label{eq:fiber}
\end{equation}
\end{defn}

By construction, it is straightforward to verify that $f(\gamma(w, \lambda)) = w$ for all $\lambda \in \mathbb{R}$, confirming that the parameterized curve correctly maps to the constant task, i.e., is the fiber associated to $w$. Varying the scalar parameter $\lambda$, $\gamma(w,\cdot)$ traverses the single dimension of each fiber $f^{-1}(w)$,  $\forall w\in\mathcal{W}$. 

To ensure the physical and mathematical well-posedness of the differential topology across the entire kinetic space, we introduce the following assumption regarding the null-space of the system.

\begin{asmpt}[Strict Redundancy]
\label{asmpt:true_redundancy}
All components of the null-space generator are strictly non-zero, i.e., $b_i^A \neq 0$ for all $i \in \{1, \dots, n\}$.
\end{asmpt}

\begin{rem}
Assumption~\ref{asmpt:true_redundancy} is standard and reflects a properly designed, fault-tolerant redundant actuation system. If $b_k^A = 0$ for some actuator $k$, the null-space condition $A b^A = 0$ dictates that the remaining $n-1 = m$ columns of $A$ are linearly dependent. Consequently, the allocation sub-matrix excluding actuator $k$ would drop to rank $m-1$. Physically, this would render the actuator $k$ a critical single point of failure, as the remaining actuators would lack the actuation span required to maintain full dynamic control of the system. Thus, Assumption~\ref{asmpt:true_redundancy} ensures the system possesses true, fully coupled redundancy.
\end{rem}

\subsection{The Central Fiber and Asymptotic Convergence}
\label{subsec:central_fiber}

We now characterize the global behavior of the fiber foliation within the interior of the kinetic space orthants. For this, we study the specific parameterization of the fibers $\gamma(w, \lambda)$ as the parameter $\lambda$ diverges ($|\lambda| \to \infty$). 

\begin{defn}[Central Fiber and Extremal Orthants]
\label{def:central_fiber}
The \emph{central fiber} is defined as the unique one-dimensional set corresponding to the zero-task condition, $w=0$. Evaluating the inverse mapping $\gamma(0, \lambda)$ yields:
\begin{equation}
    f^{-1}(0) = \left\{ \operatorname{sign}(\lambda b^A) \odot \sqrt{|\lambda b^A|} \in \mathcal{V} \;\middle|\; \lambda \in \mathbb{R} \right\}.
\end{equation}
By defining the characteristic constant vector $c^A := \operatorname{sign}(b^A) \odot \sqrt{|b^A|} \in \mathcal{V}$, the central fiber simplifies to the set $\{ \operatorname{sign}(\lambda)\sqrt{|\lambda|} c^A \in \mathcal{V} \mid \lambda \in \mathbb{R} \}$. Thus, the central fiber is a perfectly straight line in $\mathcal{V}$ passing through the origin and spanned entirely by the direction $c^A$.

Because the central fiber is a straight line, it passes through exactly two opposing orthants in the $n$-dimensional $\mathcal{V}$, defined as the \emph{extremal orthants}. Their geometry is strictly governed by the signature of the null-space generator $b^A$:
\begin{itemize}
    \item The \emph{positive extremal orthant} ($\mathcal{O}^+$) is defined by the sign sequence $\operatorname{sign}(v_i) = \operatorname{sign}(b_i^A)$ for all $i$. The central fiber inhabits this orthant for all $\lambda > 0$.
    \item The \emph{negative extremal orthant} ($\mathcal{O}^-$) is defined by the sign sequence $\operatorname{sign}(v_i) = -\operatorname{sign}(b_i^A)$ for all $i$. The central fiber inhabits this orthant for all $\lambda < 0$.
\end{itemize}
\end{defn}

\begin{thm}[Asymptotic Convergence of Fibers]
\label{thm:asymptotic_convergence}
For any constant task $w \in \mathcal{W}$, the Euclidean distance between the corresponding fiber $f^{-1}(w)$ and the central fiber $f^{-1}(0)$ vanishes asymptotically as the null-space parameter diverges. Specifically:
\begin{equation}
    \lim_{|\lambda| \to \infty} \left\| \gamma(w, \lambda) - \gamma(0, \lambda) \right\| = 0.
\end{equation}
\end{thm}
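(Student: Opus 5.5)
The plan is to work componentwise and reduce everything to a scalar estimate on the inverse map $h(s) := \operatorname{sign}(s)\sqrt{|s|}$, so that $g^{-1}(x) = h(x)$ applied entrywise. By \eqref{eq:fiber}, the $i$-th component of the difference is
\begin{equation*}
\gamma_i(w,\lambda) - \gamma_i(0,\lambda) = h\big(z_{w,i}^A + \lambda b_i^A\big) - h\big(\lambda b_i^A\big).
\end{equation*}
Since the Euclidean norm in $\mathbb{R}^n$ is controlled by the maximum of the $n$ component magnitudes, it suffices to show that each such scalar difference tends to $0$ as $|\lambda|\to\infty$ with $w$ fixed.

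For each component we set $a := z_{w,i}^A$, which is a fixed real number, and $s := \lambda b_i^A$. By Assumption~\ref{asmpt:true_redundancy} we have $b_i^A \neq 0$, so $|s| = |\lambda||b_i^A| \to \infty$. The key scalar lemma is then: for fixed $a$, $h(s+a) - h(s) \to 0$ as $|s|\to\infty$. To prove it, restrict to $|s| > 2|a|$. Then $s$ and $s+a$ have the same sign, and $|s+a| \ge |s|/2$. On this region $h$ is smooth, with $|h'(t)| = \tfrac{1}{2}|t|^{-1/2}$. By the mean value theorem, $|h(s+a)-h(s)| \le |a|\,\sup_{t}\tfrac{1}{2}|t|^{-1/2}$, where the supremum is over $t$ between $s$ and $s+a$. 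This gives the bound $|h(s+a)-h(s)| \le |a|/\sqrt{2|s|}$.

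An equivalent route avoids calculus: in the same-sign case, the identity $\sqrt{p}-\sqrt{q} = (p-q)/(\sqrt p+\sqrt q)$ gives $|h(s+a)-h(s)| \le |a|/\sqrt{|s|}$ directly. Either way, summing over $i$ yields the explicit rate
\begin{equation*}
\|\gamma(w,\lambda)-\gamma(0,\lambda)\| \le \frac{C\,\|z_w^A\|}{\sqrt{|\lambda|}\,\min_i \sqrt{|b_i^A|}},
\end{equation*}
for an absolute constant $C$, valid for $|\lambda|$ sufficiently large. This bound tends to $0$ as $|\lambda|\to\infty$, which proves the statement; it is uniform on bounded sets of $w$, and it also applies to both ends $\lambda\to\pm\infty$, i.e.\ to both extremal orthants.

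The main obstacle is the non-differentiability of $h$ at the origin, which rules out a naive global Lipschitz argument. The device that handles it is to wait until $|\lambda|$ is large enough that no component of $z_w^A+\lambda b^A$ can cross zero. This is exactly the step where Assumption~\ref{asmpt:true_redundancy} is essential: if some $b_i^A=0$, that component would stay fixed at $h(a)$ for the $w$-fiber versus $0$ for the central fiber, and the limit would fail whenever $a\neq0$.
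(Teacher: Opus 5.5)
Your proof is correct and follows essentially the same route as the paper: you work componentwise, wait until $|\lambda|$ is large enough that $\operatorname{sign}(z_{w,i}^A+\lambda b_i^A)=\operatorname{sign}(\lambda b_i^A)$, and obtain $|\lambda|^{-1/2}$ decay. The only difference is technique: you bound the square-root difference with the mean value theorem or the conjugate identity, where the paper uses a first-order Taylor expansion. Your version gives an explicit non-asymptotic bound, whereas the paper's gives the exact leading term $z_{w,i}^A/(2\sqrt{|\lambda||b_i^A|})$ plus an $\mathcal{O}(|\lambda|^{-3/2})$ remainder.
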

\begin{proof}
We evaluate the spatial distance between a point on a generic fiber $\gamma(w, \lambda)$ and the corresponding point on the central fiber $\gamma(0, \lambda)$ by examining their difference component-wise. 

For a sufficiently large $|\lambda|$, the null-space component dominates the argument $z_{w,i}^A + \lambda b_i^A$. This has two consequences: first, $\operatorname{sign}(z_{w,i}^A + \lambda b_i^A) = \operatorname{sign}(\lambda b_i^A)$; second, the ratio $\frac{z_{w,i}^A}{\lambda b_i^A}$ approaches zero, guaranteeing that $1 + \frac{z_{w,i}^A}{\lambda b_i^A} > 0$. Because this term is strictly positive, the absolute value can be dropped when factoring out $|\lambda b_i^A|$ from the argument of~\eqref{eq:fiber}. Applying the first-order Taylor expansion $\sqrt{1 + \epsilon} = 1 + \frac{1}{2}\epsilon + \mathcal{O}(\epsilon^2)$ yields:
\begin{align}
\begin{aligned}
\gamma_i(w, \lambda) &= \operatorname{sign}(\lambda b_i^A) \sqrt{|\lambda b_i^A|} \sqrt{1 + \frac{z_{w,i}^A}{\lambda b_i^A}} \\
    &= \gamma_i(0, \lambda) \left( 1 + \frac{z_{w,i}^A}{2 \lambda b_i^A} + \mathcal{O}\left(|\lambda|^{-2}\right) \right).
\end{aligned}
\end{align}
Distributing $\gamma_i(0, \lambda)$ and noting that $\frac{\gamma_i(0, \lambda)}{\lambda b_i^A} = \frac{1}{\sqrt{|\lambda b_i^A|}}$, the spatial difference simplifies to:
$$
\gamma_i(w, \lambda) - \gamma_i(0, \lambda) = \frac{z_{w,i}^A}{2 \sqrt{|\lambda| |b_i^A|}} + \mathcal{O}\left(|\lambda|^{-3/2}\right).
$$
Because the dominant residual term strictly decays proportionally to $|\lambda|^{-1/2}$, the component-wise difference vanishes as $|\lambda| \to \infty$. Thus, the Euclidean distance between the coordinate vectors universally approaches zero.
\end{proof}

Consequently, the central fiber $f^{-1}(0)$ acts as a global asymptote for the entire foliation. Specifically, as $\lambda \to \infty$, the positive half-fibers of all sets $f^{-1}(w)$ converge to the positive half of the central fiber within $\mathcal{O}^+$, and as $\lambda \to -\infty$, the negative half-fibers converge to the negative half of the central fiber within $\mathcal{O}^-$.

\subsection{Tangent Spaces of the Fibers}
\label{subsec:tangent_spaces}

To study the global geometry of these fibers, it is instrumental to analyze the differential mapping.

\begin{lem}[Fiber Tangent Space]
\label{lem:tangent_space}
The Jacobian matrix of the actuation map $f$ evaluated at any kinetic state $v \in \mathcal{V}$ is:
\begin{equation}
    J_f(v) = \frac{\partial f}{\partial v} = 2A \operatorname{diag}(|v_1|, \dots, |v_n|) = 2A D(v),
    \label{eq:jacobian}
\end{equation}
where $D(v) := \operatorname{diag}(|v_1|, \dots, |v_n|)$ acts as a local, state-dependent metric scaling matrix. 

At any regular kinetic state $v$ where $v_i \neq 0$ for all $i \in \{1, \dots, n\}$, the implicit function theorem holds, and the tangent space to the fiber passing through $v$ is a one-dimensional line explicitly spanned by the scaled null-space vector:
\begin{equation}
    T_v f^{-1}(w) = \ker(J_f(v)) = \operatorname{span}\{ D(v)^{-1} b^A \}.
    \label{eq:tangent_regular}
\end{equation}
\end{lem}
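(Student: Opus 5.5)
The plan has two parts: compute the Jacobian componentwise, then read off its kernel at regular states using the rank assumption.

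\textbf{Jacobian.} Since $f = A\circ g$ with $g$ acting componentwise, the chain rule gives $J_f(v) = A\,\operatorname{diag}(g_1'(v_1),\dots,g_n'(v_n))$. So it suffices to compute the scalar derivative of $g_i(v_i)=v_i|v_i|$.
\begin{itemize}
\item For $v_i>0$ we have $g_i=v_i^2$, so $g_i'=2v_i=2|v_i|$.
\item For $v_i<0$ we have $g_i=-v_i^2$, so $g_i'=-2v_i=2|v_i|$.
\item At $v_i=0$, the difference quotient is $h|h|/h=|h|\to 0=2|0|$.
\end{itemize}
Hence $g_i$ is $C^1$ everywhere with $g_i'(v_i)=2|v_i|$. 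This gives \eqref{eq:jacobian}, valid globally including on the coordinate hyperplanes.

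\textbf{Rank at regular states.} Assume $v_i\neq 0$ for all $i$. Then $D(v)$ is invertible, and by Assumption~\ref{asmpt:redundancy}, $\operatorname{rank}(J_f(v))=\operatorname{rank}(2AD(v))=\operatorname{rank}(A)=m$. So $v$ is a regular point of $f$. The implicit function (regular value) theorem then says that, near $v$, the level set $f^{-1}(f(v))$ is an embedded $C^1$ submanifold of dimension $n-m=1$. Its tangent space is $\ker(J_f(v))$. This kernel is one-dimensional by rank–nullity.

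\textbf{Kernel.} Because $J_f(v)$ has rank $m$, its kernel is one-dimensional, so it suffices to exhibit one nonzero kernel vector. Take $u=D(v)^{-1}b^A$. Then $J_f(v)u=2AD(v)D(v)^{-1}b^A=2Ab^A=0$, since $b^A\in\ker A$. Also $u\neq 0$ because $b^A\neq 0$. Therefore $\ker J_f(v)=\operatorname{span}\{D(v)^{-1}b^A\}$.

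As a consistency check with Definition~\ref{def:fiber_param}, differentiate $\gamma(w,\lambda)$ in $\lambda$ at points where $x=z_w^A+\lambda b^A$ has no zero entries. This gives $\partial_\lambda\gamma_i = b_i^A/(2\sqrt{|x_i|}) = b_i^A/(2|v_i|)$, which is parallel to $D(v)^{-1}b^A$.

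The only delicate point is differentiability of $v_i|v_i|$ at $0$. It matters here only for the global Jacobian formula, not for the regular-state claim.
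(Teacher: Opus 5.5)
Your proposal is correct and follows essentially the same route as the paper: the chain rule gives $J_f(v)=2AD(v)$, and the invertibility of $D(v)$ at regular states identifies $\ker J_f(v)$ with $D(v)^{-1}\ker A=\operatorname{span}\{D(v)^{-1}b^A\}$. The differences are minor. You also check differentiability of $v_i|v_i|$ at $0$ and invoke the regular value theorem to justify $T_v f^{-1}(w)=\ker J_f(v)$, both of which the paper leaves implicit. You obtain the kernel by exhibiting one vector and applying rank--nullity, whereas the paper argues directly from $D(v)\tau\in\ker A$; this changes nothing essential.
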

\begin{proof}
By the chain rule, the Jacobian is given by $J_f(v) = A \frac{\partial g}{\partial v} = 2A D(v)$. Because $v_i \neq 0$ for all $i$, the diagonal matrix $D(v)$ is strictly positive definite and invertible. Any vector $\tau \in \ker(J_f(v))$ must satisfy $2A D(v) \tau = 0$, which implies $D(v) \tau \in \ker(A) = \operatorname{span}\{b^A\}$. Consequently, it follows that $\tau \in \operatorname{span}\{D(v)^{-1} b^A\}$.
\end{proof}

Having established the algebraic formulation of the tangent spaces in regular domains in Lemma~\ref{lem:tangent_space}, we now analyze their asymptotic behavior and their structural degeneracies at the boundaries of the kinetic orthants.

\begin{prop}[Asymptotic Tangent Alignment]
\label{prop:asymptotic_tangent}
For any constant task $w \in \mathcal{W}$, the tangent spaces of the fibers asymptotically align with the direction of the central fiber as the null-space parameter diverges. Specifically:
\begin{equation}
    \lim_{|\lambda| \to \infty} T_{\gamma(w, \lambda)} f^{-1}(w) = \operatorname{span}\{ c^A \}.
\end{equation}
\end{prop}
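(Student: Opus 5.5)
By Lemma~\ref{lem:tangent_space}, it suffices to show that the direction of the tangent vector $D(v)^{-1} b^A$, evaluated at $v = \gamma(w,\lambda)$, converges projectively to $c^A$. I will therefore compute this vector explicitly along the fiber, normalize it, and pass to the limit. Convergence of one-dimensional subspaces is understood as convergence of the corresponding unit generators up to sign, equivalently as convergence in the projective space $\mathbb{RP}^{n-1}$.

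\textbf{Step 1 (regularity for large $|\lambda|$).} As in the proof of Theorem~\ref{thm:asymptotic_convergence}, Assumption~\ref{asmpt:true_redundancy} gives $b_i^A \neq 0$ for all $i$. Hence there is $\Lambda>0$ such that $|\lambda|>\Lambda$ implies $z_{w,i}^A + \lambda b_i^A \neq 0$ for every $i$. So $\gamma(w,\lambda)$ is a regular state, the tangent formula \eqref{eq:tangent_regular} applies, and $D(\gamma(w,\lambda))$ is invertible.

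\textbf{Step 2 (explicit tangent vector).} Since $|\gamma_i(w,\lambda)| = \sqrt{|z_{w,i}^A+\lambda b_i^A|}$, the $i$-th component of the tangent generator is
\[
\tau_i(\lambda) = \frac{b_i^A}{\sqrt{|z_{w,i}^A+\lambda b_i^A|}} = \frac{b_i^A}{\sqrt{|\lambda|\,|b_i^A|}}\Big(1+\frac{z_{w,i}^A}{\lambda b_i^A}\Big)^{-1/2},
\]
where the parenthesized term is positive for $|\lambda|>\Lambda$, exactly as in the proof of Theorem~\ref{thm:asymptotic_convergence}. Now use $b_i^A/\sqrt{|b_i^A|} = \operatorname{sign}(b_i^A)\sqrt{|b_i^A|} = c_i^A$. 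Rescaling by the positive scalar $\sqrt{|\lambda|}$, which does not change the span, gives
\[
\sqrt{|\lambda|}\,\tau_i(\lambda) = c_i^A\big(1+\mathcal{O}(|\lambda|^{-1})\big).
\]
Hence $\sqrt{|\lambda|}\,\tau(\lambda)\to c^A$, and $c^A \neq 0$.

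\textbf{Step 3 (convergence of spans).} Normalizing, $\tau(\lambda)/\|\tau(\lambda)\| \to c^A/\|c^A\|$ by continuity of normalization away from zero. This is the same limit for both $\lambda\to+\infty$ and $\lambda\to-\infty$, since the rescaling factor is positive in both cases. Therefore the lines $\operatorname{span}\{\tau(\lambda)\}$ converge to $\operatorname{span}\{c^A\}$; for example, the orthogonal projectors $\tau\tau^T/\|\tau\|^2$ converge to $c^A c^{A\,T}/\|c^A\|^2$.

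The only delicate point is making the notion of limit of subspaces precise, together with the uniformity of the sign argument near components where $z_{w,i}^A$ is large relative to $b_i^A$. Because $n$ is finite and each $b_i^A\neq0$, a single threshold $\Lambda=\max_i 2|z_{w,i}^A|/|b_i^A|$ handles all components simultaneously.
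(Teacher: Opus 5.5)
Your proposal is correct and follows essentially the same route as the paper: you compute $D(\gamma(w,\lambda))^{-1}b^A$ componentwise, factor out $\lambda b_i^A$, use $b_i^A/\sqrt{|b_i^A|}=c_i^A$, and observe that the common scalar $1/\sqrt{|\lambda|}$ leaves the span unchanged while the residual factor tends to $1$. Your explicit threshold $\Lambda$, which secures positivity of $1+z_{w,i}^A/(\lambda b_i^A)$, and your projective (orthogonal-projector) notion of subspace convergence make precise what the paper leaves implicit.
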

\begin{proof}
To evaluate the asymptotic direction of the tangent spaces given by~\eqref{eq:tangent_regular}, we track the components of a generic tangent vector $t(\lambda)$ along the fiber $\gamma(w, \lambda)$. Substituting the explicit fiber definition $\gamma_i(w, \lambda)$ and factoring out the dominant null-space term $\lambda b_i^A$ from the argument yields:
$$
    t_i(\lambda) = \frac{b_i^A}{|\gamma_i(w, \lambda)|} = \frac{b_i^A}{\sqrt{|z_{w,i}^A + \lambda b_i^A|}} = \frac{b_i^A}{\sqrt{|\lambda b_i^A| \left| 1 + \frac{z_{w,i}^A}{\lambda b_i^A} \right|}}.
$$
Recognizing that $\frac{b_i^A}{\sqrt{|b_i^A|}} = \operatorname{sign}(b_i^A)\sqrt{|b_i^A|} = c_i^A$, the tangent component can be rewritten as:
$$
    t_i(\lambda) = \frac{1}{\sqrt{|\lambda|}} c_i^A \left( 1 + \frac{z_{w,i}^A}{\lambda b_i^A} \right)^{-1/2}.
$$
As $|\lambda| \to \infty$, the ratio $\frac{z_{w,i}^A}{\lambda b_i^A}$ approaches zero. By the continuity of the algebraic functions, the residual multiplier strictly converges to $1$. Consequently, as the scalar factor $1/\sqrt{|\lambda|}$ vanishes uniformly across all components $t_i(\lambda)$ for $i=1,\ldots,n$, the exact orientation of the tangent vector converges strictly to $c^A$.
\end{proof}

\begin{thm}[Singularity Alignment at Orthant Boundaries]
\label{thm:singularity_alignment}
Let $I_0 \subset \{1, \dots, n\}$ denote a proper subset of indices ($1 \le |I_0| < n$) representing actuators that have halted, such that $v_i = 0$ for $i \in I_0$. As a fiber approaches this orthant boundary, its tangent space asymptotically aligns entirely with the \emph{degenerate subspace} spanned by the halted axes. Furthermore, for a generic single-actuator crossing ($|I_0| = 1$), the fiber intersects the corresponding boundary hyperplane orthogonally.
\end{thm}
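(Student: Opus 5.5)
The plan is to use the explicit tangent direction of Lemma~\ref{lem:tangent_space}. At any regular point $v$ of a fiber, the tangent line is spanned by $\tau(v) = D(v)^{-1} b^A$, with components $\tau_i = b_i^A/|v_i|$. Consider a sequence of regular points $v^{(k)}$ on the fiber $f^{-1}(w)$ converging to a boundary point $\bar v$. At $\bar v$ we have $\bar v_i = 0$ exactly for $i \in I_0$, and $\bar v_j \neq 0$ for $j \notin I_0$. The argument then splits the components of $\tau$ according to this index set.

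\textbf{Splitting the components.} For $j \notin I_0$, continuity gives $|v^{(k)}_j| \to |\bar v_j| > 0$, so $\tau_j$ stays bounded and converges to $b_j^A/|\bar v_j|$. For $i \in I_0$, Assumption~\ref{asmpt:true_redundancy} gives $b_i^A \neq 0$ while $|v^{(k)}_i| \to 0$, so $|\tau_i| \to \infty$.

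\textbf{Normalizing.} Divide by the norm, $\hat\tau = \tau/\|\tau\|$. Then $\|\tau\| \to \infty$, and the components $\hat\tau_j$ with $j \notin I_0$ tend to zero. The limit points of $\hat\tau$ therefore lie in $\operatorname{span}\{e_i : i \in I_0\}$. This is the asymptotic alignment with the degenerate subspace, understood as convergence of the tangent lines in the Grassmannian (projective) sense, in the same way Proposition~\ref{prop:asymptotic_tangent} treats limits of spans.

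\textbf{Relative rates for $|I_0|\ge 2$.} Here I would also note the relative rates, using the transformed state. Along the fiber, $x_i = z_{w,i}^A + \lambda b_i^A$, so each halted $x_i$ is affine in $\lambda$. This has two consequences:
\begin{itemize}
\item A simultaneous vanishing of several $x_i$ at one $\lambda^*$ requires $z_{w,i}^A/b_i^A$ to be equal across $I_0$. That condition is non-generic, which is why the statement specializes to $|I_0|=1$.
\item When it does occur, $|v_i| = \sqrt{|b_i^A||\lambda-\lambda^*|}$, so $\tau_i \propto c_i^A/\sqrt{|\lambda-\lambda^*|}$. The limiting direction is then the explicit vector $(c_i^A)_{i\in I_0}$ restricted to the halted axes, which lies in the degenerate subspace as claimed.
\end{itemize}

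\textbf{Single-actuator crossing.} For $I_0 = \{k\}$, the degenerate subspace is $\operatorname{span}\{e_k\}$, which is the normal of the hyperplane $\{v_k = 0\}$. Near the crossing parameter $\lambda^* = -z_{w,k}^A/b_k^A$, the other components of $\gamma$ are smooth in $\lambda$, while $\gamma_k(w,\lambda) = \operatorname{sign}(b_k^A(\lambda-\lambda^*))\sqrt{|b_k^A||\lambda-\lambda^*|}$. To avoid the infinite-speed parameterization, I reparameterize by $s := \gamma_k$, so that $\lambda - \lambda^* = s|s|/b_k^A$. The remaining coordinates become $C^1$ functions of $s$ whose derivative at $s=0$ vanishes, because $d\lambda/ds = 2|s|/b_k^A \to 0$. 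The curve is therefore locally a $C^1$ graph over the $e_k$ axis, with tangent exactly $e_k$ at the crossing point. This gives orthogonal intersection, and it also shows the fiber is a genuine $C^1$ curve through the hyperplane, not merely a limit of directions.

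\textbf{Main obstacle.} The delicate step is making the notion of ``tangent space at the boundary'' precise. There the Jacobian $2AD(v)$ loses rank, so Lemma~\ref{lem:tangent_space} no longer applies. I would resolve this for $|I_0|=1$ with the $s$-reparameterization, which yields an honest tangent. For general $I_0$ I would state the result only as a limit of normalized tangent directions.
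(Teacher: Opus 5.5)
Your proposal is correct and follows essentially the paper's route. The paper differentiates $\gamma(w,\lambda)$ in $\lambda$ and obtains components $b_i^A/(2|v_i|)$, i.e.\ $\tfrac{1}{2}D(v)^{-1}b^A$, which is your $\tau$ up to scale; near a simultaneous halt these become $c_i^A/(2\sqrt{|\lambda-\lambda^*|})$ for $i\in I_0$ and dominate the bounded active components, exactly as in your splitting and normalization steps. Your reparameterization by $s=\gamma_k$ for $|I_0|=1$ is a worthwhile addition the paper does not make, since it turns the limit-of-directions argument into an honest $C^1$ tangent $e_k$ at the crossing point itself.
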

\begin{proof}
At the orthant boundaries, $D(v)$ possesses $|I_0|$ zeros on its main diagonal. Consequently, $J_f(v)$ contains exactly $|I_0|$ zero columns, dropping rank and rendering the standard implicit tangent formulation in~\eqref{eq:tangent_regular} ill-posed. 

To resolve this, we differentiate the explicit geometric parameterization of the fiber $\gamma(w, \lambda)$ with respect to $\lambda$, see~\eqref{eq:fiber}. Applying the chain rule yields the exact components of the parametric tangent vector:
$$
    \frac{\partial \gamma_i}{\partial \lambda} = \frac{b_i^A}{2 \sqrt{|z_{w,i}^A + \lambda b_i^A|}}.
$$
Suppose the actuators in $I_0$ halt simultaneously at some parameter $\lambda^*$, such that $z_{w,i}^A + \lambda^* b_i^A = 0$ for all $i \in I_0$. Near this boundary, the argument of the square root is $z_{w,i}^A + \lambda b_i^A = b_i^A(\lambda - \lambda^*)$. Substituting this into the exact derivative and factoring yields:
\begin{equation}
    \frac{\partial \gamma_i}{\partial \lambda} = \frac{b_i^A}{2 \sqrt{|b_i^A|} \sqrt{|\lambda - \lambda^*|}} = \left( \frac{1}{2 \sqrt{|\lambda - \lambda^*|}} \right) c_i^A.
    \label{eq:gamma_partial_lambda}
\end{equation}
As $\lambda \to \lambda^*$, the scalar multiplier diverges to infinity. Consequently, the tangent components associated with the halted actuators ($i \in I_0$) diverge proportionally to $c_i^A$, while the components of the actively spinning actuators ($i \notin I_0$) remain strictly finite. Because the diverging components infinitely dominate the finite components, the continuous one-dimensional fiber aligns entirely with the subspace spanned by the canonical coordinate axes of $I_0$. If $|I_0| = 1$, this subspace is a single coordinate axis, proving orthogonal intersection with the corresponding boundary hyperplane.
\end{proof}

\begin{cor}[Kinematic Singularity at the Origin]
\label{cor:origin_singularity}
At the zero kinetic state $v = 0 \in \mathcal{V}$, the system encounters a complete \emph{kinematic singularity} where $J_f(0) = 0$. However, by geometric continuation of the central fiber $f^{-1}(0)$, which is a straight line parameterized by $\mu = \operatorname{sign}(\lambda)\sqrt{|\lambda|}$ along $c^A$, the geometric tangent at the origin remains perfectly continuous and uniquely defined as:
\begin{equation}
    T_0 f^{-1}(0) = \operatorname{span}\{ c^A \}.
\end{equation}
\end{cor}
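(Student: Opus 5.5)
The corollary has two parts: $J_f(0)=0$, and the central fiber has a well-defined tangent at the origin equal to $\operatorname{span}\{c^A\}$. We prove them in that order.

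\emph{Vanishing Jacobian.} Substituting $v=0$ into \eqref{eq:jacobian} of Lemma~\ref{lem:tangent_space} gives $D(0)=0$. Hence $J_f(0)=2A D(0)=0$, so the differential has rank zero. Every direction of $T_0\mathcal{V}$ then lies in $\ker J_f(0)$, and the implicit characterization \eqref{eq:tangent_regular} yields no information at the origin. This is the complete kinematic singularity, the case $I_0=\{1,\dots,n\}$ excluded from Theorem~\ref{thm:singularity_alignment}.

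\emph{Reparameterizing the central fiber.} Start from Definition~\ref{def:central_fiber}: $\gamma(0,\lambda)=\operatorname{sign}(\lambda)\sqrt{|\lambda|}\,c^A$. Introduce $\mu=\operatorname{sign}(\lambda)\sqrt{|\lambda|}$. The map $\lambda\mapsto\mu$ is a homeomorphism of $\mathbb{R}$ (odd, strictly increasing, continuous, with inverse $\lambda=\mu|\mu|$), so it is a valid reparameterization of the whole fiber. In the new parameter the fiber is the curve $\tilde\gamma(\mu)=\mu\,c^A$. This curve is smooth (linear) on all of $\mathbb{R}$, with constant derivative $\tilde\gamma'(\mu)=c^A$. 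Assumption~\ref{asmpt:true_redundancy} gives $c^A\neq 0$, so $\tilde\gamma$ is a regular embedded curve, namely the line through the origin spanned by $c^A$. Its tangent space at $\mu=0$ is therefore $\operatorname{span}\{c^A\}$, and this is a geometric property of the set $f^{-1}(0)$, independent of the chosen parameterization. Concretely, the tangent space of a one-dimensional submanifold that is a line through $0$ is that line.

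\emph{Consistency with the regular tangent.} For $\mu\neq 0$ we check agreement with Lemma~\ref{lem:tangent_space}. Since $v=\mu c^A$, we have $D(v)=|\mu|\operatorname{diag}(|c_i^A|)$, and therefore $D(v)^{-1}b^A=|\mu|^{-1}\big(b_i^A/\sqrt{|b_i^A|}\big)_i=|\mu|^{-1}c^A$. So the tangent line equals $\operatorname{span}\{c^A\}$ at every point of the central fiber, and its limit as $\mu\to0^\pm$ is again $\operatorname{span}\{c^A\}$. This gives continuity of the tangent field through the origin.

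\emph{Main subtlety.} In the original parameter $\lambda$, the derivative \eqref{eq:gamma_partial_lambda} with $z_0^A=0$ and $\lambda^*=0$ diverges uniformly as $|\lambda|^{-1/2}c^A$. One must argue that this blow-up is an artifact of the parameterization, not of the set. The argument is that every component diverges with the same scalar factor. The direction $c^A/\|c^A\|$ is therefore constant, and the reparameterization by $\mu$ removes the factor exactly. This is what distinguishes the origin from the partial boundaries in Theorem~\ref{thm:singularity_alignment}, where only the halted components diverge and the fiber tilts toward the halted axes.
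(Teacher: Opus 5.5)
Your proposal is correct and follows the same argument the paper gives inside the statement itself, with no separate proof: $J_f(0)=2AD(0)=0$ from Lemma~\ref{lem:tangent_space}, and the reparameterization $\mu=\operatorname{sign}(\lambda)\sqrt{|\lambda|}$ turns the central fiber into the line $\mu c^A$, whose tangent is $\operatorname{span}\{c^A\}$. Your added check that $D(v)^{-1}b^A=|\mu|^{-1}c^A$ along the central fiber usefully makes the claimed continuity of the tangent explicit.
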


\subsection{Orthant Traversal and Global Topology of the Fibers}
\label{subsec:global_topology}

The explicit parameterization of the fibers reveals a strictly ordered traversal through the kinetic space $\mathcal{V}$. This trajectory dictates how the fibers navigate the orthants and coordinate hyperplanes.

\begin{lem}[Orthant Traversal Sequence]
\label{lem:orthant_traversal}
For any generic task $w \in \mathcal{W}$, the corresponding fiber $\gamma(w, \cdot)$ is strictly monotonic component-wise. It crosses exactly $n$ coordinate hyperplanes and traverses exactly $n+1$ distinct orthants without ever intersecting the same coordinate hyperplane twice.
\end{lem}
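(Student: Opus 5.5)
The plan is to work entirely in the transformed space $\mathcal{X}$ and pull the conclusions back through the homeomorphism $g^{-1}$. The central observation is that $g^{-1}$ acts componentwise and is strictly increasing and sign-preserving: $\operatorname{sign}(g_i^{-1}(x_i)) = \operatorname{sign}(x_i)$, and $x_i = 0$ if and only if $v_i = 0$. Consequently, monotonicity, hyperplane crossings and orthant membership of $\gamma(w,\cdot)$ can all be read off from the affine line $x(\lambda) = z_w^A + \lambda b^A$.

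\textbf{Step 1 (monotonicity).} Each component $x_i(\lambda) = z_{w,i}^A + \lambda b_i^A$ is affine in $\lambda$ with slope $b_i^A \neq 0$ by Assumption~\ref{asmpt:true_redundancy}. It is therefore strictly monotone, increasing if $b_i^A>0$ and decreasing otherwise. Composing with the strictly increasing scalar map $s \mapsto \operatorname{sign}(s)\sqrt{|s|}$ shows that $\gamma_i(w,\lambda)$ is strictly monotone in $\lambda$ with the same direction. This is consistent with the derivative formula in the proof of Theorem~\ref{thm:singularity_alignment}, which has constant sign $\operatorname{sign}(b_i^A)$.

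\textbf{Step 2 (crossings).} A strictly monotone affine function has exactly one zero, namely
\[
\lambda_i^* = -z_{w,i}^A / b_i^A .
\]
Hence the fiber meets each coordinate hyperplane $\{v_i=0\}$ exactly once. Across that point $v_i$ changes sign, because $x_i$ does. So the fiber crosses each of the $n$ hyperplanes, and never intersects any of them twice.

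\textbf{Step 3 (counting orthants).} Here I would make "generic" precise: $w$ is generic if the $n$ values $\lambda_i^*$ are pairwise distinct. The non-generic set is the union of the finitely many subsets of $\mathcal{W}$ on which $z_{w,i}^A b_j^A - z_{w,j}^A b_i^A = 0$. Each of these is linear in $w$ through $A^+$, so it is a hyperplane or all of $\mathcal{W}$. I would argue it is a proper hyperplane: otherwise the rows $i,j$ of $[A^+\;b^A]$ would be proportional, forcing rows $i,j$ of the invertible matrix $[A^+\;b^A]$ (invertible since $A^+$ spans $\ker(A)^\perp$) to be dependent, which is a contradiction. Thus the non-generic set is a finite union of proper hyperplanes, a closed measure-zero set, and I expect this to be the main technical point.

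With the $\lambda_i^*$ ordered as $\lambda_{(1)}<\dots<\lambda_{(n)}$, the real line splits into $n+1$ open intervals. On each interval the sign vector $\operatorname{sign}(x(\lambda))$ is constant with no zero entries, so the fiber lies in a single open orthant there. Consecutive intervals differ in exactly one sign, the one flipped at $\lambda_{(k)}$, so the $n+1$ orthants are pairwise distinct: the $k$-th has exactly $k-1$ flips relative to the first, and two different intervals differ by a nonzero number of distinct flips. Finally, the first and last intervals have sign vectors $-\operatorname{sign}(b^A)$ and $\operatorname{sign}(b^A)$ respectively. Hence the traversal runs from $\mathcal{O}^-$ to $\mathcal{O}^+$, consistent with the asymptotic picture of Theorem~\ref{thm:asymptotic_convergence}.
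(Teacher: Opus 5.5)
Your proposal is correct and takes the same route as the paper. You reduce to the affine line $z_w^A+\lambda b^A$ in $\mathcal{X}$, use $b_i^A\neq 0$ to get exactly one root $\lambda_i^*=-z_{w,i}^A/b_i^A$ per component, and sort the distinct roots to split $\mathbb{R}$ into $n+1$ intervals. Your additional steps are sound and fill in points the paper only asserts: the precise genericity criterion via invertibility of $[A^+\;b^A]$ (which justifies the measure-zero claim of Remark~\ref{rem:non_generic_tasks}), the pairwise distinctness of the traversed orthants, and the identification of the first and last orthants as $\mathcal{O}^-$ and $\mathcal{O}^+$.
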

\begin{proof}
Consider the internal argument of the fiber parameterization for each actuator, $u_i(w,\lambda) = z_{w,i}^A + \lambda b_i^A$. This argument is a strictly linear function of the null-space parameter $\lambda$. Because $b_i^A \neq 0$ for all $i$ (Assumption~\ref{asmpt:true_redundancy}), each kinetic state component $v_i$ crosses zero exactly once at the critical parameter $\lambda_i^* := -z_{w,i}^A / b_i^A$. 

Each zero crossing represents an intersection with the coordinate hyperplane $v_i=0$, which separates the half-space where $\operatorname{sign}(v_i)=-\operatorname{sign}(b_i^A)$ (traversed for $\lambda<\lambda_i^*$) from the half-space where $\operatorname{sign}(v_i)=\operatorname{sign}(b_i^A)$ (traversed for $\lambda>\lambda_i^*$). Sorting these $n$ distinct roots on $\mathbb{R}$ identifies $n$ unique hyperplane crossings. These $n$ crossings divide the parametric domain $\lambda \in \mathbb{R}$ into exactly $n+1$ continuous intervals, corresponding to $n+1$ traversed orthants out of the $2^n$ possible orthants.
\end{proof}

\begin{rem}[Non-Generic Tasks and Simultaneous Crossings]
\label{rem:non_generic_tasks}
Lemma~\ref{lem:orthant_traversal} relies on the generic condition that all $n$ roots $\lambda_i^*$ are strictly distinct. The set of non-generic tasks—where two or more roots coincide—forms a subset of measure zero in $\mathcal{W}$. If $m > 1$ roots coincide at a single parameter value $\lambda^*$, the fiber intersects an $(n-m)$-dimensional stratum (the simultaneous intersection of $m$ coordinate hyperplanes). During this measure-zero event, $m$ coordinates change sign instantaneously, causing the fiber to ``skip'' $m-1$ intermediate orthants. Consequently, the fiber traverses $k+1$ orthants, where $k$ (with $k < n$) is the number of strictly distinct roots.
\end{rem}

\begin{defn}[Transitional Orthants]
\label{def:transitional_orthants}
Based on Lemma~\ref{lem:orthant_traversal}, the initial orthant traversed for $\lambda < \min_i\{\lambda_i^*\}$ corresponds to the negative extremal orthant $\mathcal{O}^-$. The final orthant traversed for $\lambda > \max_i\{\lambda_i^*\}$ corresponds to the positive extremal orthant $\mathcal{O}^+$. For a generic task, the $n-1$ intermediate orthants traversed between $\mathcal{O}^-$ and $\mathcal{O}^+$ are defined as the \emph{transitional orthants}. There exist $2^n-2$ such transitional orthants in the topology of the map, and each generic fiber with $w\neq0$ traverses exactly $n-1$ of them.
\end{defn}

\begin{prop}[Strict Forward Progression]
\label{prop:forward_progression}
The unidirectional traversal of any fiber is geometrically characterized by a continuous forward progression relative to the central fiber. The projection of the fiber's tangent vector onto the central fiber direction $c^A$ is strictly positive everywhere, ensuring the fiber never stalls or reverses direction.
\end{prop}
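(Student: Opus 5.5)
The approach is a direct computation with the explicit parametric tangent from the proof of Theorem~\ref{thm:singularity_alignment}. For a fiber $\gamma(w,\cdot)$ and any $\lambda$ with $u_i := z_{w,i}^A + \lambda b_i^A \neq 0$ for all $i$, that proof gives
\begin{equation*}
\frac{\partial \gamma_i}{\partial \lambda} = \frac{b_i^A}{2\sqrt{|u_i|}} .
\end{equation*}
Taking the inner product with $c^A$, and using $c_i^A = \operatorname{sign}(b_i^A)\sqrt{|b_i^A|}$, yields
\begin{equation*}
\Big\langle \frac{\partial \gamma}{\partial \lambda}, c^A \Big\rangle = \sum_{i=1}^n \frac{b_i^A \operatorname{sign}(b_i^A)\sqrt{|b_i^A|}}{2\sqrt{|u_i|}} = \sum_{i=1}^n \frac{|b_i^A|^{3/2}}{2\sqrt{|u_i|}} .
\end{equation*}
Each summand is strictly positive by Assumption~\ref{asmpt:true_redundancy}. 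Hence the projection onto $c^A$ is strictly positive at every regular point. Equivalently, in the implicit form of Lemma~\ref{lem:tangent_space}, the oriented tangent $D(v)^{-1}b^A$ satisfies $\langle D(v)^{-1}b^A, c^A\rangle = \sum_i |b_i^A|^{3/2}/|v_i| > 0$. The key observation is that the sign of $b_i^A$ cancels against the sign built into $c^A$, so no cancellation between terms can occur.

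Next I would lift this pointwise statement to the global one that the fiber never stalls or reverses. Integrating the positive derivative gives $\langle \gamma(w,\lambda), c^A\rangle$ strictly increasing in $\lambda$. This is consistent with the componentwise monotonicity of Lemma~\ref{lem:orthant_traversal}: each $\gamma_i$ moves monotonically in the direction $\operatorname{sign}(b_i^A)$, from the $\mathcal{O}^-$ sign pattern towards the $\mathcal{O}^+$ sign pattern. In fact $\langle\gamma,c^A\rangle=\sum_i \sqrt{|b_i^A|}\,\operatorname{sign}(b_i^A)\gamma_i$ is a sum of nondecreasing functions of $\lambda$, each strictly increasing. This gives a derivative-free proof of monotonicity that survives the crossing points.

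The main obstacle is the boundary points $\lambda=\lambda_i^*$, where $u_i=0$ and the derivative is not finite. There I would argue that the $i$-th term diverges to $+\infty$, never to $-\infty$, matching the alignment with $c_i^A$ in Theorem~\ref{thm:singularity_alignment}. The normalized tangent's projection onto $c^A$ therefore tends to $\sqrt{|b_i^A|}\cdot|c_i^A|/|c_i^A| > 0$ (for $|I_0|=1$) or to a positive combination in the simultaneous case. So the geometric tangent direction still points forward.

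If one reparametrizes by arclength, the projection stays bounded away from zero locally. Together with the strict monotonicity of $\langle \gamma, c^A\rangle$, this establishes strict forward progression everywhere, including on the measure-zero set of crossings.
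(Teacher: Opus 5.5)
Your proposal is correct and follows the paper's proof. Both compute $\langle \partial\gamma/\partial\lambda, c^A\rangle=\sum_i |b_i^A|^{3/2}/(2|v_i|)$, note that the sign of $b_i^A$ cancels against the sign built into $c_i^A$, and observe that the term of a vanishing coordinate diverges to $+\infty$ at crossings. Your derivative-free argument, that $\langle\gamma(w,\lambda),c^A\rangle$ is a sum of strictly increasing functions of $\lambda$, is a cleaner way to carry the claim across the crossing points. Your ``locally bounded away from zero'' remark can also be strengthened to a global bound: the projection of the unit tangent onto $c^A$ is at least $\min_i\sqrt{|b_i^A|}$ everywhere.
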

\begin{proof}
We evaluate the inner product of the parametric tangent vector $\frac{\partial \gamma}{\partial \lambda}$ and the globally constant direction of the central fiber $c^A$:
$$
\left\langle \frac{\partial \gamma}{\partial \lambda}, c^A \right\rangle = \sum_{i=1}^n \left( \frac{b_i^A}{2|v_i|} \right) \left( \frac{b_i^A}{\sqrt{|b_i^A|}} \right) = \sum_{i=1}^n \frac{(b_i^A)^2}{2|v_i|\sqrt{|b_i^A|}}.
$$
Because $(b_i^A)^2 > 0$ and the absolute value terms are strictly positive for $v_i \neq 0$, the summation is strictly positive everywhere within the interior of the orthants. Furthermore, at the intersection with any coordinate hyperplane (as $\lambda \to \lambda_i^*$), the corresponding velocity component $v_i \to 0$, causing its associated term in the summation to diverge to $+\infty$. This guarantees that the forward progression is strictly maintained at the boundaries. 
\end{proof}

Geometrically, Proposition~\ref{prop:forward_progression} establishes that the angle between the central fiber and the tangent space of each generic fiber is strictly acute. While the generic fiber curves through the kinetic space, its progression along the longitudinal subspace spanned by $c^A$ is monotonically increasing.

This bounded traversal fundamentally defines the mapping topology between the kinetic space $\mathcal{V}$ and the task space $\mathcal{W}$, summarized as follows.

\begin{thm}[Mapping Surjectivity of the Orthants]
\label{thm:mapping_surjectivity}
The restriction of the actuation mapping $f(v) = A(v \odot |v|)$ to either extremal orthant ($\mathcal{O}^+$ or $\mathcal{O}^-$) is globally surjective onto $\mathcal{W}$. Conversely, the restriction of $f$ to any transitional orthant maps onto a limited conical sector of $\mathcal{W}$ that strictly excludes a neighborhood of the origin.
\end{thm}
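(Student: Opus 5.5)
For the extremal orthants I will use the fiber parameterization~\eqref{eq:fiber} together with Lemma~\ref{lem:orthant_traversal}. Fix an arbitrary $w \in \mathcal{W}$, generic or not, and set $\lambda > \max_i \lambda_i^* = \max_i(-z_{w,i}^A/b_i^A)$. Then every argument $z_{w,i}^A + \lambda b_i^A = b_i^A(\lambda - \lambda_i^*)$ is nonzero with sign $\operatorname{sign}(b_i^A)$. So $\gamma(w,\lambda)$ lies in the open orthant $\mathcal{O}^+$, and $f(\gamma(w,\lambda)) = w$. Symmetrically, $\lambda < \min_i \lambda_i^*$ gives a preimage in $\mathcal{O}^-$. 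This argument never needs distinct roots, so it also covers the measure-zero tasks of Remark~\ref{rem:non_generic_tasks} and $w = 0$, where the central fiber itself supplies the preimage. Surjectivity of both restrictions follows.

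For a transitional orthant I will work in the transformed space. Let the orthant have sign vector $s \in \{\pm1\}^n$ with $s \neq \pm\operatorname{sign}(b^A)$. Because $g$ preserves signs componentwise and is a bijection, $g$ maps this orthant onto the open orthant $Q_s = \{x : \operatorname{sign}(x_i) = s_i\}$. Hence the image is the convex cone $f(\mathcal{O}_s) = A Q_s$, whose vertex is at the origin; this is the ``conical sector'' of the statement.

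The key step is showing that this cone is pointed and proper. Consider the closed orthant $\bar Q_s$ and suppose $x \in \bar Q_s$ satisfies $Ax = 0$. Then $x = \lambda b^A$ for some $\lambda$. Since every $b_i^A \neq 0$ (Assumption~\ref{asmpt:true_redundancy}), $\lambda \neq 0$ would force $s = \operatorname{sign}(\lambda)\operatorname{sign}(b^A)$, which is excluded. Hence $\ker(A) \cap \bar Q_s = \{0\}$. By a Gordan/Farkas-type alternative, applied to the nonnegative variables $y_i = s_i x_i$, there then exists $h \in \mathbb{R}^m$ with $h^T A x > 0$ for all nonzero $x \in \bar Q_s$. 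I expect this to be the main obstacle: the alternative must be set up in exactly the right form, and I would first try to derive it from the separating hyperplane theorem applied to the compact convex set $A(\bar Q_s \cap \{\sum_i s_i x_i = 1\})$, which does not contain $0$. Granting this, $A\bar Q_s$ is a closed cone contained in the half-space $\{\langle h, y\rangle > 0\} \cup \{0\}$. Moreover, the origin is attained only at $v = 0$, which does not belong to the open orthant.

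It then remains to interpret ``strictly excludes a neighborhood of the origin''. I will state precisely that $f(\mathcal{O}_s)$ contains no ball around $0$, does not contain $0$, and misses the whole open half-space $\{\langle h, y\rangle \le 0\}$. Consequently no small task of arbitrary direction can be produced inside $\mathcal{O}_s$. By contrast, the extremal cone $A Q_{\pm\operatorname{sign}(b^A)}$ equals $\mathcal{W}$, in agreement with the first part of the proof.
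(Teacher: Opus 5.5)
Your proof is correct. For the transitional orthants it takes a genuinely different and sharper route than the paper.

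\textbf{Extremal orthants.} Both arguments push the parameter far out along the fiber. The paper argues only asymptotically: as $\lambda\to\pm\infty$ the signs are governed by $\pm\operatorname{sign}(b_i^A)$, and it phrases this for generic tasks. Your explicit threshold $\lambda>\max_i\lambda_i^*$ handles every $w$ in one line, including non-generic tasks and $w=0$.

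\textbf{Transitional orthants, the paper's route.} It uses positive homogeneity $f(kv)=k^2f(v)$ to see that the image is a cone. It then observes that $f^{-1}(0)$, the central fiber, misses every open transitional orthant, so $0\notin f(\mathcal{O}_\sigma)$. Finally it asserts that a neighborhood of $0$ is excluded ``by continuity.''

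\textbf{Transitional orthants, your route.} You pass to $x=g(v)$, which identifies the image exactly as the convex polyhedral cone $AQ_s$. You show $\ker A\cap\bar Q_s=\{0\}$ from Assumption~\ref{asmpt:true_redundancy}. You then place the image inside an open half-space $\{\langle h,y\rangle>0\}$ via a Gordan-type alternative. Your separation argument for the obstacle works as stated. The set $A(\bar Q_s\cap\{\sum_i s_ix_i=1\})$ is the convex hull of the vectors $s_iAe_i$, so it is compact and convex, and it does not contain $0$. Strict separation followed by positive rescaling then gives $h^TAx>0$ on $\bar Q_s\setminus\{0\}$.

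\textbf{What your route buys.} The paper's argument by itself only yields a cone not containing $0$. A priori such a cone could be all of $\mathcal{W}\setminus\{0\}$, so ``limited sector'' is not really established there. Your conclusion that the image is a pointed cone inside a half-space makes it precise.

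\textbf{The neighborhood clause.} Your reinterpretation of the last clause is the right one, and worth making explicit. Since the image is a cone with apex at the origin, it contains $\alpha w$ for all $\alpha>0$, hence points arbitrarily close to $0$. It therefore cannot be disjoint from any neighborhood of $0$, so the paper's continuity step cannot be taken literally. What holds is that the image contains no neighborhood of $0$; by your argument it in fact misses a whole half-space.

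\textbf{Minor slip.} $\{\langle h,y\rangle\le 0\}$ is a closed, not open, half-space. It is indeed that closed half-space which $AQ_s$ misses entirely.
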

\begin{proof}
As $\lambda \to \pm\infty$, the signs of the state components are asymptotically governed by $\pm \operatorname{sign}(b_i^A)$. Consequently, every continuous fiber universally originates in $\mathcal{O}^-$ and terminates in $\mathcal{O}^+$. Because every generic task $w \in \mathcal{W}$ generates a fiber that asymptotically occupies these two sets, both extremal orthants contain a pre-image for every $w \in \mathcal{W}$, proving global surjectivity.

In contrast, Lemma~\ref{lem:orthant_traversal} dictates that the transitional orthants are traversed only by a proper subset of the fibers. Geometrically, any orthant is a conical set in $\mathcal{V}$ (closed under positive scaling $v \mapsto kv$ for $k>0$). Because the map is positively homogeneous of degree 2, meaning $f(k v) = k^2 f(v)$, it preserves this scale invariance; if $w$ is in the image of an orthant, so is any positive scaling $\alpha w$ for $\alpha > 0$, proving the image is a conical sector in $\mathcal{W}$. Furthermore, the exact pre-image of the zero-task state ($w=0$) is the central fiber. By Definition~\ref{def:central_fiber}, the central fiber resides fully within the extremal orthants and intersects the transitional orthants only at the origin $v=0$. Thus, the image of the interior of any transitional orthant inherently excludes $w=0$ and, by continuity, a surrounding neighborhood in $\mathcal{W}$.
\end{proof}
 
 Figure~\ref{fig:task_fibers} illustrates the 1D continuous fiber bundles for four distinct allocation configurations in $n=2$ and $n=3$ kinetic spaces, encompassing symmetric, asymmetric, and skewed geometries. This visualizes the global topological structure of the map, specifically highlighting the characteristic expansion near singular boundaries and asymptotic alignment with the central fiber.

\begin{figure*}[t]
     \centering
     \begin{subfigure}[b]{0.24\linewidth}
         \centering
         \includegraphics[width=\linewidth]{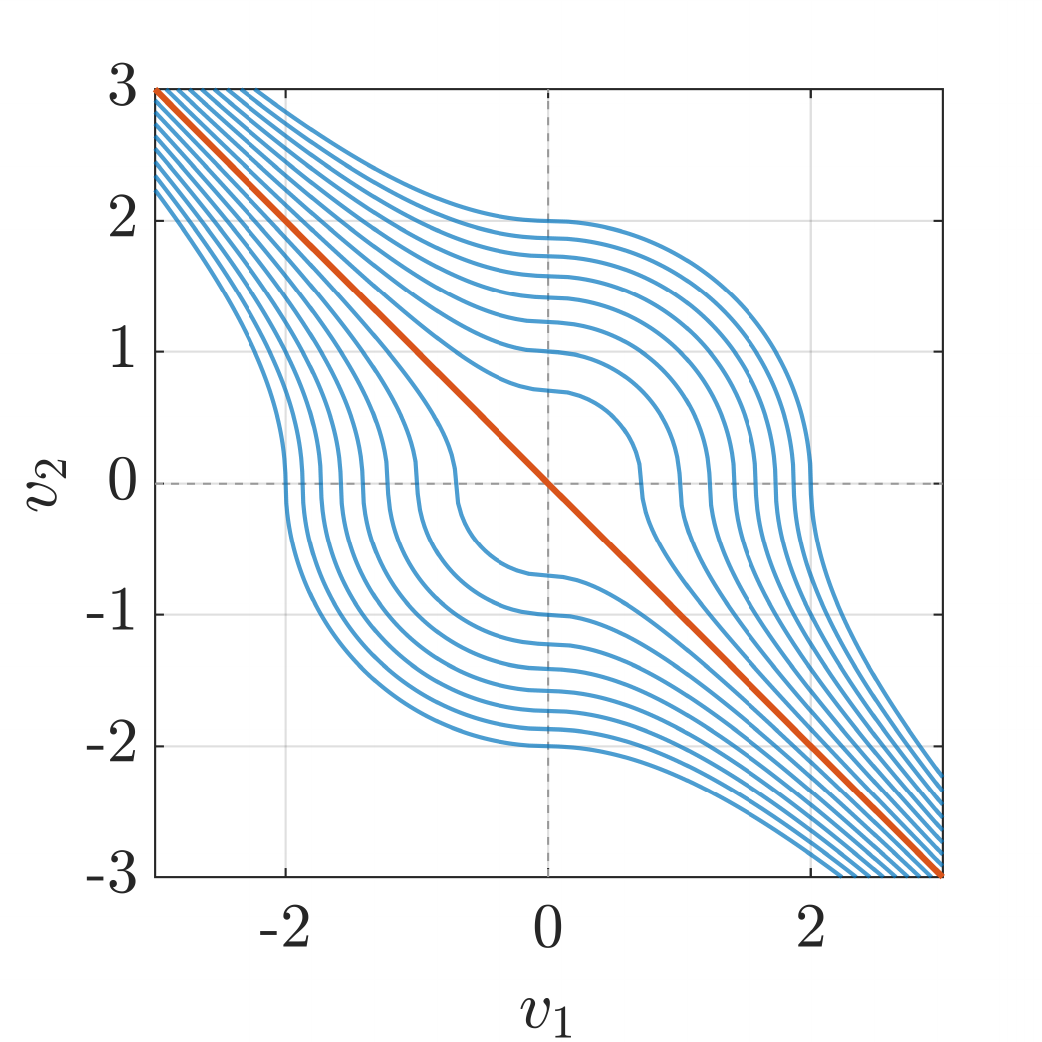}
         \caption{$n=2$ (Sym.)}
         \label{subfig:fibers_n2_sym}
     \end{subfigure}
     \hfill
     \begin{subfigure}[b]{0.24\linewidth}
         \centering
         \includegraphics[width=\linewidth]{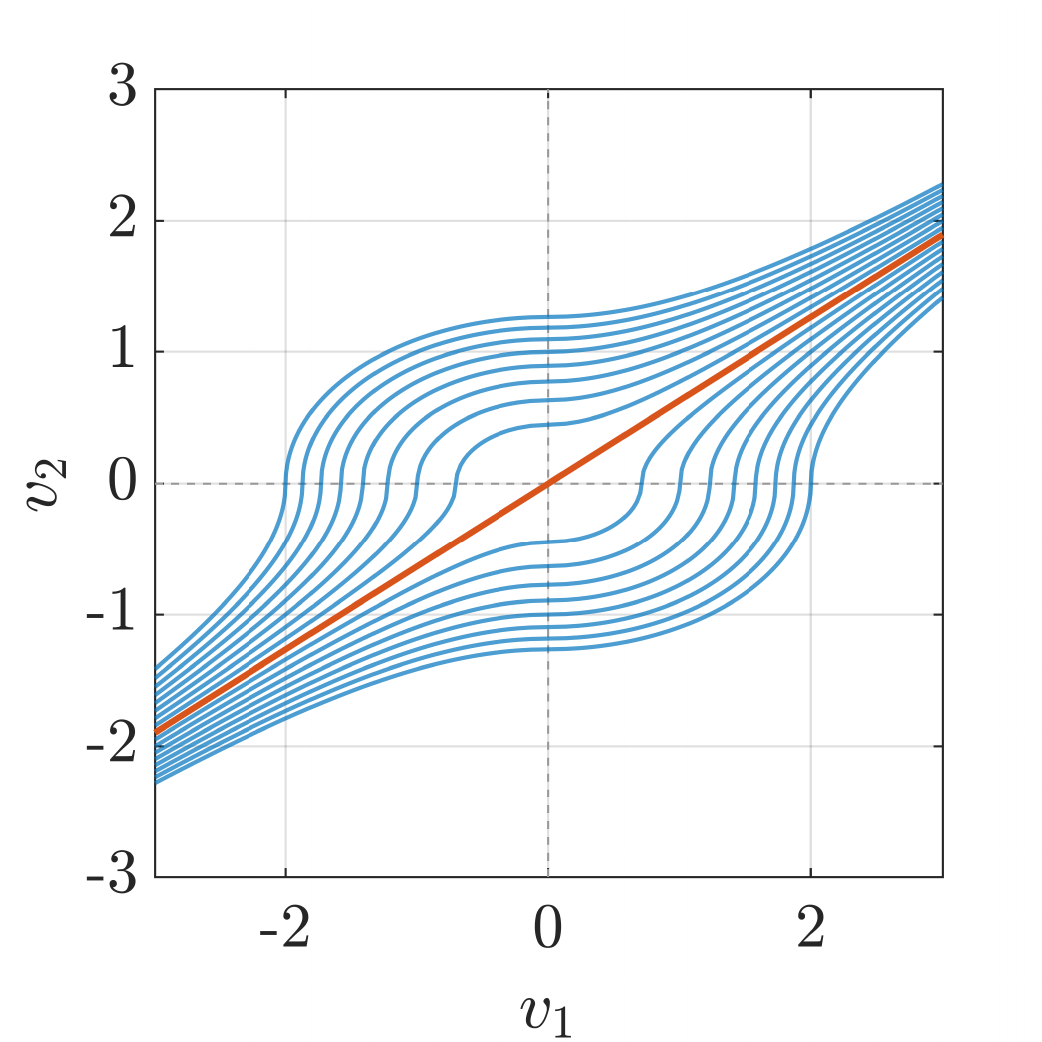}
         \caption{$n=2$ (Asym.)}
         \label{subfig:fibers_n2_asym}
     \end{subfigure}
     \begin{subfigure}[b]{0.24\linewidth}
         \centering
         \includegraphics[width=\linewidth]{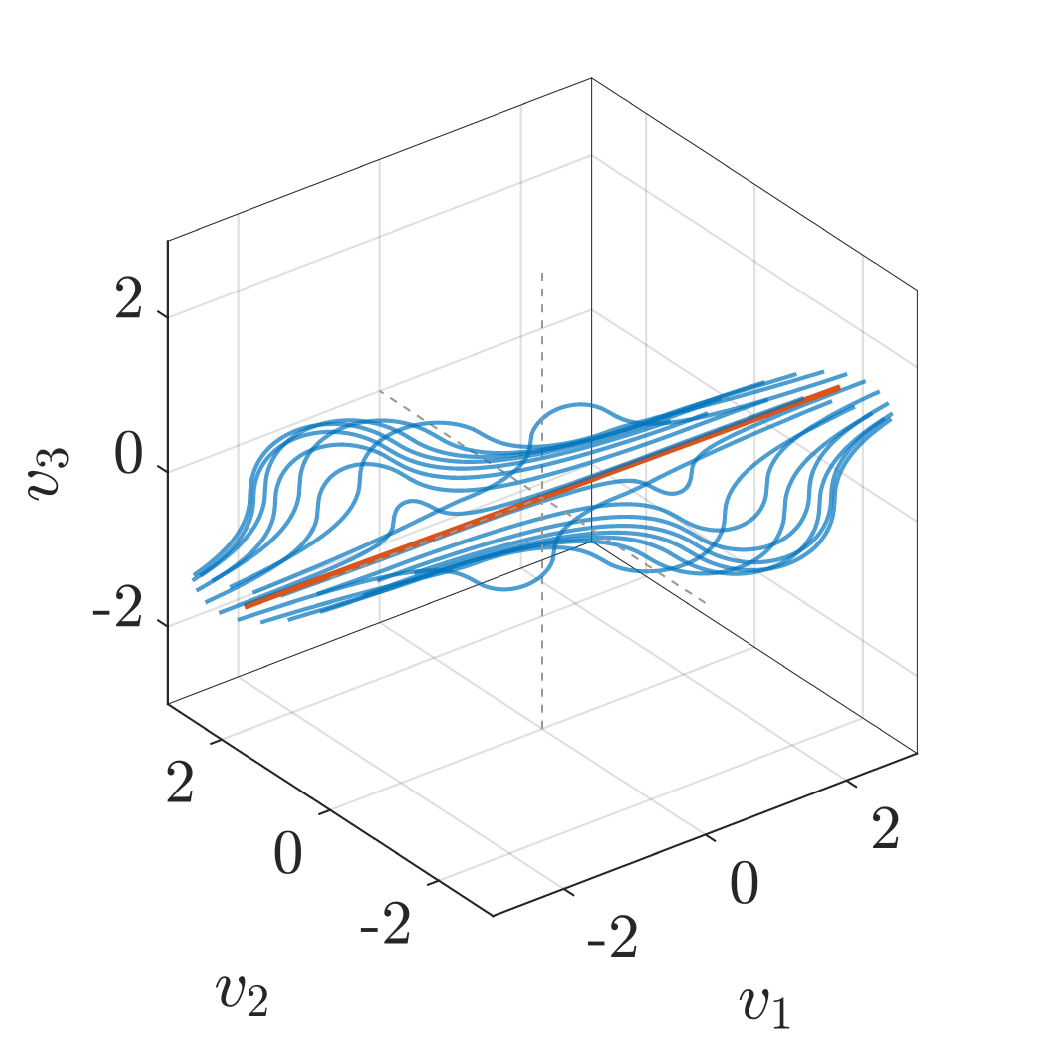}
         \caption{$n=3$ (Std.)}
         \label{subfig:fibers_n3_A3}
     \end{subfigure}
     \hfill
     \begin{subfigure}[b]{0.24\linewidth}
         \centering
         \includegraphics[width=\linewidth]{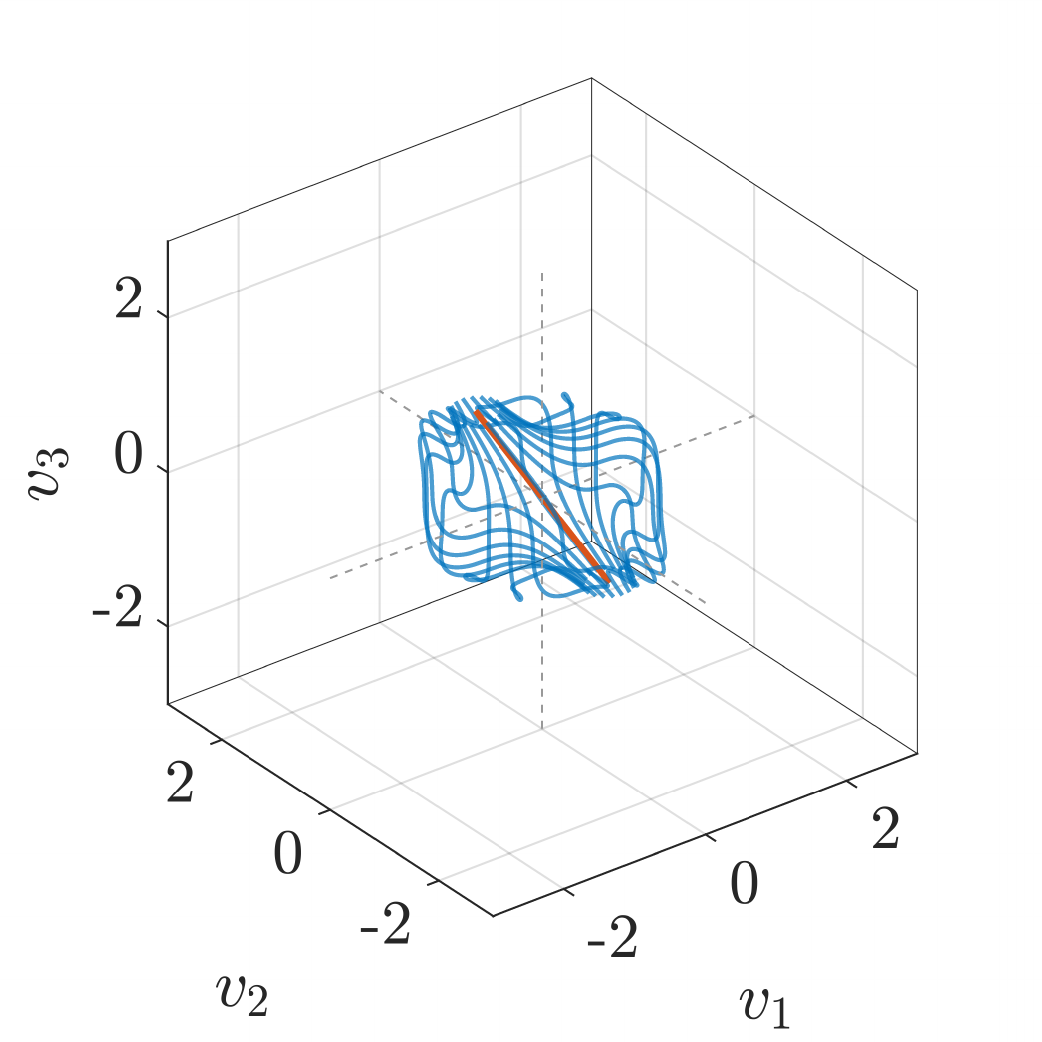}
         \caption{$n=3$ (Skew)}
         \label{subfig:fibers_n3_A4}
     \end{subfigure}
    
     \caption{The continuous fiber bundle defining the actuation null-space in the native kinetic space $\mathcal{V}$. The central fiber (red, thick) anchors the longitudinal flow. Adjacent fibers exhibit the characteristic expansion near the origin before asymptotically converging to and aligning with the central fiber.}
     \label{fig:task_fibers}
 \end{figure*}

\section{Orthogonal Manifolds in the Orthants}
\label{sec:orthogonal_manifolds_orthants}

Having established the topological structure of the constant-task fibers $f^{-1}(w): \lambda \mapsto \gamma(w,\lambda)$, we now analyze and discover the existence and topology of $m$-dimensional manifolds orthogonal to these fibers.
Navigating these orthogonal manifolds represents the purely task-actuating motions of the system in $\mathcal{W}$, while moving strictly along the fibers represents internal `null-space' reconfigurations in $\mathcal{V}$, which are `invisible' in $\mathcal{W}$.

\subsection{Integrability and the Global Logarithmic Potential}

At any regular kinetic state $v \in \mathcal{V}$ (where $v_i \neq 0$ for all $i$), the tangent space to the one-dimensional fiber is given by~\eqref{eq:tangent_regular}. We seek to define an $m$-dimensional manifold $\mathcal{M}$ passing through $v$ that is strictly orthogonal to the fibers at every point. The tangent space of $\mathcal{M}$ at $v$ must satisfy:
\begin{equation}
    T_v\mathcal{M} = \ker(J_f(v))^\perp = \operatorname{Im}(J_f^T(v)) = \operatorname{span}\{ D(v) A^T \}.
    \label{eq:tangent_space_orthogonal_manifold}
\end{equation}
In differential geometric terms, finding this manifold equates to integrating the distribution of the $m$-dimensional vector spaces $T_v\mathcal{M}$ orthogonal to the fibers.

\begin{rem}
\label{rem:metric}
Exploiting the presence of the standard metric in $\mathbb{R}^n$, we algebraically treat tangent spaces and cotangent spaces as residing in $\mathbb{R}^n$, with their dual product operating as the standard inner product.
\end{rem}

\begin{thm}[Integrability of the Orthogonal Distribution]
\label{thm:integrability}
The distribution of vector spaces orthogonal to the fibers, $\Delta(v) = \ker(J_f(v))^\perp$, is globally integrable on $\mathbb{R}^n \setminus \{v_i = 0\}$. The resulting integral manifolds are uniquely defined by the level sets of a global, exact scalar potential field $\Phi: \mathbb{R}^n \setminus \{v_i = 0\} \to \mathbb{R}$, given by:
\begin{equation}
    \Phi(v) = \sum_{i=1}^n b_i^A \operatorname{sign}(v_i) \ln|v_i|.
    \label{eq:log_potential}
\end{equation}
\end{thm}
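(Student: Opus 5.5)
Since the fibers are one-dimensional, the orthogonal distribution $\Delta(v)=\ker(J_f(v))^\perp$ is a hyperplane field of codimension one on the open set $\mathbb{R}^n\setminus\{v_i=0\}$. It can therefore be described as the kernel of a single one-form. By Lemma~\ref{lem:tangent_space}, $\ker(J_f(v))=\operatorname{span}\{D(v)^{-1}b^A\}$ at every regular state. With the identification of Remark~\ref{rem:metric}, $\Delta(v)$ is the kernel of the covector $\omega(v)=D(v)^{-1}b^A$, whose components are $\omega_i=b_i^A/|v_i|$.

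The plan is to show that $\omega$ is not merely integrable (Frobenius, $\omega\wedge d\omega=0$) but in fact exact, with $\omega=d\Phi$. Then $\Delta=\ker d\Phi$, and the integral manifolds are the level sets of $\Phi$.

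\textbf{Key step.} On each open orthant $\operatorname{sign}(v_i)$ is constant, so we may differentiate componentwise:
\[
\frac{\partial}{\partial v_i}\bigl(\operatorname{sign}(v_i)\ln|v_i|\bigr)=\operatorname{sign}(v_i)\frac{\operatorname{sign}(v_i)}{|v_i|}=\frac{1}{|v_i|}.
\]
It follows that $\nabla\Phi(v)=D(v)^{-1}b^A$, and hence $\nabla\Phi$ spans exactly $\ker(J_f(v))$. Equivalently, $\nabla\Phi$ is orthogonal to every column of $D(v)A^T$, because $\langle D(v)A^T e_j, D(v)^{-1}b^A\rangle=(Ab^A)_j=0$. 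Since $b_i^A\neq0$ by Assumption~\ref{asmpt:true_redundancy} and $|v_i|<\infty$, the gradient never vanishes. Therefore $\Phi$ is a submersion on the regular domain. By the regular value theorem, each level set $\Phi^{-1}(c)$ is then an embedded $m$-dimensional submanifold, and its tangent space is $\nabla\Phi^\perp=\ker(J_f)^\perp=\Delta(v)$, which matches \eqref{eq:tangent_space_orthogonal_manifold}. This shows integrability, since the level sets are integral manifolds.

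\textbf{Uniqueness.} Any connected integral manifold $\mathcal{M}$ of $\Delta$ through $v$ has all its tangent vectors in $\ker d\Phi$. Hence $\Phi$ is locally constant on $\mathcal{M}$, and $\mathcal{M}$ lies in a connected component of a level set of $\Phi$. As the dimensions agree, $\mathcal{M}$ is an open subset of that leaf, so the maximal integral manifolds are precisely the connected components of the level sets.

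\textbf{Main subtlety.} The delicate point is the word ``global'': $\Phi$ is only defined off the coordinate hyperplanes, so the domain is a disjoint union of $2^n$ open orthants. On each orthant the formula is smooth and the argument above applies verbatim. Exactness is therefore global in the sense that a single closed-form expression serves all orthants simultaneously, with no patching of local potentials. Behaviour across the hyperplanes, where $\Phi\to\pm\infty$, is not claimed here.
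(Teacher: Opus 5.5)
Your proposal is correct and follows the paper's route. Both arguments identify $\Delta(v)$ as the annihilator of the one-form $\sum_i b_i^A\,dv_i/|v_i|$ and observe that, because each term depends only on its own coordinate, the form is exact with potential $\Phi$; you additionally spell out the submersion/regular-value step, the uniqueness of integral manifolds as connected components of level sets, and the orthant-wise structure of the domain, all of which the paper leaves implicit.
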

\begin{proof}
For a differential state displacement $dv \in \mathbb{R}^n$ to be strictly orthogonal to the fiber at $v$, its inner product with the fiber's tangent vector must vanish:
\[
    \left\langle dv, D(v)^{-1} b^A \right\rangle = 0 \implies \sum_{i=1}^n b_i^A \frac{dv_i}{|v_i|} = 0.
\]
Because the $i$-th term of this sum depends exclusively on the $i$-th coordinate $v_i$, the differential form is exact. Integrating the differential equation component-wise yields the exact, global potential function $\Phi(v)$ in~\eqref{eq:log_potential}, proving that the distribution forms well-defined, continuous global manifolds rather than a non-holonomic vector field.
\end{proof}

\begin{defn}[Orthogonal Manifolds]
\label{def:orthogonal_manifolds}
The \emph{orthogonal manifolds} are defined as the level sets $\mathcal{M}_C = \{v \in \mathbb{R}^n \setminus \{v_i = 0\} \mid \Phi(v) = C\}$ for any constant $C \in \mathbb{R}$. Because the gradient $\nabla \Phi(v) = D(v)^{-1} b^A$ never vanishes or explodes for any finite $v$ in the interior of an orthant (as $b_i^A \neq 0$), the level sets $\mathcal{M}_C$ form a non-intersecting foliation of the open kinetic space, where each $\mathcal{M}_C$ is an $m$-dimensional smooth manifold.
\end{defn}

\subsection{Transverse Foliation and Boundary Limits}

We now establish the algebraic relationship between the orthogonal manifold level $C$ and the internal `null-space' parameter $\lambda$ of a given fiber. Substituting the explicit fiber parameterization $\gamma(w, \lambda)$ from~\eqref{eq:fiber} into the global potential function~\eqref{eq:log_potential} yields the constant $C$ of the exact orthogonal manifold pierced by the fiber at a given $\lambda$:
\begin{equation}
\begin{aligned}
    &C_w(\lambda) := \Phi(\gamma(w, \lambda)) = \\
    &\frac{1}{2} \sum_{i=1}^n b_i^A \operatorname{sign}\big((z_w^A)_i + \lambda b_i^A\big) \ln\big|(z_w^A)_i + \lambda b_i^A\big|.
\end{aligned}
\label{eq:C_lambda_relation}
\end{equation}

\begin{prop}[Transverse Foliation and Monotonicity]
\label{prop:monotonicity}
Within the interior of any orthant of $\mathcal{V}$, the scalar potential evaluated along a fiber, $C_w(\lambda)$, is strictly monotonically increasing with respect to $\lambda$. Consequently, each orthogonal manifold crosses every continuous fiber exactly once within a given orthant.
\end{prop}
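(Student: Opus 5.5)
The plan is to differentiate $C_w(\lambda)$ in \eqref{eq:C_lambda_relation} directly with respect to $\lambda$ on any open interval of $\lambda$ where the fiber stays inside a single orthant. By Lemma~\ref{lem:orthant_traversal}, such intervals are exactly the intervals between consecutive critical parameters $\lambda_i^*$. On each of them every argument $u_i(\lambda) = (z_w^A)_i + \lambda b_i^A$ is nonzero with constant sign, so $\operatorname{sign}(u_i)$ is locally constant and $C_w$ is smooth there. Using $\frac{d}{d\lambda}\ln|u_i| = b_i^A/u_i$, I would compute
\[
C_w'(\lambda) = \frac{1}{2}\sum_{i=1}^n b_i^A \operatorname{sign}(u_i)\frac{b_i^A}{u_i} = \frac{1}{2}\sum_{i=1}^n \frac{(b_i^A)^2}{|u_i|} > 0,
\]
where the last step uses $\operatorname{sign}(u_i)/u_i = 1/|u_i|$ and Assumption~\ref{asmpt:true_redundancy}, which gives $b_i^A \neq 0$. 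Hence $C_w$ is strictly increasing on each such interval.

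As a cross-check I would state the equivalent geometric form via the chain rule:
\[
C_w'(\lambda) = \left\langle \nabla\Phi(\gamma), \frac{\partial\gamma}{\partial\lambda}\right\rangle = \sum_i \frac{b_i^A}{|v_i|}\cdot\frac{b_i^A}{2|v_i|}.
\]
This uses $\nabla\Phi = D(v)^{-1}b^A$ from Definition~\ref{def:orthogonal_manifolds} and the tangent formula in the proof of Theorem~\ref{thm:singularity_alignment} (with $|v_i| = \sqrt{|u_i|}$). It shows that the fibers are transverse to the level sets $\mathcal{M}_C$, since the fiber tangent is parallel to the normal $\nabla\Phi$.

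For the ``exactly once'' claim, fix an orthant and a fiber. The portion of the fiber inside that orthant is the image of a single open interval $I$ of $\lambda$, because Lemma~\ref{lem:orthant_traversal} says each hyperplane is crossed once, so the fiber never re-enters an orthant. Moreover $\lambda \mapsto \gamma(w,\lambda)$ is injective, because its components are strictly monotone. A strictly increasing continuous function on $I$ takes each value $C$ at most once, so $\mathcal{M}_C$ meets the fiber in at most one point within that orthant. Whether it meets it at all is governed by the range $C_w(I)$. I would phrase the statement as ``at most once, and exactly once for $C$ in the range of $C_w$ on that interval'', and remark on the limits. At finite endpoints some $|u_i|\to0$, so $C_w$ tends to $\pm\infty$, with the sign determined by which $u_i$ vanishes, via $b_i^A\operatorname{sign}(u_i)\ln|u_i|$. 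Toward $\lambda\to\pm\infty$ in the extremal orthants, $C_w$ behaves like $\frac12(\sum_i |b_i^A|)\ln|\lambda|$ with an appropriate sign.

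The main obstacle is the ``exactly once'' part rather than the monotonicity computation, which is routine. The range of $C_w$ on a transitional interval need not be all of $\mathbb{R}$ unless the boundary limits diverge in opposite directions. I would handle this by reading the statement as uniqueness of the intersection, which follows from strict monotonicity and injectivity. Any surjectivity onto all levels $C$ I would defer to the subsequent analysis of the boundary limits.
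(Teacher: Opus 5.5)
Your monotonicity argument is the paper's. The paper uses the chain-rule form you list as a cross-check and obtains $\sum_i (b_i^A)^2/(2v_i^2)>0$, which equals your $\frac{1}{2}\sum_i (b_i^A)^2/|u_i|$ since $|u_i|=v_i^2$. It then simply asserts ``exactly once'' as a consequence, so your separation of uniqueness (strict monotonicity plus a single $\lambda$-interval per orthant) from existence is more careful than the original.

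Your hedge on existence is unnecessary, and your own endpoint analysis already closes it. The left endpoint of that interval is always an entry crossing, since an exit coordinate vanishing there would make the interval empty; there $b_i^A\operatorname{sign}(u_i)=|b_i^A|$ forces $C_w\to-\infty$. The right endpoint is always an exit crossing, which forces $C_w\to+\infty$, and in the extremal orthants $C_w\to\pm\infty$ as $\lambda\to\pm\infty$. So $C_w$ maps every such interval onto all of $\mathbb{R}$, and the intermediate value theorem gives exactly one crossing for every $C$.
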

\begin{proof}
The derivative of $C_w$ with respect to $\lambda$ is computed via the chain rule, $\frac{dC_w}{d\lambda} = \sum_{i=1}^n \frac{\partial \Phi}{\partial v_i} \frac{dv_i}{d\lambda}$. Given $\frac{\partial \Phi}{\partial v_i} = \frac{b_i^A}{|v_i|}$ and the local tangent component $\frac{dv_i}{d\lambda} = \frac{b_i^A}{2|v_i|}$ derived from~\eqref{eq:fiber}, we obtain:
\[ 
\frac{dC_w}{d\lambda} = \sum_{i=1}^n \frac{(b_i^A)^2}{2 v_i^2} > 0.
\]
Because $b_i^A \neq 0$ and $v_i \neq 0$ in the interior, this derivative is strictly positive, guaranteeing strict monotonicity.
\end{proof}

While Proposition~\ref{prop:monotonicity} establishes local monotonicity, the global topology of the foliation is strictly governed by the boundary conditions at the coordinate hyperplanes.

\begin{thm}[Potential Limits at Orthant Boundaries]
\label{thm:potential_limits}
Let a fiber $\gamma(w,\cdot)$ traverse a generic transitional orthant, entering at parameter $\lambda_{in}$ across the hyperplane $v_j = 0$, and exiting at $\lambda_{out} > \lambda_{in}$ across the hyperplane $v_k = 0$. The potential function spans the entire real line within this single orthant: $\lim_{\lambda \to \lambda_{in}^+} C_w(\lambda) = -\infty$ and $\lim_{\lambda \to \lambda_{out}^-} C_w(\lambda) = +\infty$.
\end{thm}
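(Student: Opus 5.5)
The plan is to evaluate the two one-sided limits directly from the explicit expression~\eqref{eq:C_lambda_relation}. I will split $C_w(\lambda)$ into the single term indexed by the hyperplane being crossed and the sum of the remaining $n-1$ terms.

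\textbf{Setup and finiteness of the remaining terms.} Write $u_i(\lambda) := z_{w,i}^A + \lambda b_i^A$, which is affine in $\lambda$ with slope $b_i^A \neq 0$ (Assumption~\ref{asmpt:true_redundancy}). By Lemma~\ref{lem:orthant_traversal}, for a generic task the roots $\lambda_i^*$ are distinct. Hence the entry root $\lambda_{in}=\lambda_j^*$ and the exit root $\lambda_{out}=\lambda_k^*$ are distinct, with $j \neq k$, and no other root lies in $[\lambda_{in},\lambda_{out}]$. Consequently, for every $i\neq j$ the quantity $u_i(\lambda_{in})$ is nonzero. The corresponding term $\tfrac12 b_i^A \operatorname{sign}(u_i)\ln|u_i|$ is therefore continuous and finite in a neighborhood of $\lambda_{in}$. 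The same argument applies to every $i \neq k$ near $\lambda_{out}$. The divergence of $C_w$ at either endpoint is thus governed entirely by one term.

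\textbf{Sign bookkeeping at the entry hyperplane.} From the proof of Lemma~\ref{lem:orthant_traversal}, for $\lambda>\lambda_j^*$ we have $\operatorname{sign}(u_j)=\operatorname{sign}(b_j^A)$. The $j$-th term then becomes $\tfrac12 b_j^A\operatorname{sign}(b_j^A)\ln|u_j| = \tfrac12 |b_j^A|\ln|b_j^A(\lambda-\lambda_{in})|$. As $\lambda\to\lambda_{in}^+$ this tends to $-\infty$, since $|b_j^A|>0$.

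\textbf{Sign bookkeeping at the exit hyperplane.} For $\lambda<\lambda_k^*$ we have $\operatorname{sign}(u_k) = -\operatorname{sign}(b_k^A)$. The $k$-th term is then $-\tfrac12|b_k^A|\ln|b_k^A(\lambda_{out}-\lambda)|$, which tends to $+\infty$ as $\lambda\to\lambda_{out}^-$. Adding the bounded remainder from the first step gives both stated limits.

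\textbf{Spanning the entire real line.} Since $C_w$ is continuous on $(\lambda_{in},\lambda_{out})$, the intermediate value theorem combined with the two limits shows that $C_w$ attains every real value. By Proposition~\ref{prop:monotonicity}, each value is attained exactly once, so $C_w$ is a bijection of the interval onto $\mathbb{R}$.

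\textbf{Expected obstacle.} The only delicate point is the sign bookkeeping. One must check that the factor $\operatorname{sign}(u_i)$ combines with $b_i^A$ to give $+|b_i^A|$ on the entry side and $-|b_i^A|$ on the exit side. This follows from the fixed crossing direction established in Lemma~\ref{lem:orthant_traversal}: each component passes from sign $-\operatorname{sign}(b_i^A)$ to sign $\operatorname{sign}(b_i^A)$ as $\lambda$ increases. A secondary point is to invoke genericity explicitly, so that no simultaneous crossing (Remark~\ref{rem:non_generic_tasks}) makes a second logarithm diverge at the same endpoint with a sign that might cancel.
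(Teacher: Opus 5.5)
Your proposal is correct and follows the paper's own argument: you isolate the logarithmic term of the coordinate that vanishes at each endpoint, fix its sign from the crossing direction (coefficient $+|b_j^A|$ at entry, $-|b_k^A|$ at exit), and let it dominate the remaining terms. Your explicit bound on the other $n-1$ terms via distinct roots, and your intermediate-value/monotonicity step, make rigorous what the paper compresses into ``dominating the sum''. Note also that genericity is not actually needed to rule out cancellation: all coordinates vanishing at the same endpoint of an orthant along a fiber cross in the same direction, so their logarithms diverge with the same sign.
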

\begin{proof}
At the entry boundary ($\lambda \to \lambda_{in}^+$), the transformed coordinate $x_j(\lambda) = (z_w^A)_j + \lambda b_j^A$ departs from zero. Because the fiber advances monotonically, $x_j$ immediately assumes the sign of its directional slope, $b_j^A$. Thus, just inside the new orthant, $\operatorname{sign}(v_j) = \operatorname{sign}(b_j^A)$. The $j$-th term of the potential becomes $|b_j^A| \ln|v_j|. As |v_j| \to 0$, $\ln|v_j| \to -\infty$, dominating the sum and yielding $-\infty$.

Conversely, as the fiber approaches the exit boundary ($\lambda \to \lambda_{out}^-$), $x_k(\lambda)$ approaches zero from a non-zero value. Because $\lambda$ is strictly increasing, $x_k$ is moving in opposition to its slope $b_k^A$, imposing $\operatorname{sign}(v_k) = -\operatorname{sign}(b_k^A)$. The $k$-th term in the potential becomes $-|b_k^A| \ln|v_k|$. As $|v_k| \to 0$, the logarithmic divergence is inverted by the sign coefficient, forcing the sum to $+\infty$.
\end{proof}

Geometrically, Theorem~\ref{thm:potential_limits} governs the asymptotic behavior of the orthogonal manifolds near the orthant boundaries. While the isolated vanishing of a single kinetic state coordinate forces the manifold to flatten as its potential diverges, it does not inherently compromise the system's control authority. In contrast, any continuous trajectory strictly confined to a specific orthogonal manifold (maintaining a constant, finite potential $C$) precludes such isolated vanishing; the logarithmic divergence must be balanced by at least one additional coordinate approaching zero with an opposing sign coefficient. This mandatory simultaneous vanishing fundamentally restricts the tangent space geometry at the boundary intersections, ultimately inducing a strict kinematic singularity. These exact geometric and algebraic consequences are formalized in the following corollaries.

\begin{cor}[Flattening of the Orthogonal Manifold]
\label{cor:flattening_manifold}
As a single coordinate $v_i$ of the kinetic state $v$ tends to $0$, the orthogonal manifold passing through $v$ asymptotically flattens to become parallel to the bounding hyperplane $v_i = 0$, while its corresponding potential diverges, $|C| \to \infty$. Note that the vanishing of a single coordinate does not induce a singularity; the Jacobian in~\eqref{eq:jacobian} remains full rank by virtue of Assumption~\ref{asmpt:true_redundancy} and thus it is possible to select $\dot{v}$ parallel to the bounding hyperplane while generating all possible velocities in $T_{f(v)}\mathcal{W}$.
\end{cor}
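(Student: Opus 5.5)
The argument splits into three parts: the geometric flattening, the divergence of the potential, and the full-rank claim with the choice of a tangential $\dot v$. For flattening, I will use that $T_v\mathcal{M}_C$ is the orthogonal complement of the gradient $\nabla\Phi(v) = D(v)^{-1}b^A$ (Definition~\ref{def:orthogonal_manifolds}). Fix a point $\bar v$ whose only vanishing coordinate is $\bar v_i=0$, and approach it by $v\to\bar v$. The normalized normal is
\[
\frac{\nabla\Phi(v)}{\|\nabla\Phi(v)\|} = \frac{\bigl(b_1^A/|v_1|,\dots,b_n^A/|v_n|\bigr)}{\bigl(\sum_j (b_j^A)^2/v_j^2\bigr)^{1/2}} .
\]
The $i$-th entry diverges like $|b_i^A|/|v_i|$ because $b_i^A\neq 0$ by Assumption~\ref{asmpt:true_redundancy}. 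The other entries stay bounded, since $|v_j|\to|\bar v_j|>0$ for $j\neq i$. Hence the unit normal converges to $\pm e_i$, and $T_v\mathcal{M}_C$ converges to $e_i^\perp$, which is exactly the tangent space of the hyperplane $v_i=0$. This is the asymptotic parallelism. It also agrees with Theorem~\ref{thm:singularity_alignment}, where the fiber becomes orthogonal to that hyperplane.

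For the potential, the term $b_i^A\operatorname{sign}(v_i)\ln|v_i|$ in~\eqref{eq:log_potential} diverges to $\pm\infty$. All other terms converge to the finite values $b_j^A\operatorname{sign}(\bar v_j)\ln|\bar v_j|$, so $|C|=|\Phi(v)|\to\infty$. The sign of the limit is $-\operatorname{sign}(b_i^A\operatorname{sign}(v_i))$. This matches Theorem~\ref{thm:potential_limits} on the entry and exit sides.

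For the rank claim, at $\bar v$ we have $J_f(\bar v)=2AD(\bar v)$. Its columns are $2|\bar v_j|a_j$ for $j\neq i$, where $a_j$ is the $j$-th column of $A$, together with a zero column. So $\operatorname{rank} J_f(\bar v)=\operatorname{rank}A_{\hat i}$, the rank of $A$ with column $i$ removed. Suppose $A_{\hat i}$ were rank deficient. Then some nonzero $u\in\mathbb{R}^{n}$ with $u_i=0$ would satisfy $Au=0$. Since $\ker A=\operatorname{span}\{b^A\}$, $u$ would be a nonzero multiple of $b^A$, and $u_i=0$ would force $b_i^A=0$. This contradicts Assumption~\ref{asmpt:true_redundancy}. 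Hence $\operatorname{rank}J_f(\bar v)=m$.

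The $m$ nonzero columns all multiply the coordinates $v_j$ with $j\neq i$. So for every $\dot w\in T_{f(\bar v)}\mathcal{W}$ there is a $\dot v$ with $\dot v_i=0$ and $J_f(\bar v)\dot v=\dot w$, and such a $\dot v$ is parallel to the hyperplane. By continuity of $J_f$ and lower semicontinuity of rank, full rank also holds in a neighborhood of $\bar v$. This excludes any singularity of the map on the set where a single coordinate vanishes.

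The main obstacle is making ``asymptotically flattens'' precise, since $\mathcal{M}_C$ with fixed $C$ never reaches the hyperplane. I will phrase the claim as a pointwise limit of tangent planes, taken along any sequence approaching a single-zero point $\bar v$. Equivalently, it is the family $\mathcal{M}_{C}$ with $|C|\to\infty$ that accumulates onto the hyperplane with the stated tangent limit.
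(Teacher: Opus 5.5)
Your proposal is correct and follows essentially the route the paper intends. The paper gives no separate proof: it derives the corollary from the gradient $\nabla\Phi(v)=D(v)^{-1}b^A$, the boundary divergence of Theorem~\ref{thm:potential_limits}, and Assumption~\ref{asmpt:true_redundancy}, and your argument makes each of these steps precise. Your kernel argument is worth keeping: $\dim\ker A=1$ together with $b_i^A\neq 0$ shows that $A$ with column $i$ removed has rank $m$, which is exactly the implication the full-rank claim needs, whereas the remark following Assumption~\ref{asmpt:true_redundancy} only argues the converse direction.
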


\begin{cor}[Geometric Boundary Alignment]
\label{cor:geometric_alignment}
For a continuous path confined to an orthogonal manifold of finite potential $C$, the approach to a bounding hyperplane $v_i \to 0$ necessitates the simultaneous vanishing of at least one additional coordinate $v_j \to 0$ ($j \neq i$). In the limit as $k \ge 2$ coordinates simultaneously vanish, the corresponding components of the gradient $\nabla \Phi= D(v)^{-1} b^A$ diverge. Consequently, the normal vector of the manifold aligns entirely within the $k$-dimensional subspace of the vanishing coordinates. The $m$-dimensional tangent space of the manifold therefore strictly aligns to contain the $(n-k)$-dimensional intersection of those bounding hyperplanes, with its remaining dimensions extending outward into the interior of the orthant. 
\end{cor}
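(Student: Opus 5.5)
The plan is to work directly with the explicit potential $\Phi(v)=\sum_i b_i^A\operatorname{sign}(v_i)\ln|v_i|$ from Theorem~\ref{thm:integrability}, restricted to the interior of a fixed orthant, where $\operatorname{sign}(v_i)=:s_i$ is constant. On that orthant $\Phi(v)=\sum_i s_i b_i^A\ln|v_i|$, and the condition $\Phi(v)=C$ with $C$ finite will be the only constraint used.

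The first part argues by contradiction. Suppose that along a continuous path $v(t)\subset\mathcal{M}_C$ with $t\to t^*$ we have $v_i(t)\to 0$, while all other coordinates stay bounded away from zero along a subsequence. Since the path lies in a fixed orthant near the boundary, the coordinates are bounded on the approach, so each term $s_jb_j^A\ln|v_j|$ with $j\neq i$ is bounded. By Assumption~\ref{asmpt:true_redundancy}, $b_i^A\neq0$, so the term $s_ib_i^A\ln|v_i|$ diverges to $\pm\infty$. Hence $\Phi\to\pm\infty$, which contradicts $\Phi\equiv C$. Therefore some $v_j\to0$ with $j\ne i$. More precisely, balancing the sum requires at least one vanishing coordinate $j$ with $s_jb_j^A$ of opposite sign to $s_ib_i^A$. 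This is consistent with the remark preceding the corollaries and mirrors the argument of Theorem~\ref{thm:potential_limits}. If unbounded coordinates must also be handled, the hypothesis can be restricted to approaching a finite boundary point, which is the intended setting.

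The second part concerns the normal direction. Let $K$ be the set of $k\ge2$ vanishing indices. The normal to $\mathcal{M}_C$ at $v$ is $\nabla\Phi=D(v)^{-1}b^A$, with components $b_\ell^A/|v_\ell|$. For $\ell\in K$ these components diverge, since $b_\ell^A\ne0$. For $\ell\notin K$ they stay bounded. Normalizing gives
\[
\frac{\nabla\Phi}{\|\nabla\Phi\|}-P_K\frac{\nabla\Phi}{\|\nabla\Phi\|}\to0,
\]
where $P_K$ is the orthogonal projection onto $\operatorname{span}\{e_\ell:\ell\in K\}$. So every limit point of the unit normal lies in that $k$-dimensional coordinate subspace. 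Taking orthogonal complements, every limit of the tangent space $T_v\mathcal{M}_C=(\nabla\Phi)^\perp$ contains $\operatorname{span}\{e_\ell:\ell\notin K\}$. This is the $(n-k)$-dimensional intersection $\bigcap_{\ell\in K}\{v_\ell=0\}$. The remaining $m-(n-k)=k-1$ tangent dimensions lie in the complement of the unit normal inside the $K$-subspace, which is the "outward into the interior" part of the statement. Limits of tangent spaces are taken in the Grassmannian $\mathrm{Gr}(m,n)$, which is compact, so limit points exist.

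I expect the main obstacle to be that the unit normal need not converge to a single vector. Its limit within the $K$-subspace depends on the relative rates at which the $v_\ell$, $\ell\in K$, vanish. The plan is therefore to state the conclusion for every limit point rather than claim a unique limit. The containment of the $(n-k)$-dimensional intersection holds for every limit point, so this suffices for the corollary as stated.
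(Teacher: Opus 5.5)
Your proposal is correct and follows essentially the same route as the paper. The paper gives no separate proof; its justification is the preceding remark that a lone $\ln|v_i|$ divergence must be cancelled by another vanishing coordinate with opposite sign coefficient $s_jb_j^A$, together with the observation that the components $b_\ell^A/|v_\ell|$ of $\nabla\Phi$ blow up exactly on the vanishing indices. Your restriction to approaches toward a finite boundary point is actually necessary, not optional: in $\mathcal{O}^+$ the level set $\prod_i |v_i|^{|b_i^A|}=e^C$ reaches $v_i\to0$ only while another coordinate escapes to infinity. Your limit-point treatment in the Grassmannian also correctly handles the path-dependence of the limiting normal, which the paper glosses over.
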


\begin{cor}[Kinematic Singularity at the Boundary]
\label{cor:kinematic_singularity}
The necessary simultaneous vanishing of $k \ge 2$ coordinates at the boundary of the orthogonal manifolds, as dictated by Corollary~\ref{cor:geometric_alignment}, causes the Jacobian in~\eqref{eq:jacobian} to lose rank. This rank deficiency constitutes a kinematic singularity of the nonlinear mapping, locally precluding the instantaneous assignment of task velocities $\dot{w}$ within a $(k-1)$-dimensional subspace of the tangent space $T_{f(v)}\mathcal{W}$.
\end{cor}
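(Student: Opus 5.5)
The plan is to read the claim off the explicit Jacobian~\eqref{eq:jacobian}, $J_f(v)=2A\,D(v)$, evaluated at a limit point $\bar v$ where exactly the coordinates indexed by a set $K$, with $|K|=k\ge 2$, vanish. Corollary~\ref{cor:geometric_alignment} ensures such limit points are what a path of finite potential $C$ must reach. At $\bar v$ the matrix $D(\bar v)$ has zero diagonal entries exactly on $K$. Hence the columns of $J_f(\bar v)$ indexed by $K$ are zero, and the remaining columns are $2|\bar v_i|\,a_i$ for $i\notin K$, with nonzero scalings. So $\operatorname{Im}J_f(\bar v)=\operatorname{span}\{a_i : i\notin K\}$, the span of the $n-k=m+1-k$ surviving columns of $A$. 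Its dimension is therefore at most $m+1-k\le m-1<m$ whenever $k\ge 2$, which already gives the loss of rank.

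Next I would check that the deficiency is exactly $k-1$ in the generic case, so that the unreachable subspace of $T_{f(\bar v)}\mathcal{W}$ has dimension $k-1$. For this I would use Assumption~\ref{asmpt:true_redundancy}. Since $\ker A=\operatorname{span}\{b^A\}$ and every $b_i^A\neq 0$, every set of $m$ columns of $A$ is linearly independent, by the argument in the remark following the assumption. Consequently any $m+1-k\le m$ columns are independent, and $\operatorname{rank}J_f(\bar v)=m+1-k$ exactly. The complement $\operatorname{Im}(J_f(\bar v))^\perp\subset T_{f(\bar v)}\mathcal{W}$ is then a $(k-1)$-dimensional subspace. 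No task velocity $\dot w$ with a nonzero component there can be realized as $J_f(\bar v)\dot v$, whatever finite $\dot v$ is chosen. This is precisely the preclusion asserted in the statement.

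The main subtlety is the word ``locally'' and the transition from limit to interior. Inside the orthant $J_f$ is full rank, so the singularity lives only at the boundary point. I would make this precise with a pseudo-inverse argument. For $\dot w$ having a component in that $(k-1)$-dimensional subspace, I would show that $\|J_f(v)^{+}\dot w\|\to\infty$ as $v\to\bar v$ along the manifold, because the smallest $k-1$ singular values of $J_f(v)$ are $O(\max_{i\in K}|v_i|)\to 0$. These singular values vanish by continuity, since the singular values depend continuously on $v$ and converge to those of $J_f(\bar v)$, of which $k-1$ are zero. This ties the algebraic rank drop to the paper's operational definition of a kinematic singularity, a diverging $\|\dot v\|$, and that continuity step is the only part needing care beyond counting columns.
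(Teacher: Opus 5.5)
Your proposal is correct and takes the same route the paper does implicitly: the corollary has no separate proof, and the paper reads the rank drop off the $k$ zero columns of $J_f(v)=2AD(v)$, with Assumption~\ref{asmpt:true_redundancy} making the surviving $m+1-k$ columns independent; your exact deficiency count of $k-1$ and your singular-value and continuity argument linking the rank drop to a diverging $\|\dot v\|$ make explicit what the paper leaves unstated. One small fix: the remark after Assumption~\ref{asmpt:true_redundancy} proves the opposite implication (that $b_k^A=0$ makes the other columns dependent), so you should derive the independence of every $m$ columns directly from $\ker A=\operatorname{span}\{b^A\}$, since a dependency among the columns other than the $k$-th would give a nonzero kernel vector with vanishing $k$-th entry, forcing $b_k^A=0$.
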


\subsection{Topology of the Orthogonal Foliation for Transitional Orthants}
\label{subsec:global_topology_foliation_transitional}

To formally characterize the global evolution of the orthogonal manifolds $\mathcal{M}_C$ and their interaction with the task fibers $\gamma(w, \lambda)$, we partition the boundary of any transitional orthant into functional sets. Let $\mathcal{O}_\sigma$ be a transitional orthant defined by the sign vector $\sigma \in \{-1, 1\}^n$, with boundary faces defined as $F^k_\sigma = \{v \in \partial \mathcal{O}_\sigma \mid v_k = 0\}$. 

\begin{defn}[Portals, Hinges, and Folds]
\label{def:portals_hinges}
Based on the mapping of the null-space vector $b^A$, the coordinate indices $k \in \{1, \dots, n\}$ within $\mathcal{O}_\sigma$ are partitioned into an entry set $\mathcal{I}^{in}_\sigma = \{i \mid \sigma_i b_i^A > 0\}$ and an exit set $\mathcal{I}^{out}_\sigma = \{j \mid \sigma_j b_j^A < 0\}$. 

We define the \emph{entry portal} $\mathcal{P}^{in}_\sigma$ and the \emph{exit portal} $\mathcal{P}^{out}_\sigma$ as the unions of their respective boundary faces:
\begin{equation}
    \mathcal{P}^{in}_\sigma = \bigcup_{i \in \mathcal{I}^{in}_\sigma} F^i_\sigma, \quad \mathcal{P}^{out}_\sigma = \bigcup_{j \in \mathcal{I}^{out}_\sigma} F^j_\sigma.
\end{equation}
The lower-dimensional strata formed by the intersections of these faces are partitioned into two distinct geometric classes:
\begin{enumerate}
    \item The \emph{orthant hinges} $\mathcal{H}_\sigma$: The intersections between entry and exit faces, defined as $\mathcal{H}_\sigma = \mathcal{P}^{in}_\sigma \cap \mathcal{P}^{out}_\sigma$.
    \item The \emph{portal folds} $\mathcal{S}^{in}_\sigma$ and $\mathcal{S}^{out}_\sigma$: The face intersections contained entirely within a single portal, lacking any hinge component. For example, the entry folds are defined as $\mathcal{S}^{in}_\sigma = \bigcup_{i \neq k \in \mathcal{I}^{in}_\sigma} (F^i_\sigma \cap F^k_\sigma) \setminus \mathcal{H}_\sigma$.
\end{enumerate}
\end{defn}

\begin{thm}[Topological Sweep of the Orthogonal Foliation]
\label{thm:topological_sweep}
For any transitional orthant $\mathcal{O}_\sigma$, the task fibers $\gamma(w, \lambda)$ and the orthogonal manifolds $\mathcal{M}_C$ exhibit the following topological properties:
\begin{enumerate}
    \item Every fiber that traverses the interior of $\mathcal{O}_\sigma$ enters strictly through $\mathcal{P}^{in}_\sigma$ (including its folds $\mathcal{S}^{in}_\sigma$) and exits strictly through $\mathcal{P}^{out}_\sigma$ (including $\mathcal{S}^{out}_\sigma$). Fibers intersecting the hinges $\mathcal{H}_\sigma$ are strictly tangent to $\bar{\mathcal{O}}_\sigma$ and map to the task-space boundary of achievable tasks $\partial \mathcal{W}_\sigma$ for that specific orthant.
    \item For any finite scalar $C \in \mathbb{R}$, the boundary of the orthogonal manifold $\mathcal{M}_C$ is asymptotically anchored entirely and exclusively to the hinges $\mathcal{H}_\sigma$.
    \item As $C \to -\infty$, the manifold $\mathcal{M}_C$ asymptotically flattens against the entry portal $\mathcal{P}^{in}_\sigma$. As $C \to +\infty$, $\mathcal{M}_C$ asymptotically flattens against the exit portal $\mathcal{P}^{out}_\sigma$, continuously sweeping the entire interior of the orthant. During this asymptotic flattening, the manifolds develop geometric creases along the portal folds ($\mathcal{S}^{in}_\sigma$ and $\mathcal{S}^{out}_\sigma$) to attain contiguous adherence to the multi-faced portals.
\end{enumerate}
\end{thm}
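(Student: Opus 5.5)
The plan is to work in the transformed coordinates $x = g(v)$ inside $\bar{\mathcal{O}}_\sigma$. Each fiber there is a straight line $x(\lambda) = z_w^A + \lambda b^A$, and $\mathcal{O}_\sigma$ is the open orthant $\{\sigma_i x_i > 0\}$. Since $g$ is a homeomorphism that maps coordinate faces to themselves, every incidence statement (entry, exit, tangency, anchoring) can be checked on these lines and transported back to $\mathcal{V}$.

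For item 1, I would compute $\sigma_i x_i(\lambda) = \sigma_i z_{w,i}^A + \lambda \sigma_i b_i^A$ on each face.
\begin{itemize}
\item For $i \in \mathcal{I}^{in}_\sigma$ this quantity is strictly increasing in $\lambda$. Hence a fiber can only pass from $\sigma_i x_i<0$ to $\sigma_i x_i>0$, i.e.\ it enters $\mathcal{O}_\sigma$ through $F^i_\sigma$.
\item For $j \in \mathcal{I}^{out}_\sigma$ it is strictly decreasing, so a fiber can only leave through $F^j_\sigma$.
\item The set of $\lambda$ with $x(\lambda)\in\mathcal{O}_\sigma$ is therefore the interval $(\max_{i\in\mathcal{I}^{in}}\lambda_i^*, \min_{j\in\mathcal{I}^{out}}\lambda_j^*)$. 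This agrees with Lemma~\ref{lem:orthant_traversal}, and the endpoints sit on the entry and exit portals respectively (on a fold when a max or min is attained twice).
\end{itemize}
A fiber meets $\mathcal{H}_\sigma$ exactly when this interval degenerates to a point, $\max_{in}\lambda^* = \min_{out}\lambda^*$. In that case the line touches $\bar{\mathcal{O}}_\sigma$ at a single point without entering it. The image $\mathcal{W}_\sigma = A(\mathcal{O}_\sigma)$ is an open polyhedral cone. Its boundary is the set of $w$ for which the interval is empty-but-limiting, so hinge-touching fibers map to $\partial\mathcal{W}_\sigma$.

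For item 3, I would use Proposition~\ref{prop:monotonicity} and Theorem~\ref{thm:potential_limits}. Along every interior fiber, $C_w$ is a strictly increasing bijection onto $\mathbb{R}$. Hence $\mathcal{M}_C$ meets each interior fiber exactly once, at $\lambda_C(w)$, and $\mathcal{M}_C\cap\mathcal{O}_\sigma$ is the graph of a section over $\mathcal{W}_\sigma$ that is smooth by the implicit function theorem. Because $\lambda_C(w)$ tends monotonically to the entry endpoint as $C\to-\infty$ and to the exit endpoint as $C\to+\infty$, the leaves sweep $\mathcal{O}_\sigma$ from $\mathcal{P}^{in}_\sigma$ to $\mathcal{P}^{out}_\sigma$. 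The creases follow from Corollary~\ref{cor:flattening_manifold}. Near an interior point of $F^i_\sigma$ the leaf becomes parallel to that face. Where two entry faces meet (a fold), the limiting set is the union of two faces meeting at an angle, so the leaf must bend sharply along $\mathcal{S}^{in}_\sigma$.

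Item 2 is where I expect the real difficulty. The claim is that the closure of $\mathcal{M}_C\cap\mathcal{O}_\sigma$ in $\partial\mathcal{O}_\sigma$ lies in $\mathcal{H}_\sigma$. I would argue this in two steps.
\begin{itemize}
\item \emph{Every limit point lies in $\mathcal{H}_\sigma$.} Take a sequence in $\mathcal{M}_C$ converging to a boundary point with vanishing index set $I_0$. Write $\Phi = \sum_{i\in\mathcal{I}^{in}}|b_i|\ln|v_i| - \sum_{j\in\mathcal{I}^{out}}|b_j|\ln|v_j|$. If $I_0\subset\mathcal{I}^{in}$, then $\Phi\to-\infty$; if $I_0\subset\mathcal{I}^{out}$, then $\Phi\to+\infty$. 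Either case contradicts $\Phi\equiv C$, so $I_0$ must meet both index sets, which places the limit in $\mathcal{H}_\sigma$.
\item \emph{Every hinge point is approached.} This needs an explicit construction. Approach a hinge point along paths with $|v_i|=\epsilon^{\alpha}$ for $i\in\mathcal{I}^{in}$ and $|v_j|=\epsilon^{\beta}$ for $j\in\mathcal{I}^{out}$, and tune $\alpha,\beta$ together with bounded multiplicative constants so that $\Phi$ stays equal to $C$.
\end{itemize}
Two points are delicate: making this ``anchoring'' precise as a closure statement, and handling non-generic corners (including the origin) where several faces of both types meet.
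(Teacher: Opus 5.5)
Your proposal is correct. For item~1 and the ``exclusively'' half of item~2 it is the paper's proof: the paper also reads entry and exit off the sign of $\sigma_k b_k^A$, shows that a hinge-touching fiber has a wrong-signed coordinate on each side of $\lambda^*$, and excludes non-hinge boundary limits by the same logarithmic-divergence contradiction on \eqref{eq:potential_in_out}.

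Your explicit traversal interval $(\max_{i\in\mathcal{I}^{in}_\sigma}\lambda_i^*,\,\min_{j\in\mathcal{I}^{out}_\sigma}\lambda_j^*)$ also justifies the $\partial\mathcal{W}_\sigma$ claim, which the paper only asserts. To make ``empty-but-limiting'' precise, note that the closure of $\mathcal{W}_\sigma$ is the $A$-image of the closed orthant in $\mathcal{X}$, i.e.\ the tasks with $\max\le\min$. Hence $\partial\mathcal{W}_\sigma=\{\max=\min\}$, which is exactly the set of hinge-touching tasks.

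For item~3 the routes differ. The paper argues on the level set itself: as $C\to-\infty$ the entry sum must diverge, pushing the leaf onto $\mathcal{P}^{in}_\sigma$, and Proposition~\ref{prop:monotonicity} supplies the sweep. This gives a direct spatial picture, but it is really a compact-set statement, since a very negative $\Phi$ can also come from exit coordinates growing without bound. You instead combine Proposition~\ref{prop:monotonicity} with Theorem~\ref{thm:potential_limits} to make $C_w$ an increasing bijection onto $\mathbb{R}$. That buys the description of $\mathcal{M}_C\cap\mathcal{O}_\sigma$ as a smooth section over $\mathcal{W}_\sigma$, and a precise fiber-by-fiber sweep from the entry endpoint to the exit endpoint.

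Finally, the ``entirely'' half of item~2, which you flagged as the hard part, is not argued in the paper at all. Your exponent idea settles it uniformly, including multi-face corners and the origin. For a hinge point $p$ with vanishing set $I_0$, set $B_{in}=\sum_{i\in I_0\cap\mathcal{I}^{in}_\sigma}|b_i^A|$ and $B_{out}=\sum_{j\in I_0\cap\mathcal{I}^{out}_\sigma}|b_j^A|$. Take $|v_i|=K_i\epsilon^{B_{out}}$ and $|v_j|=\epsilon^{B_{in}}$ with signs given by $\sigma$, freeze the remaining coordinates at their values at $p$, and fix a single $K_{i_0}$ so that $\Phi=C$. The $\ln\epsilon$ terms cancel identically, so these points lie on $\mathcal{M}_C$ and converge to $p$ as $\epsilon\to 0$.
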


\begin{proof}
We prove the three claims sequentially. 

\textit{1)} Let a fiber be parameterized by $\lambda$, mapping to the transformed state $x_k(\lambda) = (z_w^A)_k + \lambda b_k^A$. For a fiber to enter $\mathcal{O}_\sigma$ across a face $F^k_\sigma$, the coordinate must transition to match the orthant sign $\sigma_k$, requiring $\operatorname{sign}(dx_k/d\lambda) = \operatorname{sign}(b_k^A) = \sigma_k$. This precisely satisfies the condition for $\mathcal{I}^{in}_\sigma$, proving entry occurs exclusively through $\mathcal{P}^{in}_\sigma$. Conversely, exiting satisfies the condition for $\mathcal{I}^{out}_\sigma$. 

Now, consider a fiber intersecting a hinge at $h \in \mathcal{H}_\sigma$ at some parameter $\lambda^*$. By definition of $\mathcal{H}_\sigma$, there exists at least one entry index $i \in \mathcal{I}^{in}_\sigma$ and one exit index $j \in \mathcal{I}^{out}_\sigma$ such that $x_i(\lambda^*) = x_j(\lambda^*) = 0$ simultaneously. For any infinitesimal progression $\lambda > \lambda^*$, the entry coordinate $x_i$ transitions to match $\sigma_i$. However, because $j \in \mathcal{I}^{out}_\sigma$, the exit coordinate $x_j$ strictly transitions away from $\sigma_j$. Therefore, for all $\lambda \neq \lambda^*$, the fiber violates the strict sign condition of $\mathcal{O}_\sigma$. The fiber remains entirely outside the orthant interior, intersecting it tangentially without penetrating. These tangent fibers define the geometric boundary $\mathcal{W}_\sigma=\partial f(\mathcal{O}_\sigma)$ in the task-space, separating the tasks $w\in f(\mathcal{O}_\sigma)$ that are achievable by a kinetic state $v\in\mathcal{O}_\sigma$ from those that are not. At folds $\mathcal{S}^{in}_\sigma$, all intersecting faces belong to $\mathcal{I}^{in}_\sigma$; thus, all concurrent zero-crossings transition into the correct orthant signs simultaneously, permitting valid entry. Similarly for $\mathcal{S}^{out}_\sigma$ regarding the exit.

\textit{2)} The orthogonal manifolds inside $\mathcal{O}_\sigma$ are defined by the level sets of the potential function:
\begin{equation}
    \Phi(v) = \sum_{i \in \mathcal{I}^{in}_\sigma} |b_i^A| \ln|v_i| - \sum_{j \in \mathcal{I}^{out}_\sigma} |b_j^A| \ln|v_j| = C.
    \label{eq:potential_in_out}
\end{equation}
Consider the limit as the manifold approaches an entry face $F^i_\sigma$ (where $v_i \to 0$) for a fixed, finite $C$. The corresponding entry logarithmic term diverges to $-\infty$. To satisfy the potential equation $\Phi(v) = C$, the exit summation must correspondingly diverge to $-\infty$, which strictly requires at least one exit coordinate $v_j \to 0$ (approaching face $F^j_\sigma$). Therefore, the limits of the manifold exist only where $v_i = 0$ and $v_j = 0$ simultaneously, which precisely defines the hinges $\mathcal{H}_\sigma = F^i_\sigma \cap F^j_\sigma$.

\textit{3)} Let $C \to -\infty$. To satisfy the potential equation, the strictly positive summation over the entry set in~\eqref{eq:potential_in_out} must dominate, forcing $\sum_{i \in \mathcal{I}^{in}_\sigma} |b_i^A| \ln|v_i| \to -\infty$. This requires the manifold to be arbitrarily close to the entry faces, meaning $\mathcal{M}_{-\infty} \to \mathcal{P}^{in}_\sigma$. Because $\mathcal{P}^{in}_\sigma$ is the union of multiple intersecting hyperplanes, a single continuous manifold flattening against it must conform to its piecewise-linear topology. Consequently, the manifold creases along the intersection loci of these faces, mapping perfectly to the pure entry folds $\mathcal{S}^{in}_\sigma$. By symmetry, as $C \to +\infty$, the subtracted exit sum in~\eqref{eq:potential_in_out}  dominates, forcing $\mathcal{M}_{+\infty} \to \mathcal{P}^{out}_\sigma$, and the manifold creases along $\mathcal{S}^{out}_\sigma$. Because $\Phi(v)$ is continuous and monotonically increasing along any valid transversing fiber (Proposition~\ref{prop:monotonicity}), the manifolds continuously sweep the entire volume of $\mathcal{O}_\sigma$ between these asymptotic bounds.
\end{proof}

\subsection{Topology of the Orthogonal Foliation for Extremal Orthants and Asymptotic Divergence}
\label{subsec:global_topology_foliation_extremal}

The topological behavior defined for transitional orthants shifts fundamentally when considering an extremal orthant. An orthant is defined as extremal if its sign vector $\sigma$ perfectly matches (the positive orthant $\mathcal{O}^+$) or perfectly opposes (the negative orthant $\mathcal{O}^-$) the signs of the null-space mapping vector $b^A$. 

\begin{thm}[Topological Sweep in Extremal Orthants]
\label{thm:topological_sweep_extremal}
In the extremal orthants $\mathcal{O}^+$ and $\mathcal{O}^-$, there are no hinges ($\mathcal{H}^+ = \mathcal{H}^- = \emptyset$). Furthermore, $\mathcal{O}^+$ possesses no exit portals ($\mathcal{P}^{out}_+ = \emptyset$), and $\mathcal{O}^-$ possesses no entry portals ($\mathcal{P}^{in}_- = \emptyset$). For any constant finite $C$, the orthogonal manifolds $\mathcal{M}_C$ do not converge asymptotically to any hinge. Instead, as $C$ varies, the manifolds sweep the entire infinite volume of the orthant: they asymptotically flatten against the bounding portals in one limit of $C$, and expand with unbounded divergence toward spatial infinity in the opposite limit.
\end{thm}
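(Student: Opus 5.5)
The plan is to specialize the portal/hinge partition of Definition~\ref{def:portals_hinges} to the two extremal sign vectors, and then read off the behaviour of the level sets of $\Phi$ directly from the form \eqref{eq:potential_in_out}.

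\textbf{Step 1: the index sets.} For $\mathcal{O}^+$ we have $\sigma_i = \operatorname{sign}(b_i^A)$ for all $i$. Hence $\sigma_i b_i^A = |b_i^A| > 0$ for every $i$ (Assumption~\ref{asmpt:true_redundancy}). This gives $\mathcal{I}^{in}_+ = \{1,\dots,n\}$ and $\mathcal{I}^{out}_+ = \emptyset$, so $\mathcal{P}^{out}_+ = \emptyset$. Since the hinges are defined as $\mathcal{H}_\sigma = \mathcal{P}^{in}_\sigma \cap \mathcal{P}^{out}_\sigma$, we get $\mathcal{H}^+ = \emptyset$. The case $\mathcal{O}^-$ is identical with the roles of the sets reversed: $\mathcal{I}^{in}_- = \emptyset$, so $\mathcal{P}^{in}_- = \emptyset$ and $\mathcal{H}^- = \emptyset$. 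This is consistent with Definition~\ref{def:transitional_orthants}: fibers only enter $\mathcal{O}^+$, where they terminate, and only leave $\mathcal{O}^-$, where they originate.

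\textbf{Step 2: the potential on the extremal orthants.} On $\mathcal{O}^+$, \eqref{eq:potential_in_out} reduces to
\[
\Phi(v) = \sum_{i=1}^n |b_i^A| \ln|v_i|.
\]
On $\mathcal{O}^-$ it reduces to $\Phi(v) = -\sum_{i=1}^n |b_i^A| \ln|v_i|$. Every coefficient now has the same sign, so no cancellation between terms is possible.

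\textbf{Step 3: no finite-$C$ anchoring.} Fix a finite $C$ and suppose a point of $\mathcal{M}_C \cap \mathcal{O}^+$ approaches a face $v_i = 0$. Then the $i$-th term tends to $-\infty$. To keep $\Phi = C$, some other term would have to tend to $+\infty$, that is, some $|v_j| \to \infty$. So $\mathcal{M}_C$ cannot reach any face at a finite point. Near a face it can only recede to spatial infinity, trading a vanishing $v_i$ against a diverging $v_j$. The boundary of $\mathcal{M}_C$ therefore lies entirely at infinity, and there is no convergence to a hinge, consistent with $\mathcal{H}^+ = \emptyset$.

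\textbf{Step 4: the two limits of $C$.}
\begin{itemize}
\item As $C \to -\infty$ on $\mathcal{O}^+$, the weighted product $\prod_i |v_i|^{|b_i^A|} = e^{C}$ tends to $0$. Every point of $\mathcal{M}_C$ must then have $\min_i |v_i| \to 0$ on bounded sets. Arguing as in part 3) of Theorem~\ref{thm:topological_sweep}, the manifold flattens against $\mathcal{P}^{in}_+$, creasing along the entry folds.
\item As $C \to +\infty$, we have $\max_i |v_i| \ge e^{C/\sum_i |b_i^A|} \to \infty$. Hence $\operatorname{dist}(0, \mathcal{M}_C) \to \infty$, which is the claimed unbounded divergence.
\end{itemize}
For $\mathcal{O}^-$ the sign flip exchanges the two limits.

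\textbf{Step 5: sweep of the whole orthant.} Every interior point $v$ lies on exactly the leaf $C = \Phi(v)$. The map $\Phi$ is smooth and surjective onto $\mathbb{R}$; surjectivity follows, for example, by restricting to the central fiber $v = \mu c^A$, where $\Phi = \left(\sum_i |b_i^A|\right) \ln \mu + \text{const}$. Together with Proposition~\ref{prop:monotonicity}, this shows the leaves monotonically fill the orthant as $C$ ranges over $\mathbb{R}$.

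\textbf{Main obstacle.} I expect the main difficulty to be making the words ``flatten'' and ``creases'' precise in Step 4. My first approach would be to show Hausdorff convergence on compact subsets of $\overline{\mathcal{O}^+}$ of $\mathcal{M}_C$ to $\mathcal{P}^{in}_+$. This should follow because $\{\Phi \le C\} \cap B_R$ shrinks into an $\epsilon(C)$-neighbourhood of the coordinate faces as $C \to -\infty$. For the orientation of the leaves, I would reuse Corollary~\ref{cor:flattening_manifold} and the gradient argument of Corollary~\ref{cor:geometric_alignment}.
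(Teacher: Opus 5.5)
Your proposal is correct and follows the paper's skeleton. Steps 1--2 are exactly the paper's observation: $\sigma_k b_k^A$ has constant sign on $\mathcal{O}^\pm$, so one portal is empty, $\mathcal{H}^\pm=\emptyset$, and $\Phi$ becomes a same-signed sum of logarithms. You differ in the direction of the limit estimates. The paper argues pointwise: approaching any $v^*\in\mathcal{P}^{in}_+$ from the interior forces $\Phi\to-\infty$, and along each ray $\Phi(ru)\to+\infty$. You instead use $\prod_i|v_i|^{|b_i^A|}=e^C$ to get, for every $v\in\mathcal{M}_C\cap\mathcal{O}^+$,
\[
\min_i|v_i|\le e^{C/\sum_i|b_i^A|}\le\max_i|v_i|\le\|v\|.
\]
Since $\min_i|v_i|$ is exactly the distance from $v$ to $\partial\mathcal{O}^+$, this bound is global and uniform, so your qualifier ``on bounded sets'' is unnecessary. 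It makes rigorous the paper's phrase ``expands uniformly outward'', which the ray argument gives only direction by direction. Your Step 3 also proves something the paper merely infers from $\mathcal{H}^\pm=\emptyset$: a finite-$C$ leaf has no accumulation point on $\partial\mathcal{O}^+$ at all. That is the genuine contrast with part 2) of Theorem~\ref{thm:topological_sweep}.

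What your route does not deliver is the other half of ``flattening against the portal''. Lying in a shrinking neighbourhood of $\partial\mathcal{O}^+$ does not show that the leaves approach \emph{every} face; a priori they could collapse onto a single face. Appealing to part 3) of Theorem~\ref{thm:topological_sweep} does not help, since that argument is also of containment type. The missing half is precisely the paper's argument, and one line makes it rigorous. Take $v^*\in\mathcal{P}^{in}_+$ and an interior point $p$ near it. The open segment $(v^*,p]$ lies in $\operatorname{int}\mathcal{O}^+$, $\Phi$ is continuous on it, and $\Phi\to-\infty$ at $v^*$. By the intermediate value theorem, $\mathcal{M}_C$ then meets the closed ball of radius $\|p-v^*\|$ around $v^*$ for every $C<\Phi(p)$.

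With both halves you obtain convergence of the leaves to $\mathcal{P}^{in}_+$ on compact sets, which neither argument gives on its own. The creasing along folds is not part of this statement and can be dropped.
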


\begin{proof}
In an extremal orthant, all components of the product $\sigma_k b_k^A$ share the same strict sign. Consequently, all boundary faces belong exclusively to the entry portal $\mathcal{P}^{in}_\sigma$ for $\mathcal{O}^+$ (where $\sigma_k b_k^A > 0$) or exclusively to the exit portal $\mathcal{P}^{out}_\sigma$ for $\mathcal{O}^-$ (where $\sigma_k b_k^A < 0$). The opposing portal is inherently empty, which strictly implies the intersection set is empty: $\mathcal{H}_\sigma = \mathcal{P}^{in}_\sigma \cap \mathcal{P}^{out}_\sigma = \emptyset$.

Consider the extremal orthant $\mathcal{O}^+$, where the exit portal is empty ($\mathcal{P}^{out}_+ = \emptyset$). Within this orthant, the potential function evaluates as a strictly positively weighted sum of logarithms: $\Phi(v) = \sum_{i=1}^n |b_i^A| \ln|v_i| = C$.

To formalize the limit $C \to -\infty$, consider an arbitrary $v^*\in \mathcal{P}^{in}_+$. By definition of the boundary, $v^*$ possesses at least one vanishing coordinate, $v^*_k = 0$. For any sequence of interior points $v^{(m)} \to v^*$, the corresponding term $|b_k^A| \ln|v^{(m)}_k| \to -\infty$. Because all other logarithmic terms are bounded from above in any finite neighborhood of $v^*$, the total potential strictly diverges to $-\infty$. Since this holds for every point $v^* \in \mathcal{P}^{in}_+$, the level set $\mathcal{M}_C$ asymptotically converges to and entirely covers the portal as $C \to -\infty$.

Conversely, to formalize the limit $C \to +\infty$, consider an arbitrary open ray strictly within the interior of the orthant, parameterized by $v(r) = r u$, where $u \in \operatorname{int}(\mathcal{O}^+)$ is a constant directional vector and $r > 0$. Evaluating the potential along this ray yields:
\begin{equation}
    \Phi(ru) = \left( \sum_{i=1}^n |b_i^A| \right) \ln r + \sum_{i=1}^n |b_i^A| \ln u_i.
\end{equation}
Because $u$ lies strictly in the interior, $u_i > 0$ for all $i$, making the second sum a finite constant. As the radial distance $r \to \infty$, the strictly positive coefficient $\sum |b_i^A| > 0$ guarantees that $\Phi(ru) \to +\infty$. Because this strict radial divergence applies along \emph{any} arbitrary interior direction $u$, the sequence of manifolds $\mathcal{M}_C$ expands uniformly outward in all directions, encompassing all points at spatial infinity as $C \to +\infty$.
\end{proof}

 The geometric appearance of the orthogonal manifolds in low-dimensional cases is visualized in Fig.~\ref{fig:orthogonal_manifolds}. In the transitional orthants (Fig.~\ref{fig:orthogonal_manifolds}b and \ref{fig:orthogonal_manifolds}c), the mixed signs of the potential field coefficients force the limit cases $C \to -\infty$ and $C \to +\infty$ onto distinct, non-adjacent boundary hyperplanes. This forms distinct `entry' and `exit' portals, allowing the kinetic fibers to fully traverse the orthant. In the extremal orthants shown in Fig.~\ref{fig:orthogonal_manifolds}a and \ref{fig:orthogonal_manifolds}d, all the hyperplanes bounding the orthant are part of a unique entry portal to which the manifolds converge as $C \to -\infty$. In this case, the manifolds foliate the full orthant, sweeping through it in the direction of the central fiber as $C \to +\infty$.

 \begin{figure*}[t]
     \centering
     \begin{subfigure}[b]{0.22\linewidth}
         \centering
         \includegraphics[width=\linewidth]{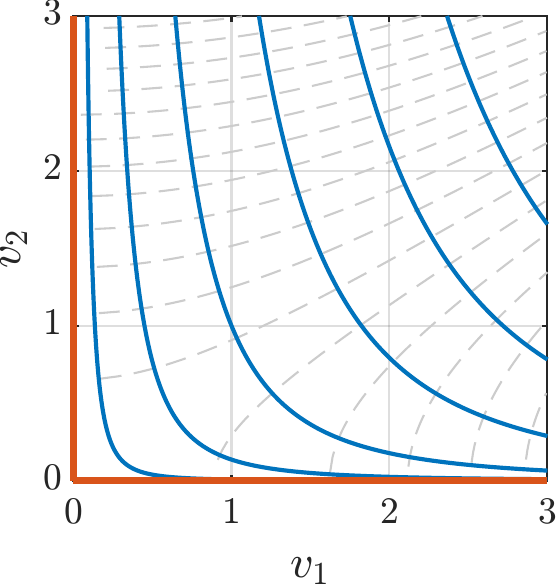}
         \caption{2D Q-I (Extremal)}
         \label{subfig:manifold_2d_q1_ext}
     \end{subfigure}
     \hfill
     \begin{subfigure}[b]{0.22\linewidth}
         \centering
         \includegraphics[width=\linewidth]{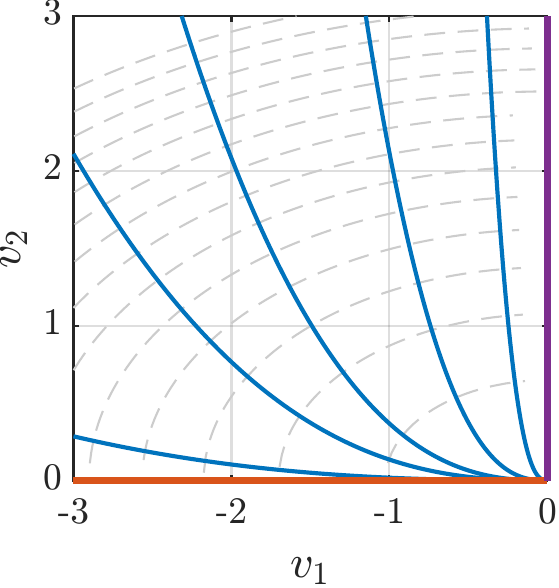}
         \caption{2D Q-II (Transitional)}
         \label{subfig:manifold_2d_q2_trans}
     \end{subfigure}
     \hfill
     \begin{subfigure}[b]{0.26\linewidth}
         \centering
         \includegraphics[width=\linewidth]{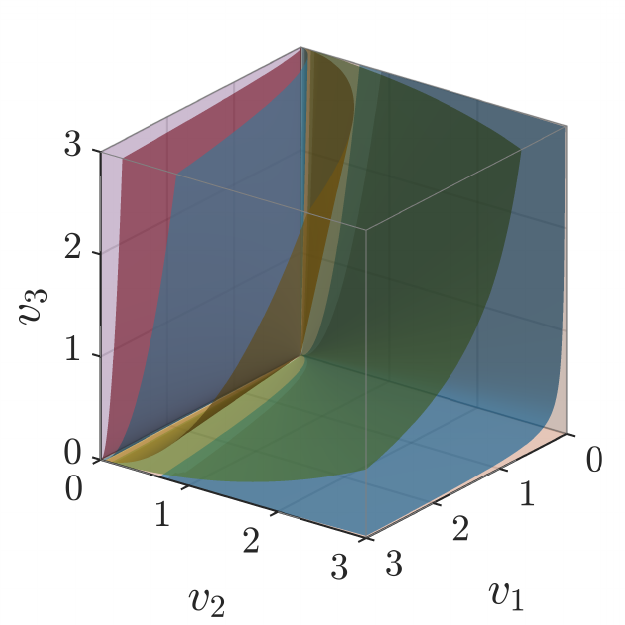}
         \caption{3D (+,+,+) (Transitional)}
         \label{subfig:manifold_3d_trans}
     \end{subfigure}
     \hfill
     \begin{subfigure}[b]{0.26\linewidth}
         \centering
         \includegraphics[width=\linewidth]{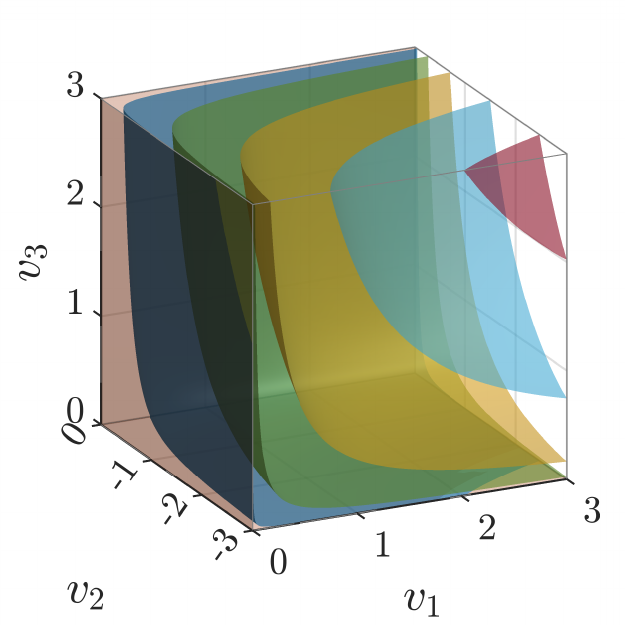}
         \caption{3D (+,-,+) (Extremal)}
         \label{subfig:manifold_3d_ext}
     \end{subfigure}
    
     \caption{Level sets of the global logarithmic potential field defining the orthogonal manifolds. The top row utilizes the $n=2$ asymmetric allocation map from Fig.~\ref{fig:task_fibers}b, showing the family of curves for $C \in \{-10, -4, 0, 3 5, 6.5\}$ in the extremal Quadrant I and $C \in \{-8, -4, -2, 1.5, 7\}$ in the transitional Quadrant II. The bottom row utilizes the $n=3$ standard allocation from Fig.~\ref{fig:task_fibers}c, illustrating the surfaces for $C \in \{ -5,-1.5, 0.5, 2.5, 5\}$  in the transitional $(+,+,+)$ octant and $C \in \{ -4,-0.5, 1.5, 3, 4\}$ in the extremal $(+,-,+)$ octant. Thick colored boundaries denote the limit cases $C \to \pm\infty$, highlighting the difference between entry and exit portals in the transitional and extremal orthants. When present, background dashed lines indicate the fibers.}
     \label{fig:orthogonal_manifolds}
 \end{figure*}

\section{Layer Partition of the Orthants and Global Orthogonal Sections}
\label{sec:layer_partition}

\subsection{The Stratified Layer Partition of Orthants}
\label{subsec:layer_partition}

The global topology of the task fibers implies a highly structured partitioning of the transitional orthants in the $n$-dimensional kinetic space $\mathcal{V}$ which are organized into distinct layers based on their progression between the two extremal orthants. 
To formalize this stratification, consider the transformed coordinate mapping of a generic fiber: $x_k(\lambda) = (z_w^A)_k + \lambda b_k^A$. Because this function is strictly linear with respect to $\lambda$, each coordinate $x_k$ must cross zero exactly once as $\lambda$ sweeps from $-\infty$ to $+\infty$. Consequently, every coordinate changes its sign relative to the null-space vector exactly once along any valid fiber. Traveling along a continuous fiber from $\lambda \to -\infty$ to $\lambda \to +\infty$ is therefore combinatorially equivalent to starting with a kinetic state entirely opposite in sign to $b^A$ and flipping the signs one by one until they are entirely aligned with $b^A$.
This strict monotonic progression allows the $2^n$ orthants to be grouped by the number of signs that have successfully transitioned. The initial negative extremal orthant $\mathcal{O}^{-}$ possesses $n$ exit faces and $0$ entry faces. As a fiber crosses a boundary face into a subsequent orthant, exactly one exit face transitions into an entry face. 

This structural grouping dictates the geometric boundaries shared by orthants within the same progressive stage. If two distinct orthants $\mathcal{O}_{\sigma_a}$ and $\mathcal{O}_{\sigma_b}$ possess the exact same number of entry boundary faces, 
i.e., 
$|\mathcal{I}^{in}_{\sigma_a}| = |\mathcal{I}^{in}_{\sigma_b}|$, their signatures $\sigma_a$ and $\sigma_b$ cannot differ by just a single sign; changing one sign would alter their entry-face count, placing them in different stages of the fiber progression. To maintain the same count, $\sigma_a$ and $\sigma_b$ must differ by exactly two indices: one strictly transitioning from entry to exit, and the other from exit to entry. Because these two orthants differ by two signs rather than one, they bypass each other's $(n-1)$-dimensional flat faces entirely. Instead, they intersect exclusively at an $(n-2)$-dimensional bounding stratum (e.g., an edge in $\mathbb{R}^3$). Because the two coordinates defining this intersection consist of an entry/exit pair, this bounding stratum matches the geometric definition of a hinge. This leads us to the following definition.

\begin{defn}[Orthant Layers and Reciprocal Hinges]
\label{def:traversal_layers}
The kinetic space orthants are partitioned into \emph{orthant layers} $\mathcal{L}_l$, where the integer $l \in \{0, \dots, n\}$ represents the exact number of acquired entry faces, $\mathcal{L}_l=\bigcup_{|\mathcal{I}^{in}_{\sigma}|=l}\{\mathcal{O}_\sigma\}$.
Layer $\mathcal{L}_0 = \{\mathcal{O}^-\}$ is the negative extremal orthant, and $\mathcal{L}_n = \{\mathcal{O}^+\}$ is the positive extremal orthant. The $2^n-2$ transitional orthants form the intermediate layers $l \in \{1, \dots, n-1\}$, with each layer containing exactly $\binom{n}{l}$ distinct orthants. 

For any two distinct orthants $\mathcal{O}_{\sigma_a}, \mathcal{O}_{\sigma_b} \in \mathcal{L}_l$ whose sign vectors $\sigma_a$ and $\sigma_b$ differ by exactly two indices, their shared boundary is strictly an $(n-2)$-dimensional stratum. This exclusive intersection constitutes a \emph{reciprocal hinge}, acting as an orthant hinge simultaneously for both $\mathcal{O}_{\sigma_a}$ and $\mathcal{O}_{\sigma_b}$.
\end{defn}

\begin{thm}[Global Fiber Capture by Stratified Layers]
\label{thm:global_capture}
For an $n$-dimensional kinetic space $\mathcal{V}$, any single complete layer $\mathcal{L}_l$ acts as a complete transverse section to the fiber flow, sufficient to intersect the fibers $f^{-1}(w)$ for all $w \in \mathcal{W}$. The task-space pre-images of the $\binom{n}{l}$ orthants within a single layer $\mathcal{L}_l$ possess strictly disjoint interiors---separated by the tangent boundary fibers intersecting the reciprocal hinges---and their union completely partitions the global set of achievable tasks $w \in \mathcal{W}$. 
\end{thm}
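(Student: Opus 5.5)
The plan is to work in the transformed space $\mathcal{X}$, where the fiber of $w$ is the line $x(\lambda) = z_w^A + \lambda b^A$. Along this line the sign pattern of $x(\lambda)$ relative to $b^A$ is governed by the sorted roots $\lambda_i^* = -z_{w,i}^A/b_i^A$. For $\lambda$ in the open interval between the $l$-th and $(l+1)$-th sorted roots (with $\lambda^*_{(0)} = -\infty$ and $\lambda^*_{(n+1)}=+\infty$), exactly the $l$ coordinates whose roots have been passed satisfy $\operatorname{sign}(x_i) = \operatorname{sign}(b_i^A)$. These are precisely the entry indices $\mathcal{I}^{in}_\sigma$ of the current orthant. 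Hence for a generic $w$ the fiber spends a nonempty parameter interval in exactly one orthant of $\mathcal{L}_l$, namely the orthant whose entry set is the set of the $l$ smallest roots. This argument refines Lemma~\ref{lem:orthant_traversal}. It already gives the capture property for generic tasks, and it gives \emph{uniqueness}: the interiors of the pre-images $f(\mathcal{O}_\sigma)$, $\mathcal{O}_\sigma\in\mathcal{L}_l$, are disjoint. The reason is that a generic $w$ lying in two such images would require two distinct "$l$ smallest roots" sets, which is impossible.

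Next I will treat non-generic tasks, where roots coincide, by closure. If several roots tie at the $l$-th/$(l+1)$-th position, then the fiber passes through a stratum lying in $\bar{\mathcal{O}}_\sigma$ for some $\mathcal{O}_\sigma\in\mathcal{L}_l$. I will choose a tie-breaking of the tied indices so that exactly $l$ are declared passed. So every fiber meets $\bigcup_{\mathcal{O}_\sigma\in\mathcal{L}_l}\bar{\mathcal{O}}_\sigma$, and $\bigcup_\sigma \overline{f(\mathcal{O}_\sigma)} = \mathcal{W}$, using continuity of $f$ and density of generic tasks (Remark~\ref{rem:non_generic_tasks}). For $l=0$ and $l=n$ this reduces to Theorem~\ref{thm:mapping_surjectivity}. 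The case $w=0$ needs a separate check: for $0<l<n$ the central fiber meets the layer only at the origin, which lies in every closure.

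The main obstacle is identifying the separating boundaries as the tangent fibers through reciprocal hinges. Between two cells $f(\mathcal{O}_{\sigma_a})$ and $f(\mathcal{O}_{\sigma_b})$ of the same layer, the set of $l$ smallest roots changes only when the $l$-th and $(l+1)$-th sorted roots swap. That is, the boundary consists of tasks with $\lambda_i^* = \lambda_j^*$ where $i$ is passed in $\sigma_a$ but not in $\sigma_b$ and vice versa. Such a fiber passes through $\{v_i=v_j=0\}$, with $i\in\mathcal{I}^{in}_{\sigma_a}\cap\mathcal{I}^{out}_{\sigma_b}$ and $j$ the reverse, which is exactly the reciprocal hinge of Definition~\ref{def:traversal_layers}. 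By part 1 of Theorem~\ref{thm:topological_sweep}, such a fiber touches both orthants only tangentially. I expect the delicate points to be verifying that $\sigma_a,\sigma_b$ differ in exactly these two indices, and showing that $\{w : z_{w,i}^A/b_i^A = z_{w,j}^A/b_j^A\}$ is a hyperplane in $\mathcal{W}$, so that it has codimension one and the cells have disjoint interiors. This hyperplane claim follows because $w\mapsto z_w^A$ is linear and injective. I will check this directly rather than rely on earlier geometric statements.
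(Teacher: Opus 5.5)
Your proposal is correct and follows the same argument as the paper: each $x_k(\lambda)$ flips sign exactly once, so a fiber accumulates entry indices monotonically and meets exactly one orthant per layer, while fibers through reciprocal hinges touch the orthant closures only tangentially and bound the cells. Your sorted-root bookkeeping, tie-breaking into closures, and description of the cells $f(\mathcal{O}_\sigma)$ as cut out by the hyperplanes $\{\lambda_i^*=\lambda_j^*\}$ make rigorous what the paper only asserts. One small fix: injectivity of $w\mapsto A^+w$ alone does not make $\{w : z_{w,i}^A/b_i^A = z_{w,j}^A/b_j^A\}$ a hyperplane; you also need $u=e_i/b_i^A-e_j/b_j^A\notin\ker A=\operatorname{span}\{b^A\}$, which holds because $u\neq 0$ and $u^T b^A=0$.
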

\begin{proof}
Every continuous reconfiguration fiber originating from $\mathcal{O}^{-}$ (layer $\mathcal{L}_0$) and terminating at $\mathcal{O}^{+}$ (layer $\mathcal{L}_n$) must strictly and monotonically accumulate entry coordinates, because each $x_k(\lambda)$ changes sign exactly once. Therefore, every fiber traversing the full kinetic sequence must intersect exactly one orthant in layer $\mathcal{L}_1$, one in $\mathcal{L}_2$, up through $\mathcal{L}_{n-1}$, unless it geometrically intersects a reciprocal hinge. Because fibers intersecting reciprocal hinges are purely tangent (Theorem~\ref{thm:topological_sweep}) and define the exact geometric boundaries between the disjoint orthant interiors, the union of the closures of the $\binom{n}{l}$ orthants in any single layer $\mathcal{L}_l$ forms a contiguous, complete transverse net. Consequently, the task-space pre-images of the orthants in $\mathcal{L}_l$ seamlessly piece together to form the entire task space $\mathcal{W}$.
\end{proof}

\begin{prop}[Combinatorics of Reciprocal Hinges]
\label{prop:hinge_combinatorics}
The exact number of unique reciprocal hinges contained within a specific orthant layer $\mathcal{L}_l$ is given by:
\begin{equation}
    N_{\mathcal{H}}(l) = \frac{1}{2} \binom{n}{l} l (n-l) = \binom{n}{2}\binom{n-2}{l-1}.
    \label{eq:number_hinges_in_a_layer}
\end{equation}
\end{prop}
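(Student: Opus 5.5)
The count follows from a double-counting argument on the orthant graph restricted to layer $\mathcal{L}_l$. I will encode each orthant by the set $S_\sigma := \mathcal{I}^{in}_\sigma$ of its entry indices. The map $\sigma \mapsto S_\sigma$ is a bijection between sign vectors and subsets of $\{1,\dots,n\}$, because $\sigma_i b_i^A>0$ fixes $\sigma_i=\operatorname{sign}(b_i^A)$ and $b_i^A\neq 0$ by Assumption~\ref{asmpt:true_redundancy}. Under this encoding, $\mathcal{L}_l$ is identified with the family of $l$-subsets of $\{1,\dots,n\}$, which recovers its cardinality $\binom{n}{l}$ from Definition~\ref{def:traversal_layers}.

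Next I characterize reciprocal hinges combinatorially. By Definition~\ref{def:traversal_layers}, a reciprocal hinge is the $(n-2)$-dimensional stratum shared by two orthants of $\mathcal{L}_l$ whose sign vectors differ in exactly two indices. As argued in the layer discussion, such orthants must satisfy $S_{\sigma_b}=(S_{\sigma_a}\setminus\{i\})\cup\{j\}$ with $i\in S_{\sigma_a}$ and $j\notin S_{\sigma_a}$. Conversely, any such swap gives a pair in the same layer, and the two orthants share the stratum $\{v_i=v_j=0\}$ with the remaining $n-2$ signs fixed. For $\mathcal{O}_{\sigma_a}$, the index $i$ is an entry index and $j$ is an exit index, so the stratum lies in $\mathcal{H}_{\sigma_a}$; the same holds for $\mathcal{O}_{\sigma_b}$ with the roles of $i$ and $j$ reversed. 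Distinct unordered pairs of orthants give distinct strata, because the stratum determines the pair $\{i,j\}$ and the other $n-2$ signs, and hence determines both orthants. The hinges are therefore in bijection with the edges of the Johnson graph $J(n,l)$.

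The count then follows in two ways. For the first, each orthant in $\mathcal{L}_l$ has exactly $l(n-l)$ swap neighbours, one for each choice of $i\in S$ and $j\notin S$, and every hinge is counted twice, once from each of its two orthants. This gives $\tfrac12\binom{n}{l}l(n-l)$. For the second, I choose the unordered pair $\{i,j\}$ of indices carrying the vanishing coordinates in $\binom{n}{2}$ ways. The remaining $n-2$ indices must contribute exactly $l-1$ entry indices, since one of $i,j$ is entry in each adjacent orthant, which gives $\binom{n-2}{l-1}$ choices. To confirm that the two expressions agree, I will check directly that $\tfrac12\binom{n}{l}l(n-l)=\tfrac{n!}{2(l-1)!(n-l-1)!}=\binom{n}{2}\binom{n-2}{l-1}$.

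The main obstacle is the injectivity and well-definedness step: showing that each reciprocal hinge is counted exactly once and is not also a face between orthants of different layers. I would handle this by noting that the stratum $\{v_i=v_j=0\}$ with fixed remaining signs bounds four orthants. These carry entry counts $l-1$, $l$, $l$ and $l+1$, so it is reciprocal only for the unique same-layer pair. The formula is valid for $1\le l\le n-1$, and I will note that it correctly returns $0$ at $l=0$ and $l=n$, consistent with Theorem~\ref{thm:topological_sweep_extremal}.
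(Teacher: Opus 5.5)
Your proof is correct, and its core is the same double-counting argument the paper uses: each orthant of $\mathcal{L}_l$ has $l(n-l)$ entry/exit swap partners, and you divide $\binom{n}{l}l(n-l)$ by two because each reciprocal hinge is shared by exactly two orthants. Your subset encoding, the four-orthant check that each hinge is counted exactly once, and the direct $\binom{n}{2}\binom{n-2}{l-1}$ count with the factorial identity add rigor the paper leaves implicit (the paper never verifies the second equality), but they do not change the approach.
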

\begin{proof}
The reciprocal hinges geometrically glue together adjacent orthants within the same layer $\mathcal{L}_l$. To transition between two adjacent orthants while remaining strictly within $\mathcal{L}_l$ (maintaining exactly $l$ coordinates of a given sign signature), exactly two coordinates must simultaneously vanish and swap signs: one from the $l$ variables of the given sign signature, and one from the $n-l$ variables of the opposite signature. Therefore, from the perspective of any single orthant in $\mathcal{L}_l$, there are exactly $l(n-l)$ possible pairs of coordinates that can form an intra-layer reciprocal hinge. 

Summing these possible hinges across all $\binom{n}{l}$ orthants in the layer counts each hinge twice, because every intra-layer reciprocal hinge is a boundary shared by exactly two adjacent orthants. Dividing by two yields the exact total $N_{\mathcal{H}}(l)$ in~\eqref{eq:number_hinges_in_a_layer}.
\end{proof}

Because the binomial coefficient $\binom{n-2}{l-1}$ is maximized at the central values of $l$, Proposition~\ref{prop:hinge_combinatorics} proves analytically that the deepest transitional layers (centered around $l \approx n/2$) contain the maximum number of reciprocal hinges and, consequently, the highest density of kinematic singularities.

\subsection{Global Sections in Transitional Layers}
\label{subsec:global_sections_transitional_layers}

The establishment of transitional layers $\mathcal{L}_l$ provides the exact topological framework required to synthesize the local orthogonal manifolds of each orthant into a global structure. As proven in Theorem~\ref{thm:global_capture}, the orthants within a single layer $\mathcal{L}_l$ collectively capture the entire transversal flow of the reconfiguration fibers, with their boundaries contiguous at the reciprocal hinges. 

Recall that within any single transitional orthant $\mathcal{O}_\sigma$, the orthogonal manifold $\mathcal{M}_{C,\sigma}$ defined by the level set of the potential function $\Phi_\sigma(v) = C$ acts as a complete transversal barrier to the local fiber flow within that orthant, anchored at the hinges for any $C\in\mathbb{R}$. This localized property scales globally when constrained to a specific layer.

\begin{defn}[Global Orthogonal Section]
\label{def:global_section}
For a given transitional layer $\mathcal{L}_l$ and a fixed potential constant $C \in \mathbb{R}$, we define the \emph{global orthogonal section} $\mathcal{M}^{(l)}_C$ as the union of the closures of the localized orthogonal manifolds residing in each orthant of that layer:
\begin{equation}
    \mathcal{M}^{(l)}_C = \bigcup_{\sigma \in \mathcal{L}_l} \bar{\mathcal{M}}_{C,\sigma}.
\end{equation}
Topologically, $\mathcal{M}^{(l)}_C$ resembles a multi-petaled structure, where each localized manifold $\mathcal{M}_{C,\sigma}$ serves as a petal. Because the orthants in $\mathcal{L}_l$ are structurally glued at their reciprocal hinges, and because the logarithmic potential $\Phi_\sigma(v) \to C$ remains consistent at these boundaries, the individual petals are continuously joined at the $(n-2)$-dimensional reciprocal hinges.
\end{defn}

Because the layer $\mathcal{L}_l$ completely partitions the achievable task space $\mathcal{W}$, and each petal $\mathcal{M}_{C,\sigma}$ strictly transverses the fibers of its respective orthant, the orthogonal section $\mathcal{M}^{(l)}_C$ constitutes a complete, global orthogonal section of the entire task-space fiber bundle. Every fiber intersects $\mathcal{M}^{(l)}_C$ orthogonally and exactly once.

\begin{thm}[$C^1$ Smoothness of Global Sections]
\label{thm:section_smoothness}
For any transitional layer $\mathcal{L}_l$ ($l \in \{1, \dots, n-1\}$) and any constant $C \in \mathbb{R}$, the global orthogonal section $\mathcal{M}^{(l)}_C$ is a $C^1$ smooth differential manifold across its interior. Specifically, at any reciprocal hinge joining two adjacent orthants $\mathcal{O}_A, \mathcal{O}_B \in \mathcal{L}_l$, the tangent hyperplanes of the respective local manifolds $\mathcal{M}_{C,A}$ and $\mathcal{M}_{C,B}$ continuously align.
\end{thm}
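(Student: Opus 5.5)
The plan is a local computation near an arbitrary reciprocal hinge, using the fact that the potential~\eqref{eq:log_potential} is a single formula on all of $\mathbb{R}^n\setminus\{v_i=0\}$. Only the signs of $v_i$ change between orthants, and this changes which logarithmic terms enter with a plus or a minus sign.

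Fix two orthants $\mathcal{O}_A,\mathcal{O}_B\in\mathcal{L}_l$ that differ exactly in indices $i$ and $j$ (Definition~\ref{def:traversal_layers}). Without loss of generality, $i\in\mathcal{I}^{in}_A\cap\mathcal{I}^{out}_B$ and $j\in\mathcal{I}^{out}_A\cap\mathcal{I}^{in}_B$. Let $h$ be a point of the reciprocal hinge, so $h_i=h_j=0$ and $h_k\neq 0$ for $k\neq i,j$. I would split the potential as
\[
\Phi(v)=\Phi_{ij}(v_i,v_j)+R(v'),
\]
where $v'$ collects the remaining $n-2$ coordinates. The function $R$ is identical in $\mathcal{O}_A$ and $\mathcal{O}_B$ and is smooth near $h'$, since those signs do not change.

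Using the form~\eqref{eq:potential_in_out}, the level set $\Phi=C$ near $h$ can then be written as follows:
\[
\text{in }\mathcal{O}_A:\ |v_i|^{|b_i^A|}=K(v')\,|v_j|^{|b_j^A|},\qquad \text{in }\mathcal{O}_B:\ |v_j|^{|b_j^A|}=K(v')\,|v_i|^{|b_i^A|},
\]
with $K(v')=e^{C-R(v')}$ smooth and positive. Each petal is therefore a graph over the hinge coordinates $v'$, fibred by a planar curve in the $(v_i,v_j)$ plane through the origin. The two petals live in opposite quadrants of that plane, because both signs flip.

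Next I compute the unit normal $\nabla\Phi/\|\nabla\Phi\|$ with $\nabla\Phi=D(v)^{-1}b^A$, exactly as in Corollary~\ref{cor:geometric_alignment}. As $v\to h$ from either side, the $i,j$ components diverge and the $v'$ components stay bounded. So the limiting normal lies in $\operatorname{span}\{e_i,e_j\}$ on both sides, and both tangent spaces contain the $(n-2)$-dimensional hinge directions. What remains is to match the single in-plane tangent direction. Set $p=|b_j^A|/|b_i^A|$. In $\mathcal{O}_A$ the curve is $|v_i|\propto|v_j|^{p}$, and in $\mathcal{O}_B$ it is $|v_i|\propto|v_j|^{p}$ as well: the relation has the same exponent on both sides, merely traversed into the opposite quadrant. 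Consequently:
\begin{itemize}
\item if $p>1$, both petals are tangent to the $e_j$ axis at the hinge;
\item if $p<1$, both petals are tangent to the $e_i$ axis.
\end{itemize}
In either case the limiting tangent hyperplane is the same, namely $\operatorname{span}\{e_k:k\neq i,j\}\oplus\operatorname{span}\{e_j\}$ or $\oplus\operatorname{span}\{e_i\}$, respectively. The glued set is then locally the graph of a $C^1$ function: for $p>1$, $v_i$ is a $C^1$ function of $(v_j,v')$. To conclude, I would check continuity of the normal field up to the hinge. This follows because the ratio of the $i,j$ gradient components, $(b_i^A/|v_i|):(b_j^A/|v_j|)$, tends to $0$ or $\infty$ uniformly in $v'$ near $h'$, and because $K$ is smooth.

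The main obstacle is the borderline case $|b_i^A|=|b_j^A|$, i.e.\ $p=1$. There the two petals are straight half-lines in the $(v_i,v_j)$ plane, with slopes $K^{1/|b_i^A|}$ and $K^{-1/|b_i^A|}$. These coincide only where $K(v')=1$, i.e.\ $R(v')=C$, so the tangent alignment can fail and a crease may appear at the hinge. I would first attempt to exclude this case by a genericity argument, taking the magnitudes $|b_k^A|$ pairwise distinct. Failing that, I would state the theorem under that extra hypothesis and flag the equal-magnitude hinges as, in general, only $C^0$ joints. A secondary point to handle is points where more than two coordinates vanish, i.e.\ intersections of several hinges. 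There I would restrict the $C^1$ claim to the relative interior of each hinge stratum, since that is what ``across its interior'' should mean.
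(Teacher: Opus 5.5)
Your approach is the paper's: split off the $(i,j)$ terms of $\Phi$, note that the remainder ($R(v')$ for you, $S(p)$ in the paper) is the same function on $\mathcal{O}_A$ and $\mathcal{O}_B$, observe that the limiting normal lies in $\operatorname{span}\{e_i,e_j\}$, and read off the in-plane direction from the exponent $|b_j^A|/|b_i^A|$ via the level-set relation. For $|b_i^A|\neq|b_j^A|$ your conclusion agrees with the paper's. Your explicit graph argument ($v_i$ as a function of $(v_j,v')$, with derivatives vanishing at the hinge) establishes $C^1$ regularity more directly than the paper does; the paper infers it only from coinciding limiting normals. The paper also swaps the axes in its asymmetric cases: when $R_A\to 0$ the normal tends to $e_i$, so the petals are tangent to $e_j$, as you state.

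In the balanced case $|b_i^A|=|b_j^A|$ you are right and the paper's proof is wrong. The paper argues that $R_B$ has the same limit as $R_A$ because $S(p)$ and $C$ agree on both sides. In doing so it tacitly reuses the $\mathcal{O}_A$ relation $v_i=v_j^{\gamma/\alpha}e^{(C-S(p))/\alpha}$ inside $\mathcal{O}_B$. In $\mathcal{O}_B$ the level set actually reads $-\alpha\ln|v_i|+\gamma\ln|v_j|+S=C$. For $\gamma=\alpha$ this gives $|v_i|/|v_j|\to e^{-(C-S(p))/\alpha}$, the reciprocal of the $\mathcal{O}_A$ value. So the one-sided tangent planes coincide only at hinge points with $S(p)=C$, which is exactly your condition $K(v')=1$.

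A minimal counterexample: take $n=2$, $A=[1,\,-1]$, $b^A=(1,1)/\sqrt2$. Quadrants II and IV form $\mathcal{L}_1$, with the reciprocal hinge at the origin. There the level sets $\Phi=C$ are the rays $v_2=-e^{\sqrt2 C}v_1$ for $v_1<0$ and $v_2=-e^{-\sqrt2 C}v_1$ for $v_1>0$. These meet at a corner for every $C\neq 0$.

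The statement therefore cannot be proved as written, and a genericity argument cannot rescue it for a fixed $A$. Your fallback is the correct repair: assume $|b_i^A|\neq|b_j^A|$ for the index pair of each reciprocal hinge, and otherwise claim only a $C^0$ joint. The same computation shows that the remark after the theorem has it backwards. In the balanced case, a $C^1$ joint at $p$ requires $C_A+C_B=2S(p)$, not $C_A=C_B$, and for $n\ge 3$ no constants achieve this along the whole hinge because $S$ varies there. Your restriction to the relative interior of hinge strata is also warranted, since the paper's argument only treats points where exactly $v_i=v_j=0$.
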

\begin{proof}
Let $p$ be a point strictly residing on a reciprocal hinge that forms the shared boundary between two adjacent orthants $\mathcal{O}_A$ and $\mathcal{O}_B$ within $\mathcal{L}_l$. By Definition~\ref{def:traversal_layers}, $\mathcal{O}_A$ and $\mathcal{O}_B$ differ in exactly two coordinate signs. Let these indices be $i$ and $j$, such that $\sigma_i^B = -\sigma_i^A$ and $\sigma_j^B = -\sigma_j^A$. 

Because $p$ lies on this reciprocal hinge, the kinetic coordinates approach zero exactly at these indices: $v_i \to 0$ and $v_j \to 0$, while all other spatial coordinates $v_k$ (for $k \neq i,j$) remain strictly non-zero and finite. To simplify notation, let $\alpha = \sigma_i^A b_i^A$ and $\beta = \sigma_j^A b_j^A$. Because one index represents an entry face and the other an exit face, $\alpha$ and $\beta$ possess opposite signs. Without loss of generality, assume $\alpha > 0$ and $\beta < 0$, and define $\gamma = -\beta > 0$.

Consider the normal vector to the petal $\mathcal{M}_{C,A}$ within the interior of $\mathcal{O}_A$, given by the gradient of the potential function:
$$
    \nabla \Phi_A(v) = \left[ \dots, \frac{\alpha}{v_i}, \dots, \frac{-\gamma}{v_j}, \dots \right]^T.
$$
As $v \to p$, the components $i$ and $j$ diverge to infinity, while all other components remain finite. Consequently, the unit normal vector $\hat{n}_A = \frac{\nabla \Phi_A}{\|\nabla \Phi_A\|}$ collapses strictly into the two-dimensional $v_i-v_j$ plane. The geometric orientation of this normal vector within that plane is defined by the ratio of its components:
$$
    R_A = \frac{(\nabla \Phi_A)_j}{(\nabla \Phi_A)_i} = -\frac{\gamma}{\alpha} \left( \frac{v_i}{v_j} \right).
$$
The asymptotic behavior of the ratio $(v_i/v_j)$ is governed by the level set constraint $\Phi_A(v) = C$. Let $S(p)$ be the finite sum of the remaining potential terms evaluated at $p$. The constraint requires $\alpha \ln(v_i) - \gamma \ln(v_j) + S(p) = C$. Solving for $v_i$ yields $v_i = v_j^{\gamma/\alpha} e^{(C-S(p))/\alpha}$. Substituting this into the ratio $R_A$ gives:
$$
    R_A = -\frac{\gamma}{\alpha} v_j^{\left(\frac{\gamma}{\alpha} - 1\right)} e^{\frac{C-S(p)}{\alpha}}.
$$
The limit of $R_A$ as $v_j \to 0$ depends strictly on the system parameters $\alpha$ and $\gamma$:
\begin{itemize}
    \item \emph{Asymmetric Case I ($\gamma > \alpha$):} The exponent $(\gamma/\alpha - 1)$ is positive. As $v_j \to 0$, $R_A \to 0$. The normal vector fully aligns with the $v_j$ axis, independent of $C$.
    \item \emph{Asymmetric Case II ($\gamma < \alpha$):} The exponent is negative. As $v_j \to 0$, $R_A \to -\infty$. The normal vector fully aligns with the $v_i$ axis, independent of $C$.
    \item \emph{Symmetric Case ($\gamma = \alpha$):} The exponent is zero. As $v_j \to 0$, $R_A \to -e^{(C-S(p))/\alpha}$. The normal vector converges to a specific diagonal angle dictated by the level set constant $C$.
\end{itemize}

Now consider the adjacent petal $\mathcal{M}_{C,B}$ within $\mathcal{O}_B$. Its potential gradient components for $i$ and $j$ negate exactly: $(\nabla \Phi_B)_i = -\alpha/v_i$ and $(\nabla \Phi_B)_j = \gamma/v_j$. The planar ratio evaluates to $R_B = \frac{\gamma/v_j}{-\alpha/v_i} \equiv R_A$. Because the finite remainder $S(p)$ relies only on the $n-2$ coordinates whose signs are identical across both orthants, $S(p)$ is invariant. Because the level-set constant $C$ is identically enforced, the limit of $R_B$ evaluates to the exact same geometrically parallel normal direction as $R_A$ across all three parametric cases.

Because the dominant, infinite components of $\nabla \Phi_B$ are the exact algebraic negations of $\nabla \Phi_A$, and their planar ratios match perfectly, their unit normal vectors in the limit satisfy $\lim_{v \to p} \hat{n}_B(v) = - \lim_{v \to p} \hat{n}_A(v)$. A unit normal vector and its exact negation define the identical unoriented one-dimensional normal space. Their orthogonal complements---the tangent hyperplanes $T_p\mathcal{M}_{C,A}$ and $T_p\mathcal{M}_{C,B}$---are therefore geometrically identical. Thus, the global section $\mathcal{M}^{(l)}_C$ connects smoothly at the reciprocal hinges, guaranteeing $C^1$ continuity.
\end{proof}

 \begin{rem}[Topological Necessity of a Global Constant $C$]
 The proof of Theorem~\ref{thm:section_smoothness} reveals a geometric nuance: in the asymmetric cases ($|\alpha| \neq |\beta|$), the explosive divergence of the gradient forces the tangent plane to align with the bounding axes independently of the level set constant $C$. Geometrically, this implies that one could artificially construct a piecewise manifold by gluing a local petal $\mathcal{M}_{C_A,A}$ to a neighboring petal $\mathcal{M}_{C_B,B}$ (where $C_A \neq C_B$), and in asymmetric conditions, the resulting surface might still retain $C^1$ smoothness.

 However, enforcing a uniform global constant $C$ across the entire layer $\mathcal{L}_l$ is topologically mandatory. First, for parametric robustness, as demonstrated in the symmetric case ($|\alpha| = |\beta|$), if the kinetic parameters of the crossing variables are balanced, the tangent angle becomes strictly reliant on $C$. Gluing disparate $C$ values in this regime would result in a sharp transversal kink, destroying $C^1$ continuity. Second, the global manifold $\mathcal{M}^{(l)}_C$ represents an isopotential barrier to the transversal fiber flow. Enforcing $C_A = C_B = C$ preserves the continuous physical dynamics of the configuration space. Sweeping the single constant $C \in (-\infty, \infty)$ ensures that the global manifolds cleanly foliate the entire transitional layer $\mathcal{L}_l$, creating a continuous, non-intersecting (despite being tangent at the reciprocal hinges), one-to-one mapping of the global task space.
 \end{rem}

A visual demonstration of some of the main concepts introduced in this section is shown in Figs.~\ref{fig:2d_cases} and \ref{fig:3d_cases}. While these figures refer unavoidably to the low-dimensional cases $n=2$ and $n=3$, they are instrumental for building an intuition for the geometry developed so far, which remains valid in any higher dimension $n\in \mathbb{N}$.

 \begin{figure}[t]
     \centering
     \hfill\begin{subfigure}{0.49\columnwidth}
         \centering
         \includegraphics[width=0.99\linewidth]{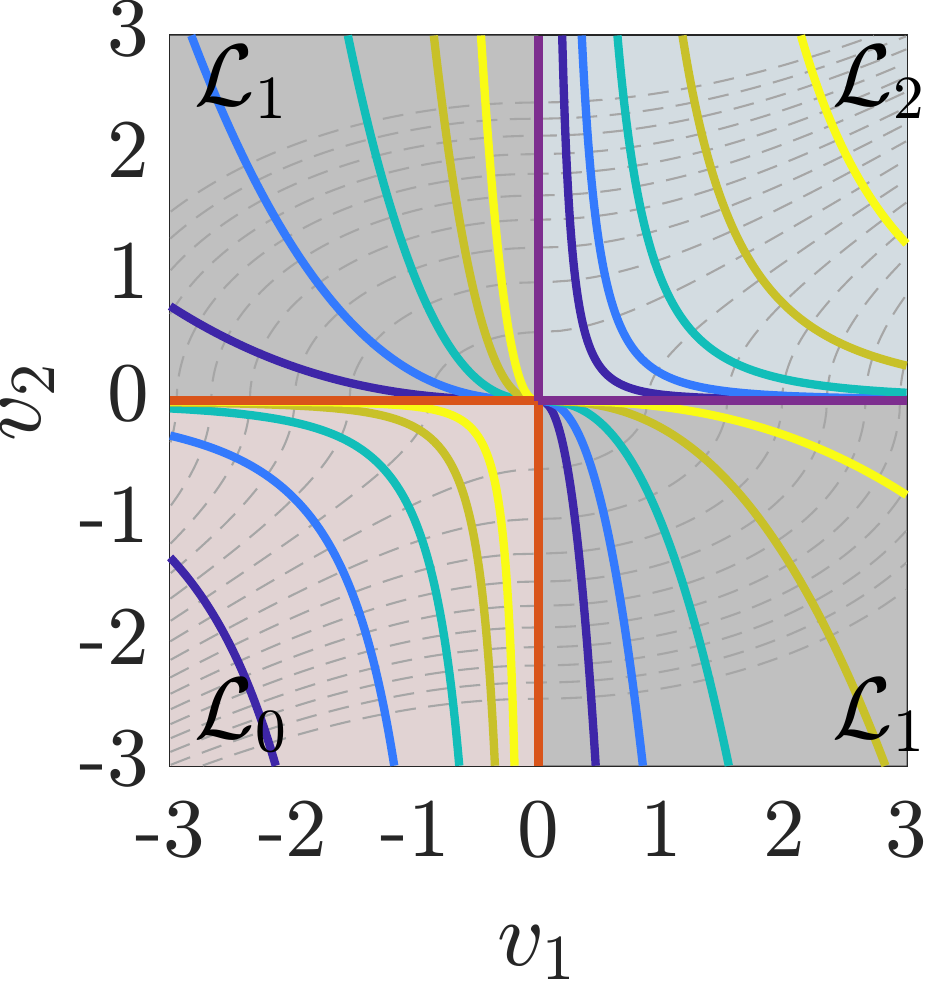}
         \caption{Asymmetr. Case ($|\alpha| \neq |\beta|$).}
         \label{fig:2d_asym}
     \end{subfigure}\hfill
     \begin{subfigure}{0.49\columnwidth}
         \centering
         \includegraphics[width=0.99\linewidth]{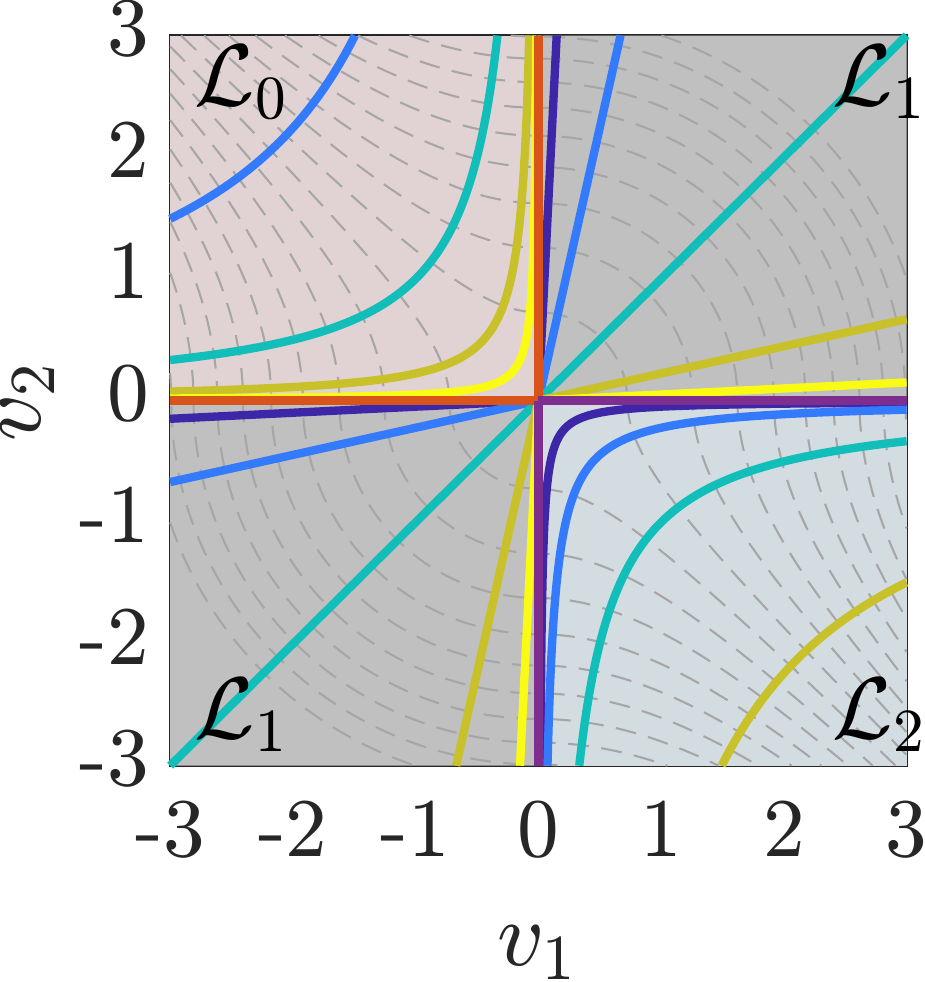}
         \caption{Balanced Case ($|\alpha| = |\beta|$).}
         \label{fig:2d_bal}
     \end{subfigure}\hfill~
    
     \caption{Global orthogonal sections and layer partitioning in a 2D kinetic space ($n=2$). The phase space is stratified into topological layers $\mathcal{L}_0, \mathcal{L}_1,$ and $\mathcal{L}_2$ (background shading). In $\mathcal{L}_1$, the fibers (dashed gray) flow strictly from the entry portals (orange) to the exit portals (purple). The global sections $\mathcal{M}^{(1)}_C$ act as complete orthogonal transversal barriers, seamlessly gluing the local orthant petals across the reciprocal hinge (the origin) to form a $C^1$ smooth global manifold.}
     \label{fig:2d_cases}
 \end{figure}

 \begin{figure*}[t]
     \centering
    
     \begin{subfigure}{0.24\textwidth}
         \centering
         \includegraphics[width=\linewidth]{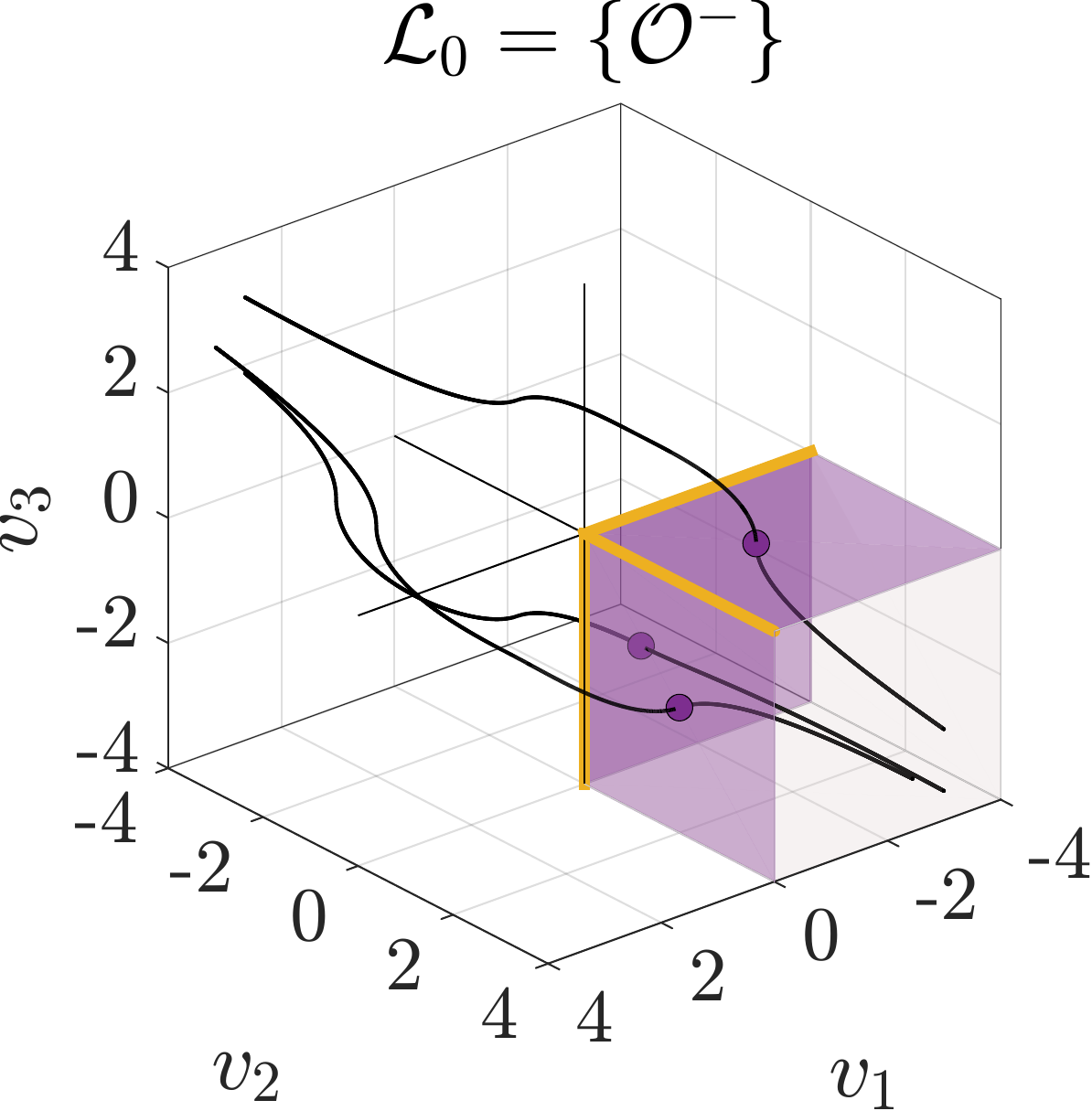}
         \caption{$\mathcal{L}_0$}
         \label{fig:3d_l0}
     \end{subfigure}\hfill
     \begin{subfigure}{0.24\textwidth}
         \centering
         \includegraphics[width=\linewidth]{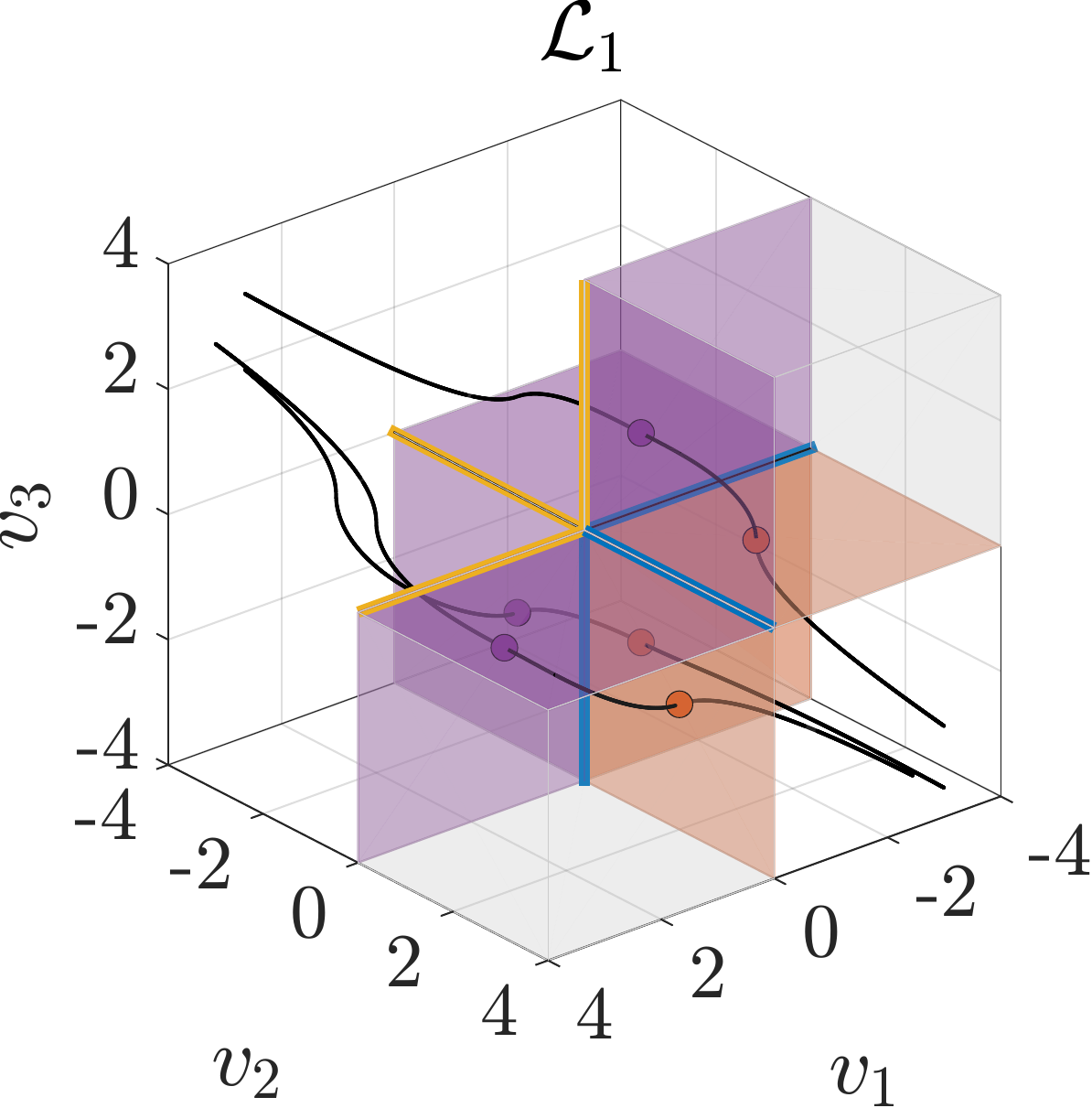}
         \caption{$\mathcal{L}_1$}
         \label{fig:3d_l1}
     \end{subfigure}\hfill
     \begin{subfigure}{0.24\textwidth}
         \centering
         \includegraphics[width=\linewidth]{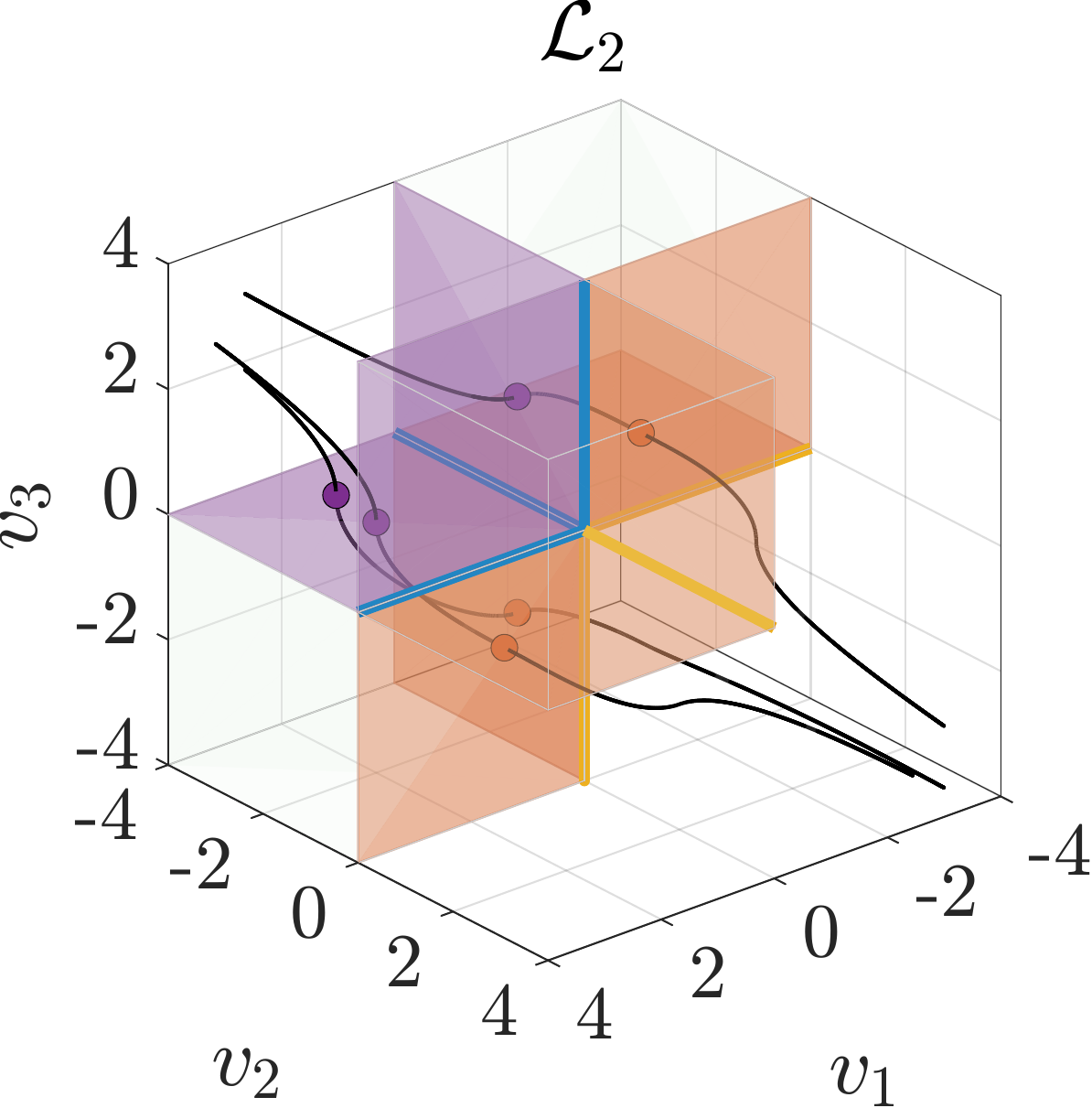}
         \caption{$\mathcal{L}_2$}
         \label{fig:3d_l2}
     \end{subfigure}\hfill
     \begin{subfigure}{0.24\textwidth}
         \centering
         \includegraphics[width=\linewidth]{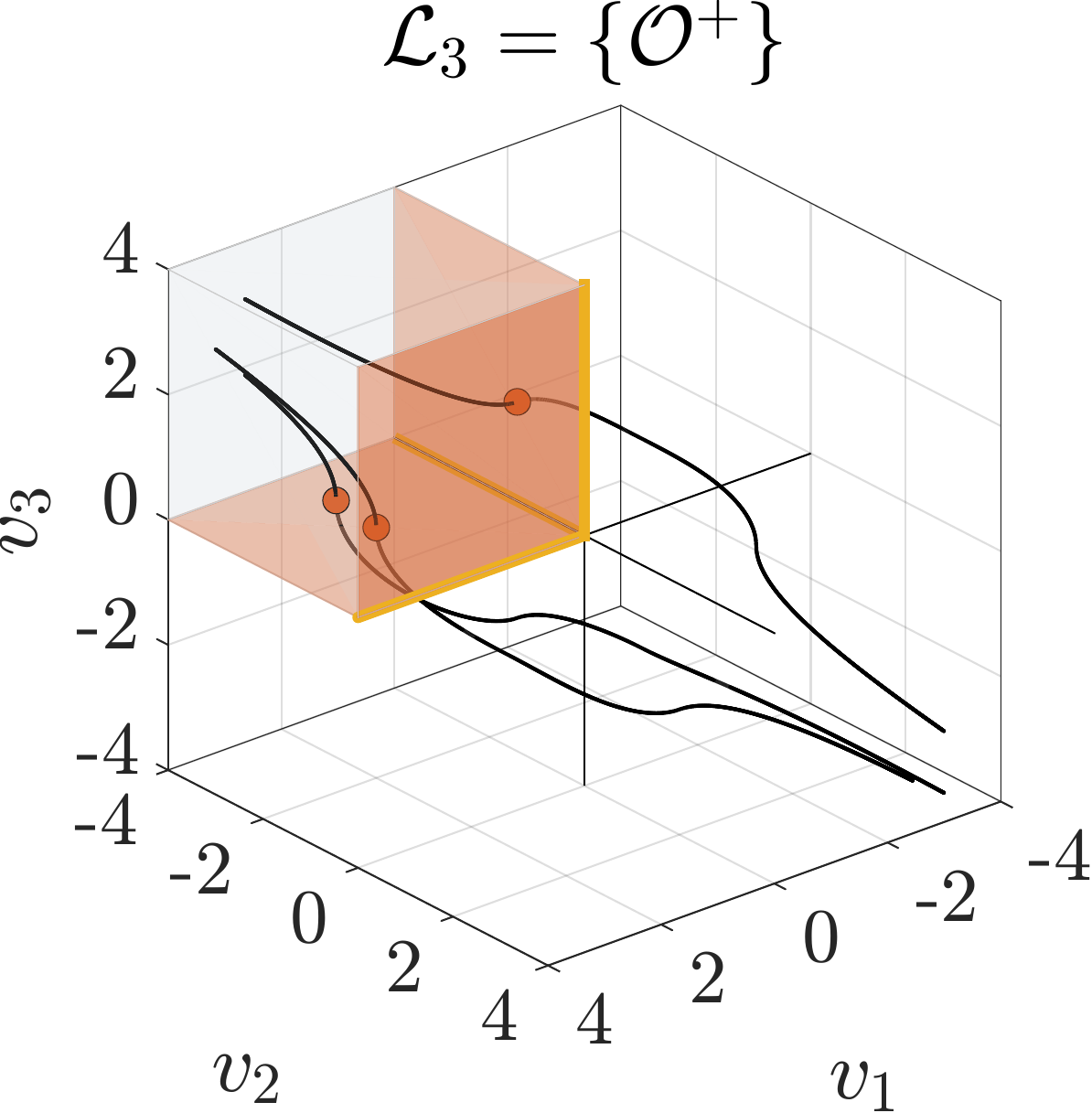}
         \caption{$\mathcal{L}_3$}
         \label{fig:3d_l3}
     \end{subfigure}
    
     \vspace{0.5cm} 
    
     \begin{subfigure}{0.24\textwidth}
         \centering
         \includegraphics[width=\linewidth]{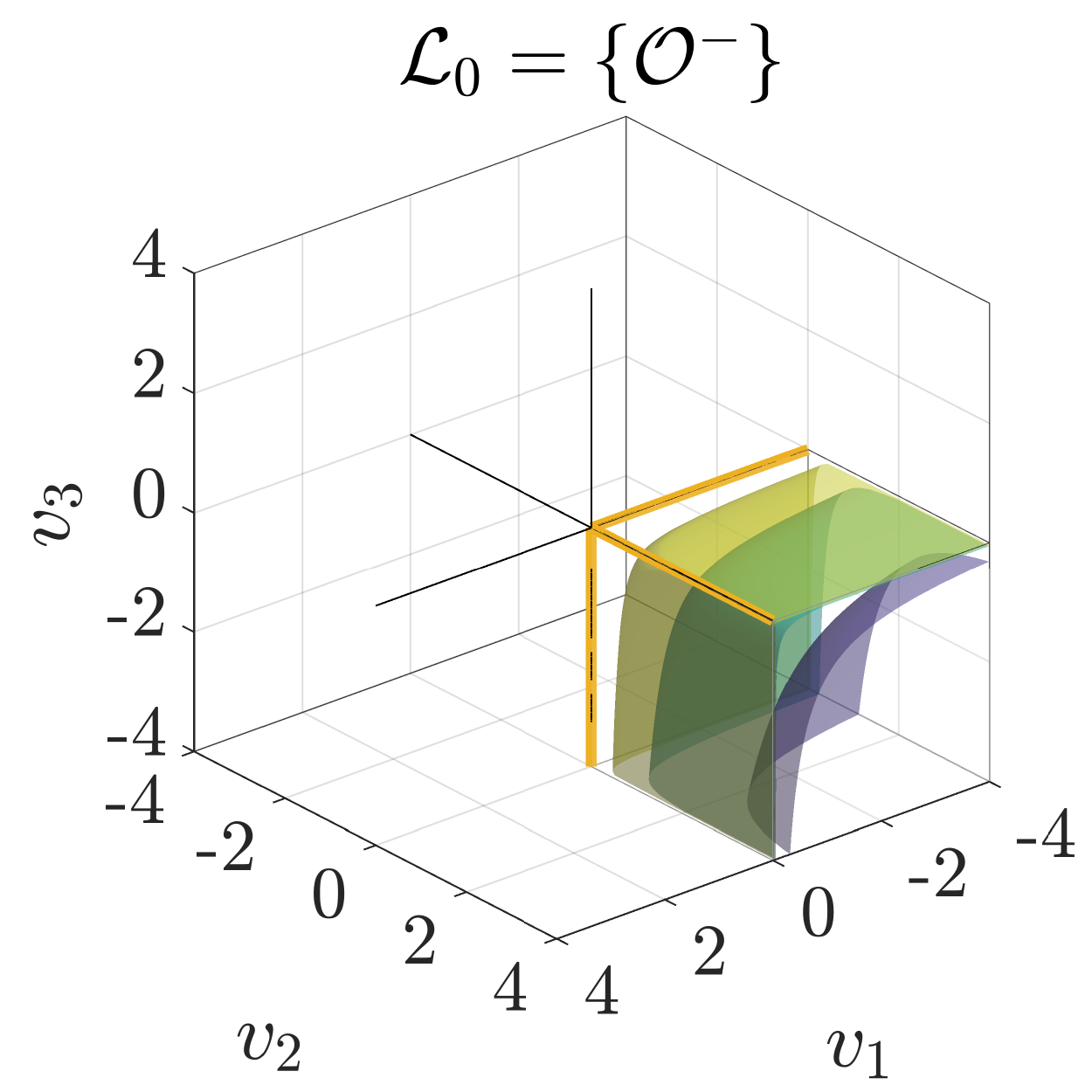}
         \caption{Sections in $\mathcal{L}_0$}
         \label{fig:3d_sec_l0}
     \end{subfigure}\hfill
     \begin{subfigure}{0.24\textwidth}
         \centering
         \includegraphics[width=\linewidth]{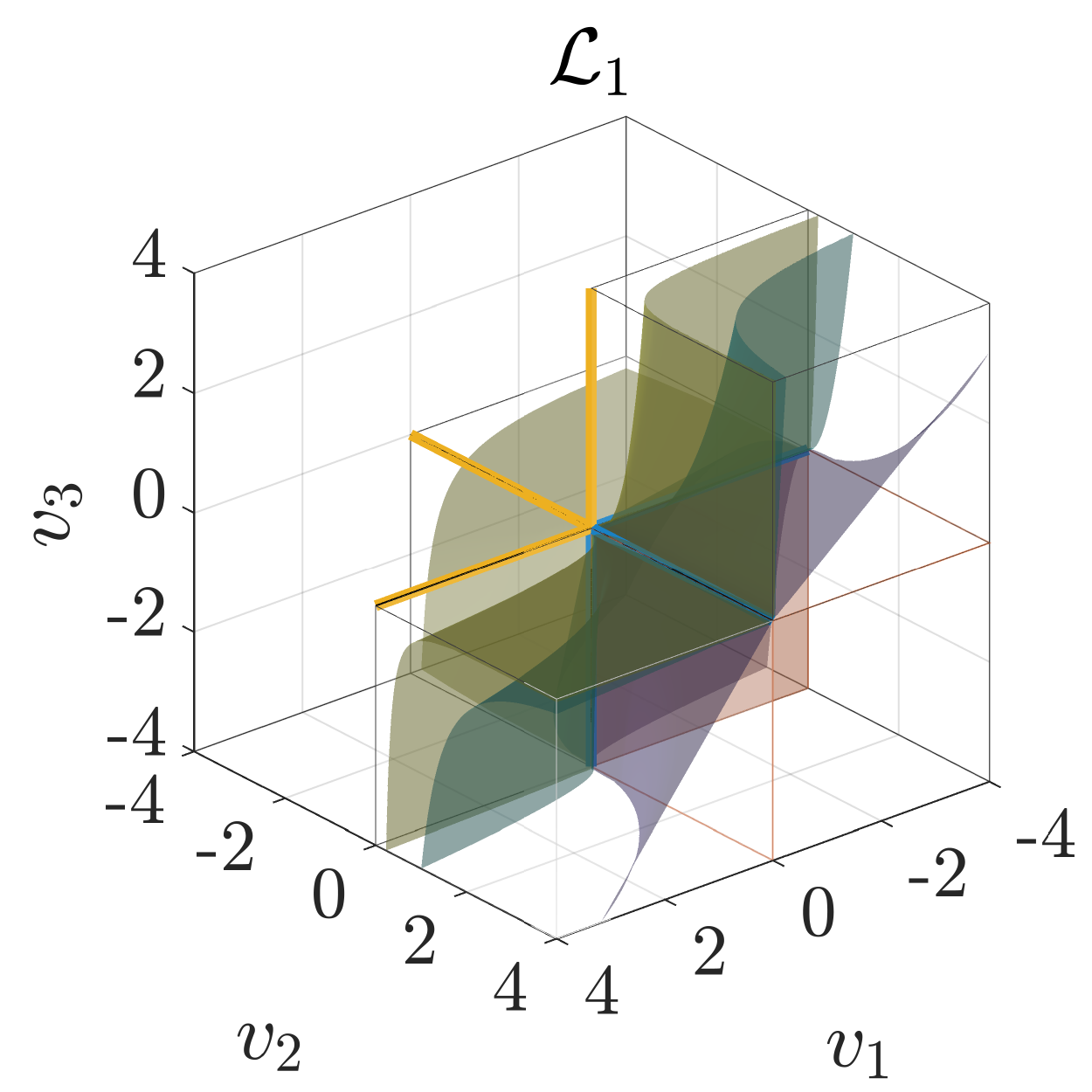}
         \caption{Sections in $\mathcal{L}_1$}
         \label{fig:3d_sec_l1}
     \end{subfigure}\hfill
     \begin{subfigure}{0.24\textwidth}
         \centering
         \includegraphics[width=\linewidth]{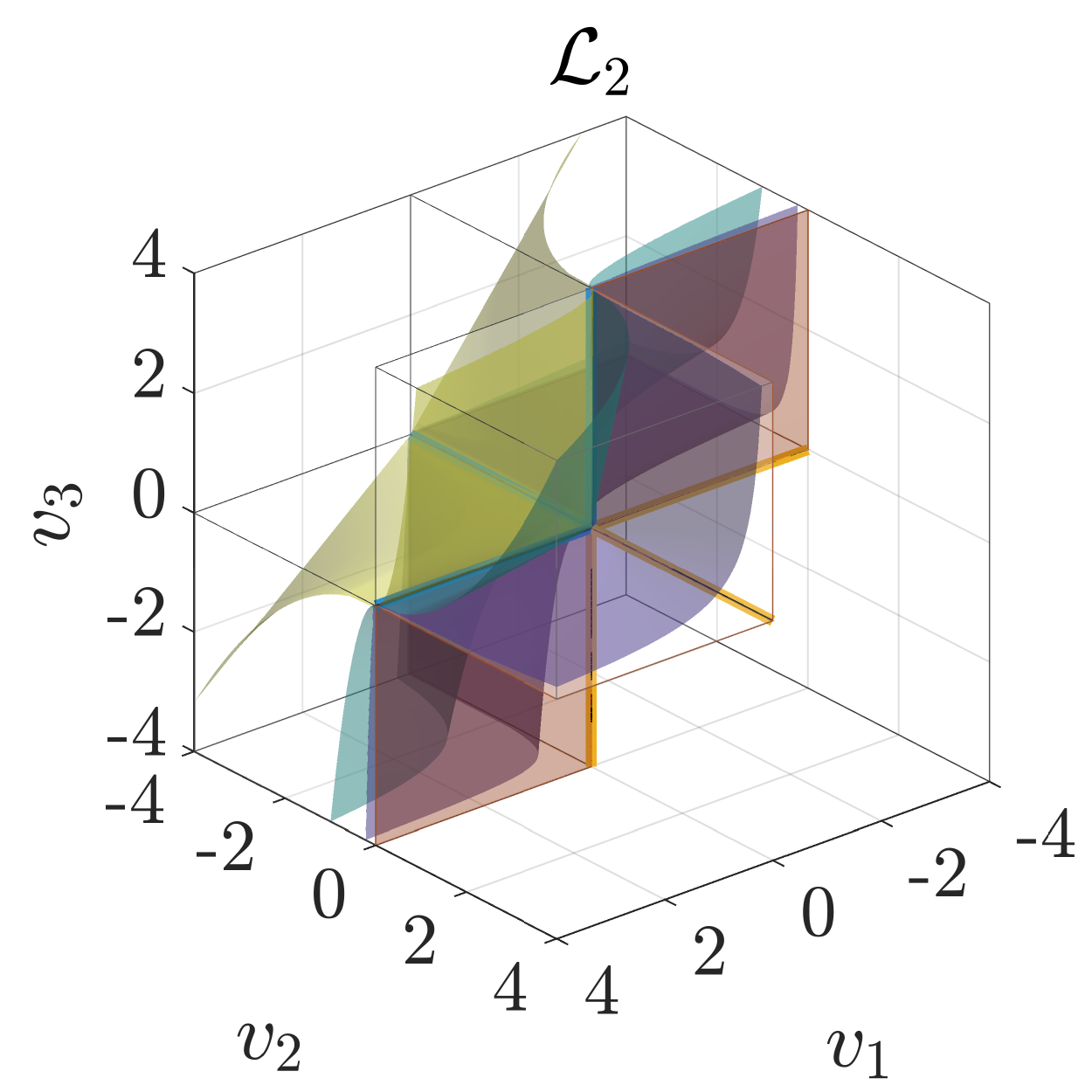}
         \caption{Sections in $\mathcal{L}_2$}
         \label{fig:3d_sec_l2}
     \end{subfigure}\hfill
     \begin{subfigure}{0.24\textwidth}
         \centering
         \includegraphics[width=\linewidth]{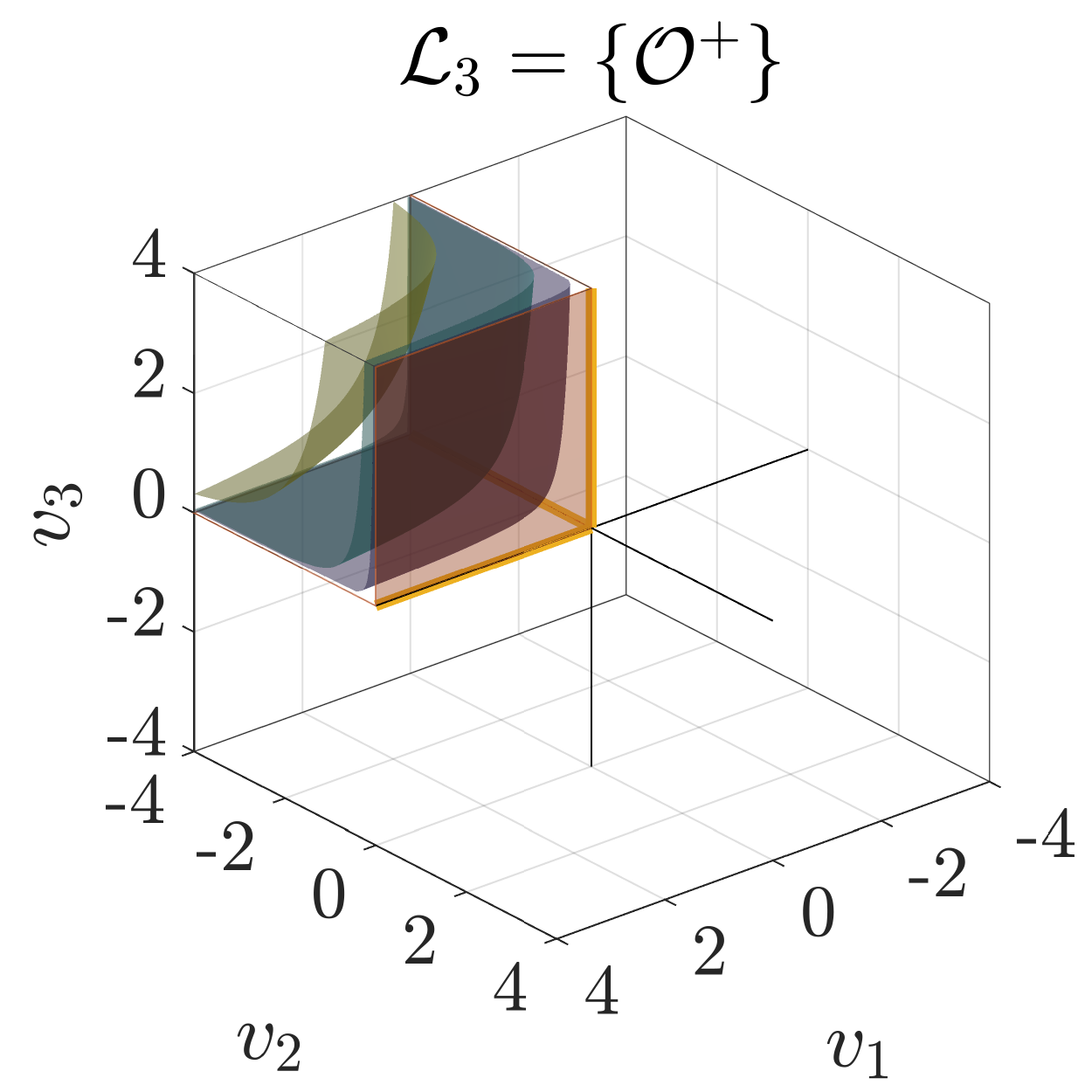}
         \caption{Sections in $\mathcal{L}_3$}
         \label{fig:3d_sec_l3}
     \end{subfigure}
     
     \caption{Global stratification and orthogonal sections of a 3D kinetic space ($n=3$). (Top) The sequential layer partition from the negative extremal layer $\mathcal{L}_0$ to the positive extremal layer $\mathcal{L}_3$. Exit portals are indicated in purple, entry portals in orange, folds by blue semi-axes, and hinges by gold semi-axes. Representative reconfiguration fibers (black curves) strictly traverse this sequence, entering at orange markers and exiting at purple markers. (Bottom) The corresponding global orthogonal sections $\mathcal{M}^{(l)}_C$ foliating each layer, evaluated at $C \in \{-3, 0, 3\}$ (translucent colored surfaces). Within the transitional layers $\mathcal{L}_1$ and $\mathcal{L}_2$, the localized orthogonal manifolds are structurally glued at the hinges to maintain exact $C^1$ smoothness.}
 \label{fig:3d_cases}
 \end{figure*}
 
 \begin{figure*}[htbp]
     \centering
    
     \includegraphics[width=0.24\textwidth]{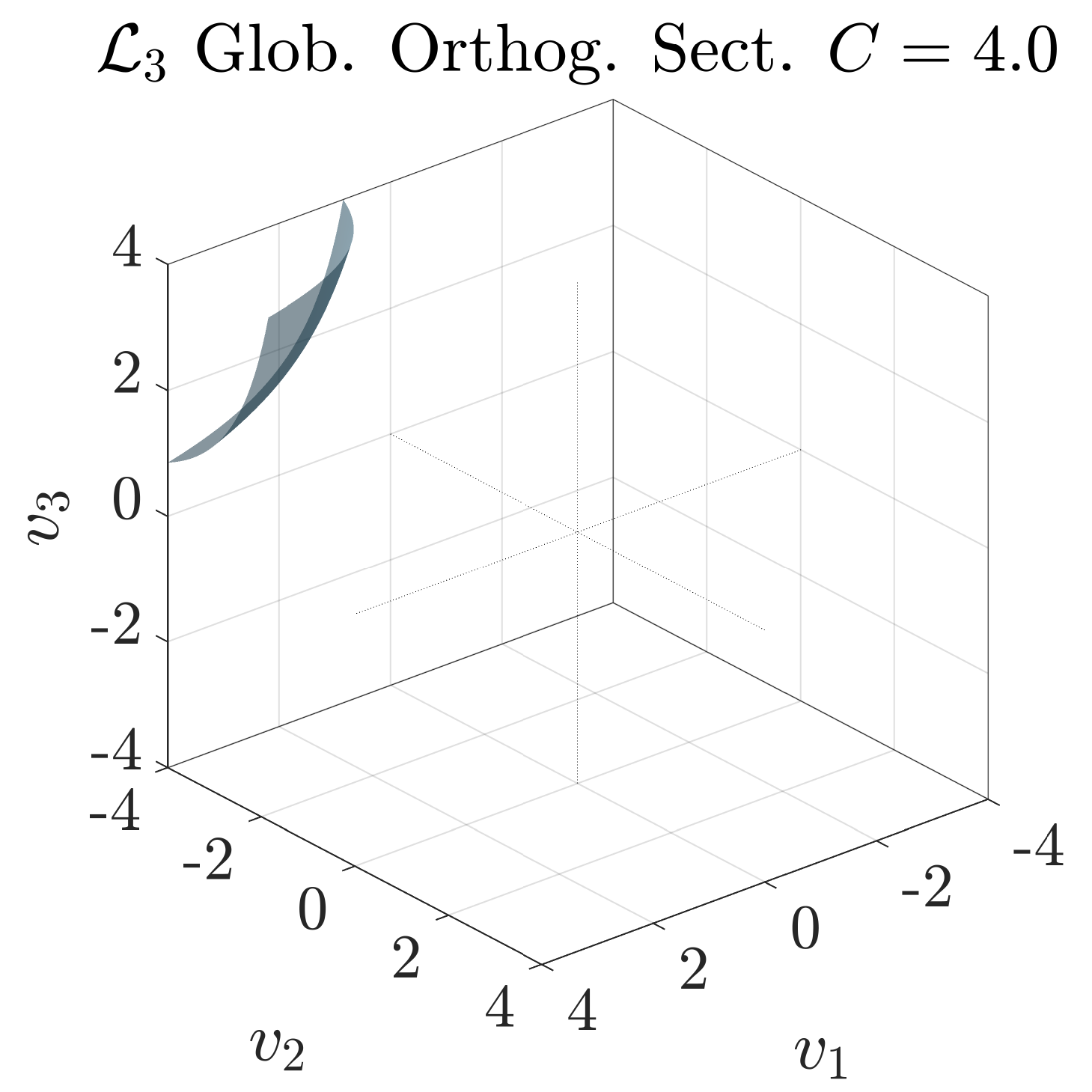}\hfill
     \includegraphics[width=0.24\textwidth]{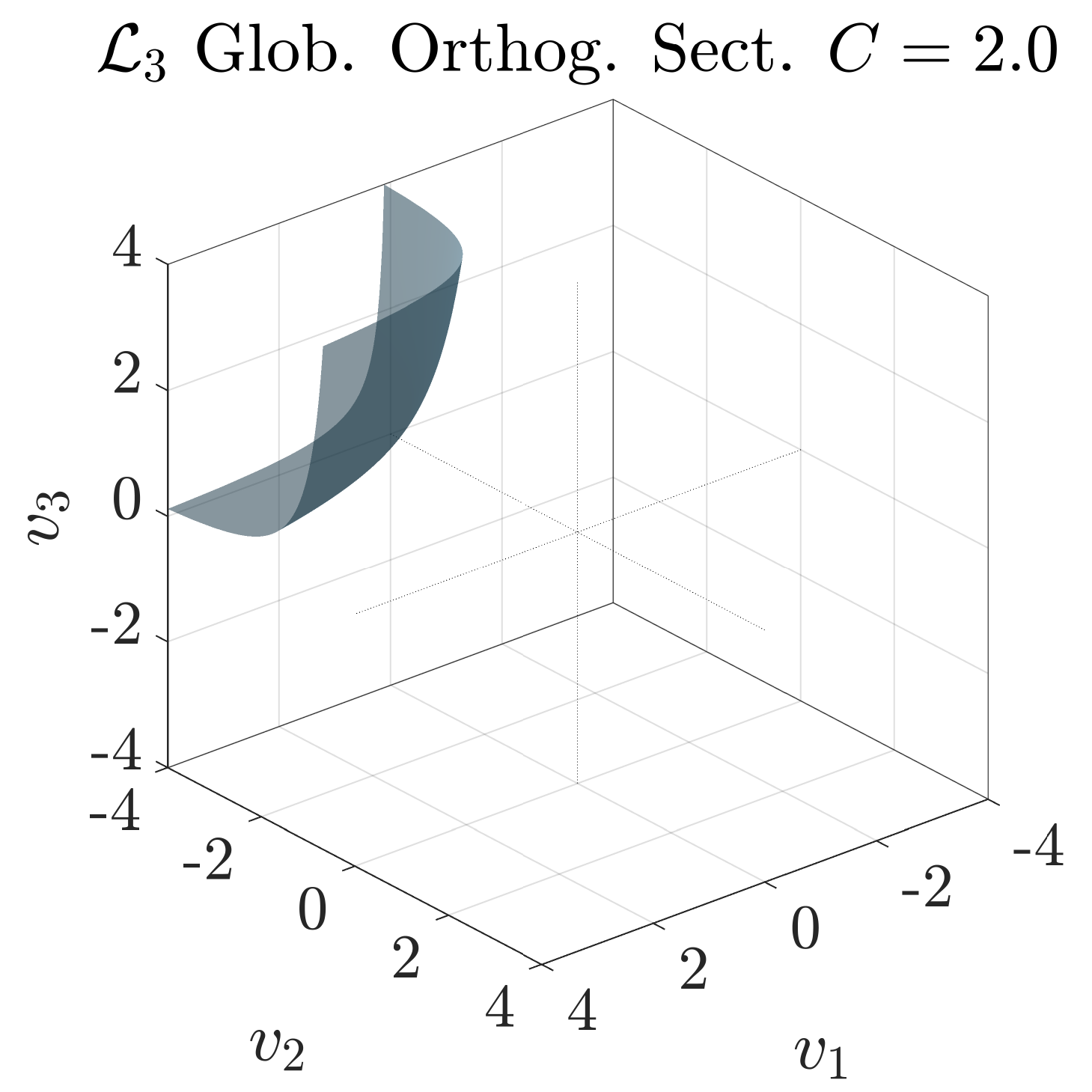}\hfill
     \includegraphics[width=0.24\textwidth]{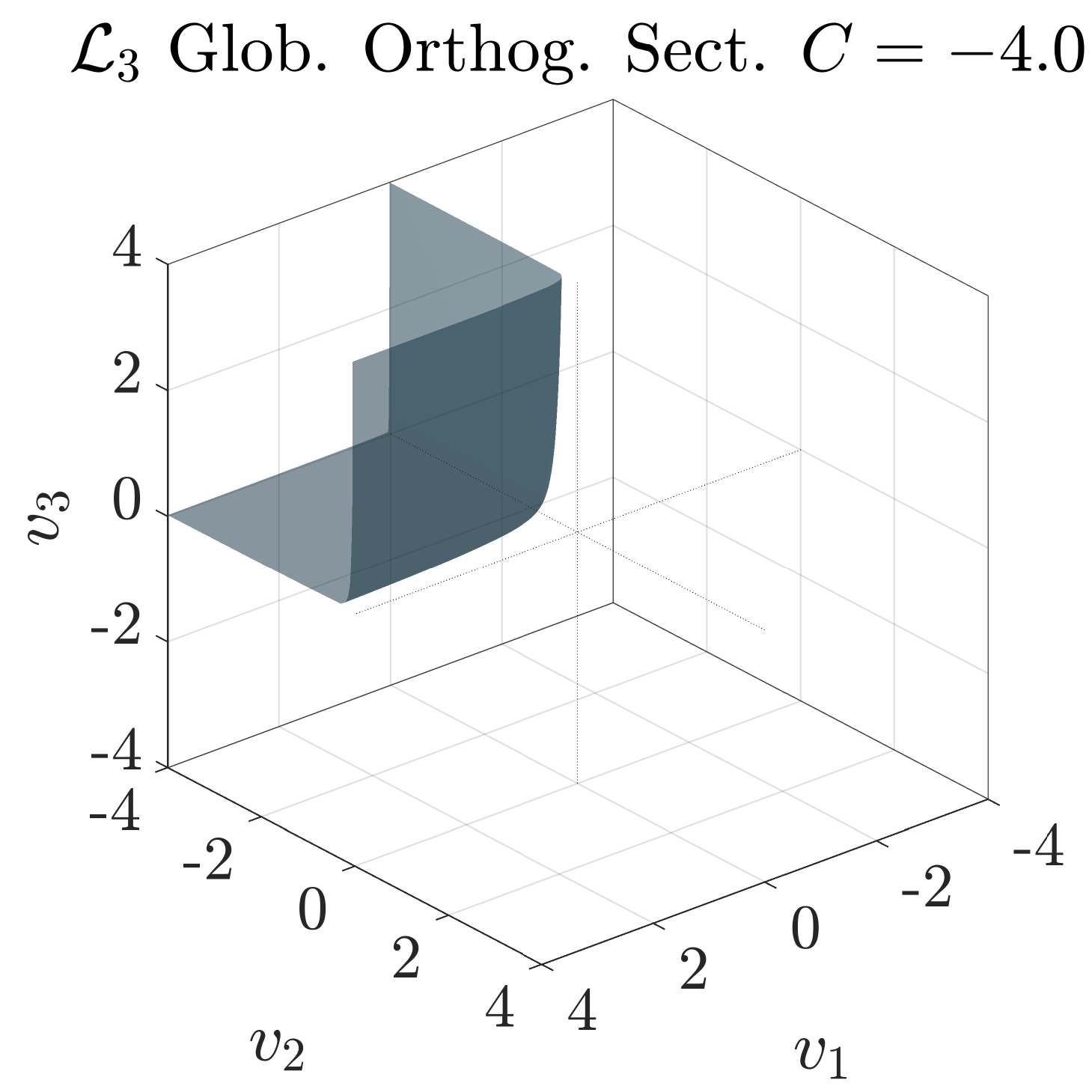}\hfill
     \includegraphics[width=0.24\textwidth]{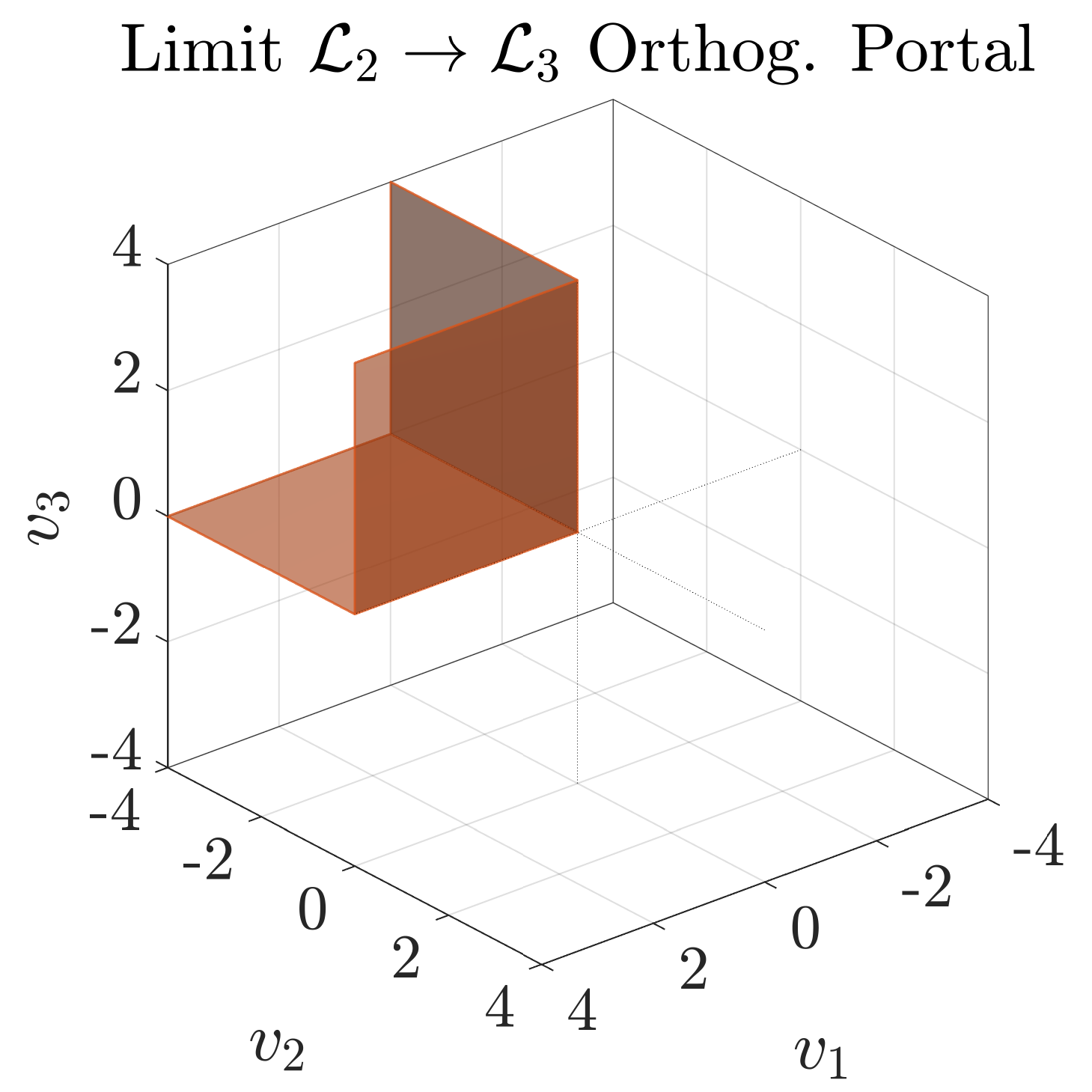}\\[1ex]
    
     \includegraphics[width=0.24\textwidth]{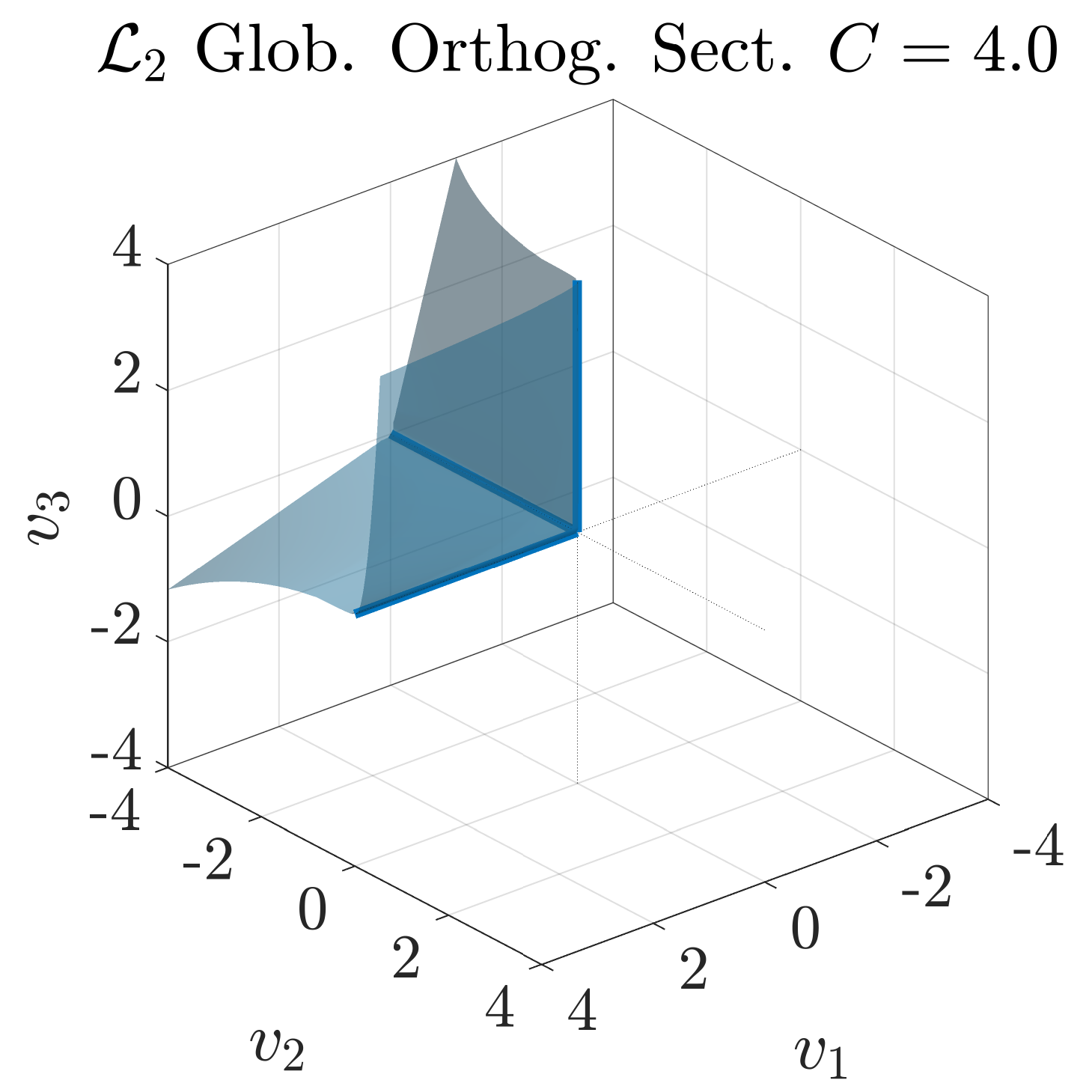}\hfill
     \includegraphics[width=0.24\textwidth]{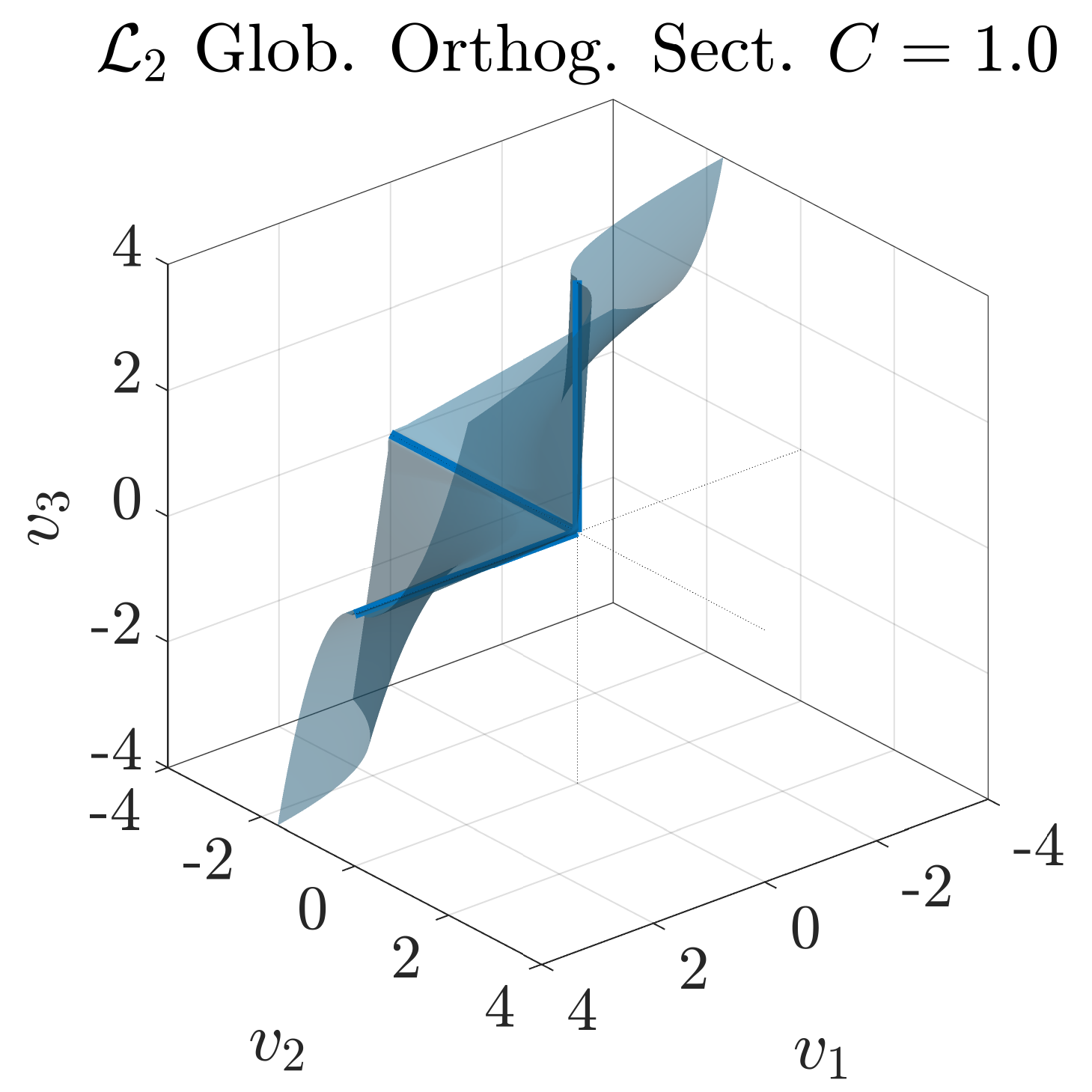}\hfill
     \includegraphics[width=0.24\textwidth]{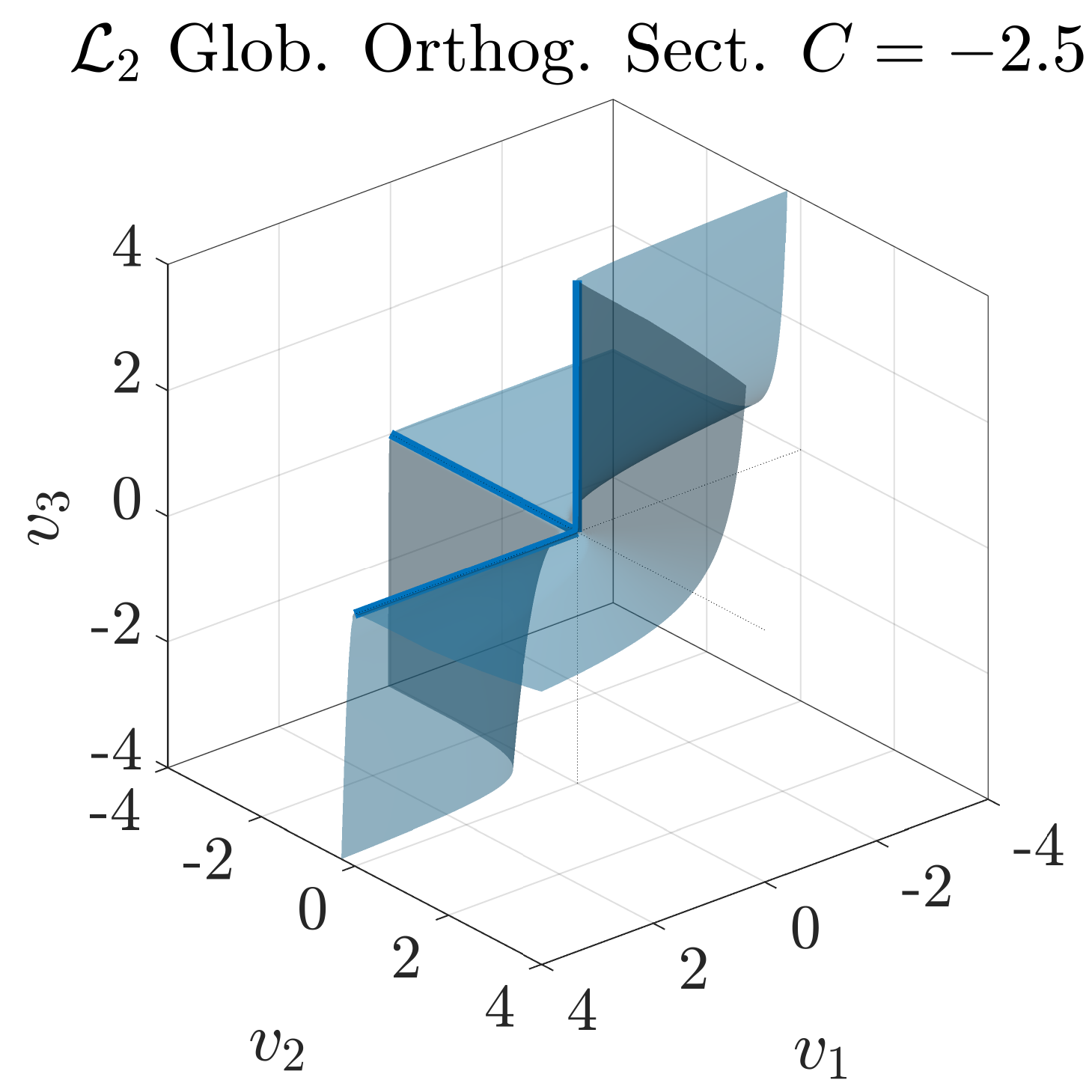}\hfill
     \includegraphics[width=0.24\textwidth]{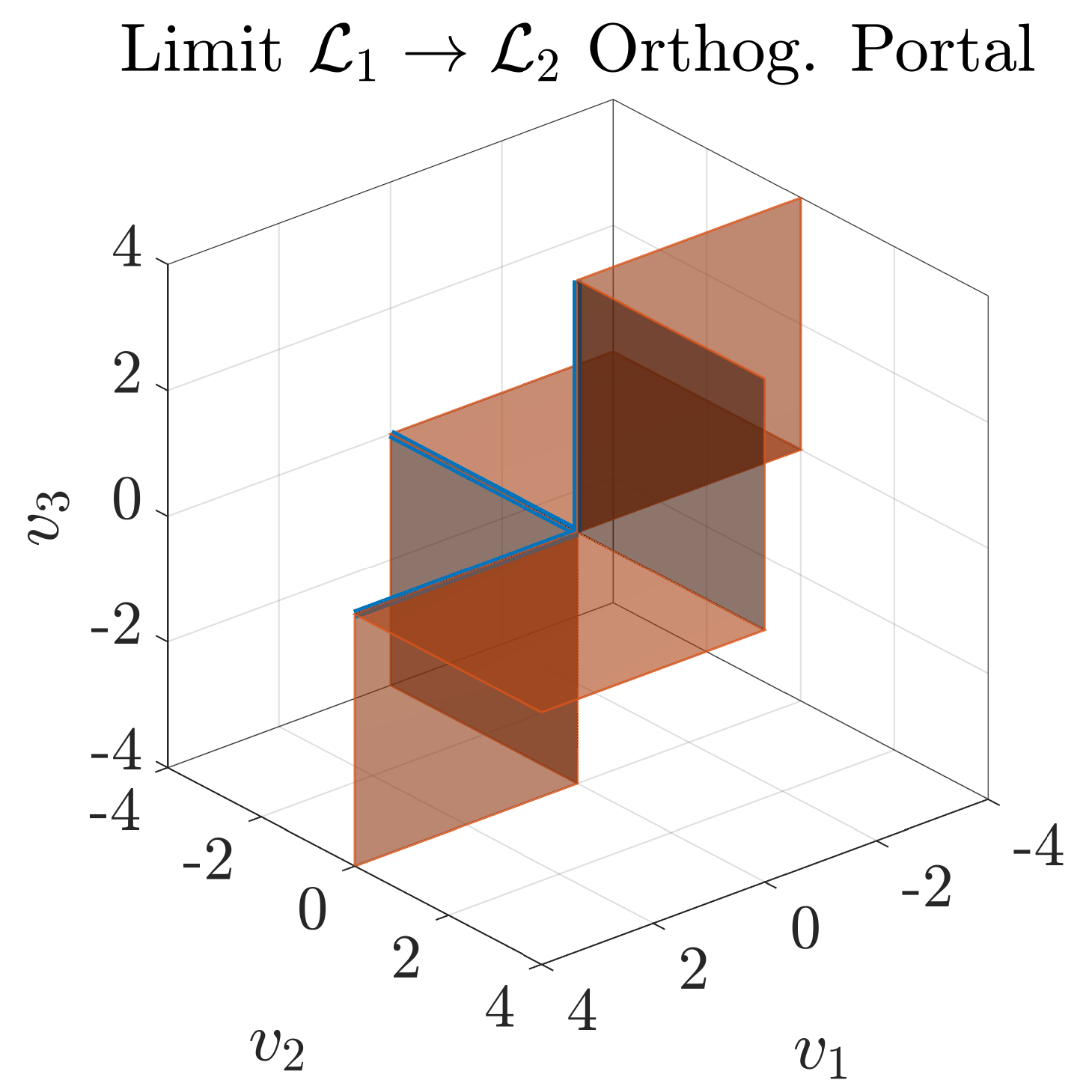}\\[1ex]
    
     \includegraphics[width=0.24\textwidth]{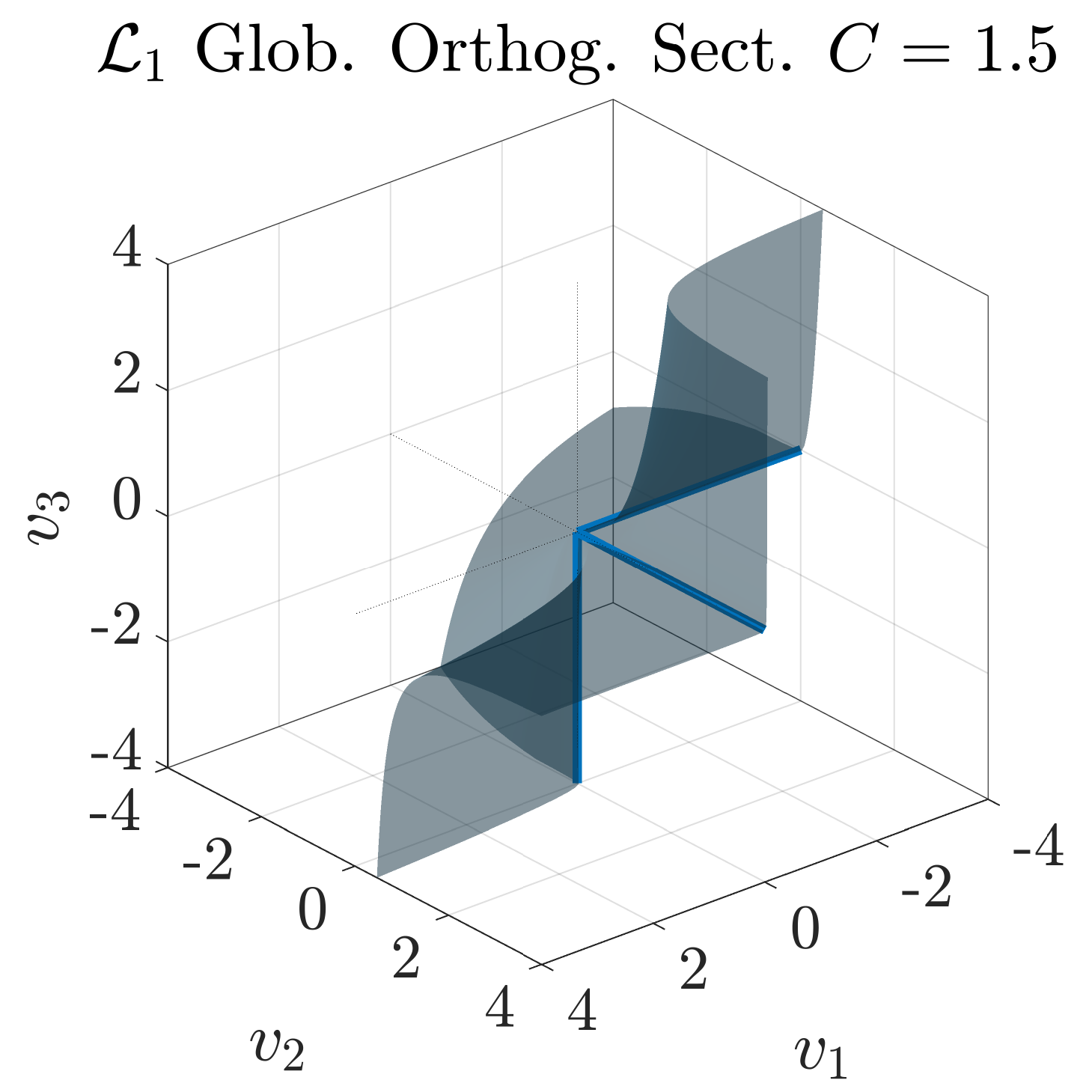}\hfill
     \includegraphics[width=0.24\textwidth]{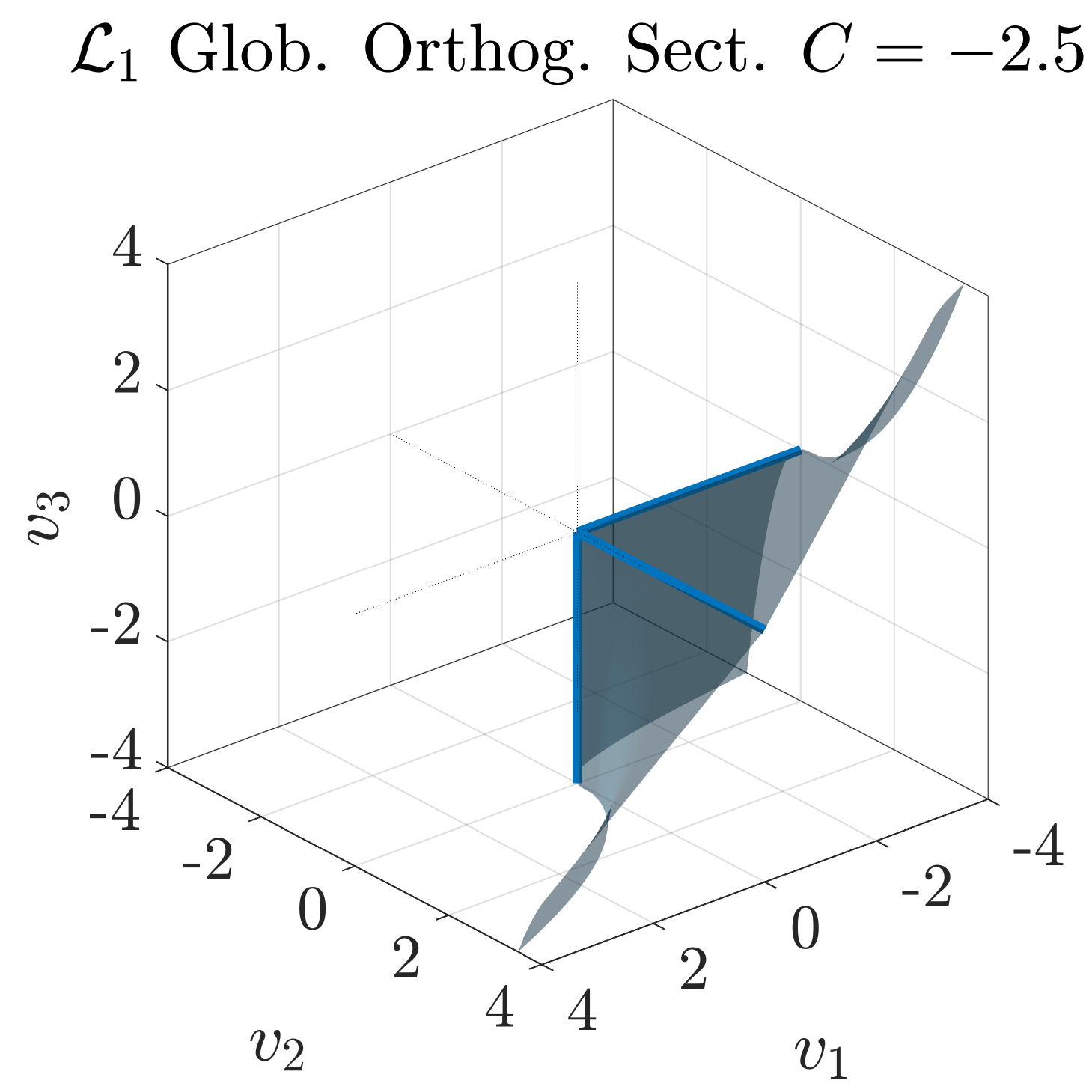}\hfill
     \includegraphics[width=0.24\textwidth]{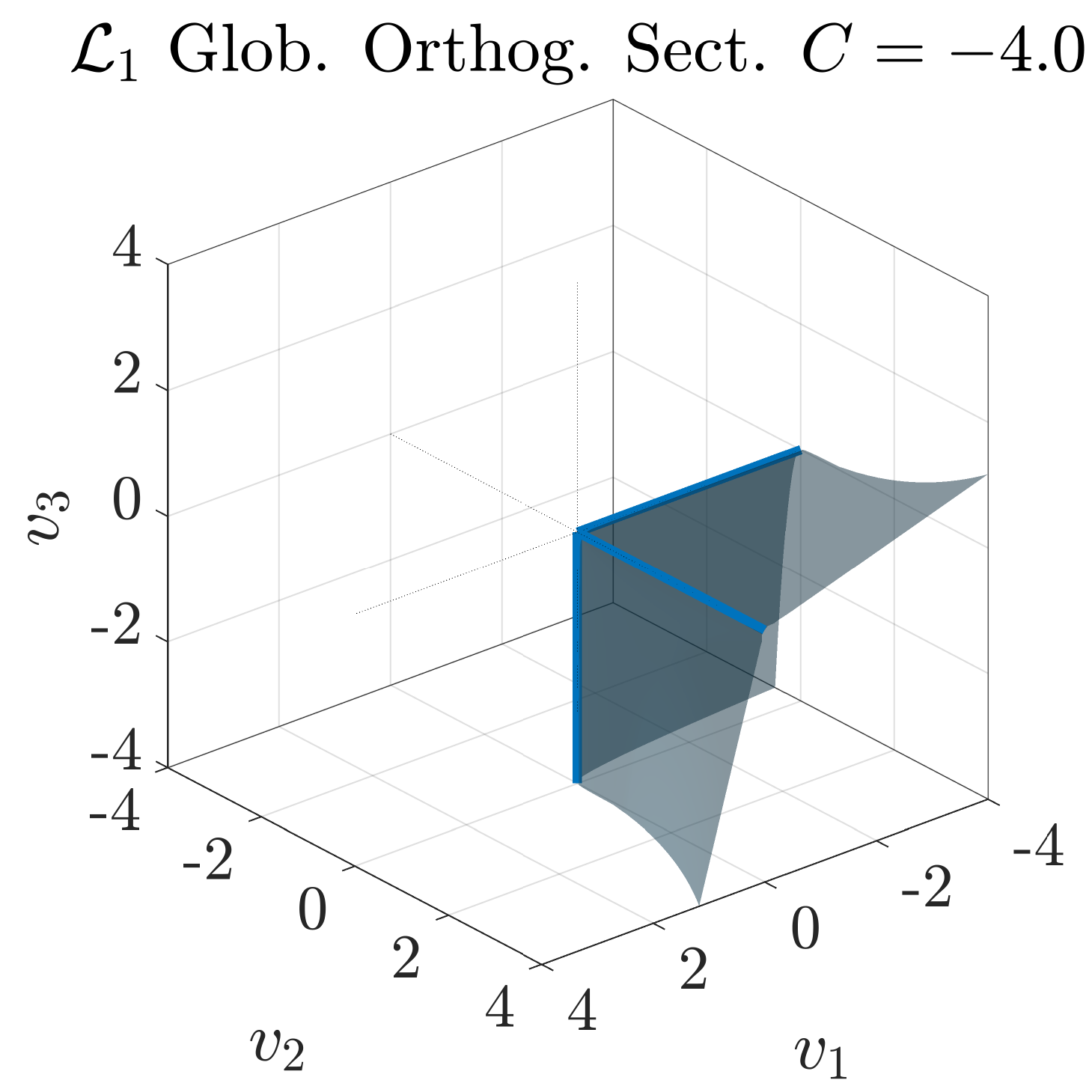}\hfill
     \includegraphics[width=0.24\textwidth]{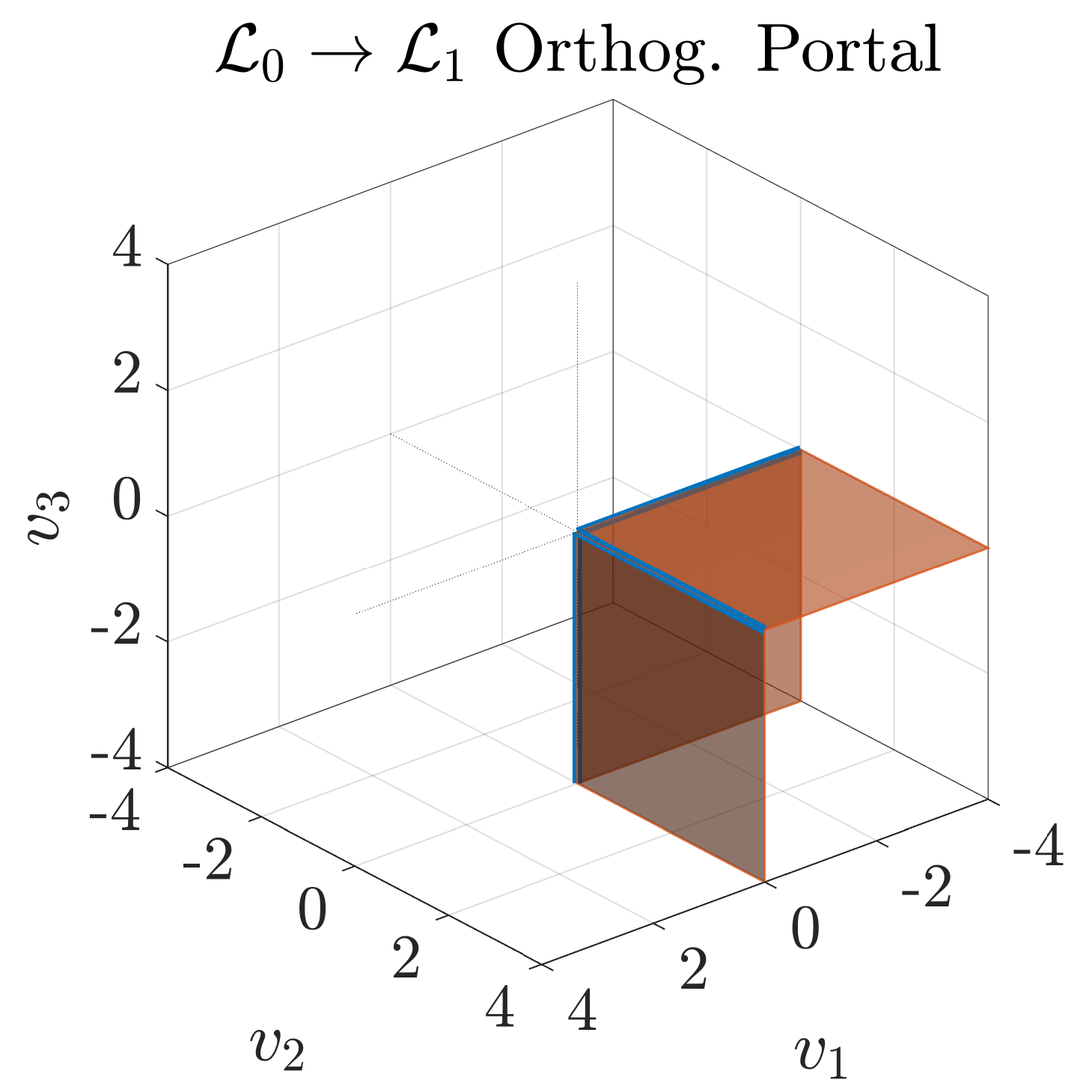}\\[1ex]
    
     \includegraphics[width=0.24\textwidth]{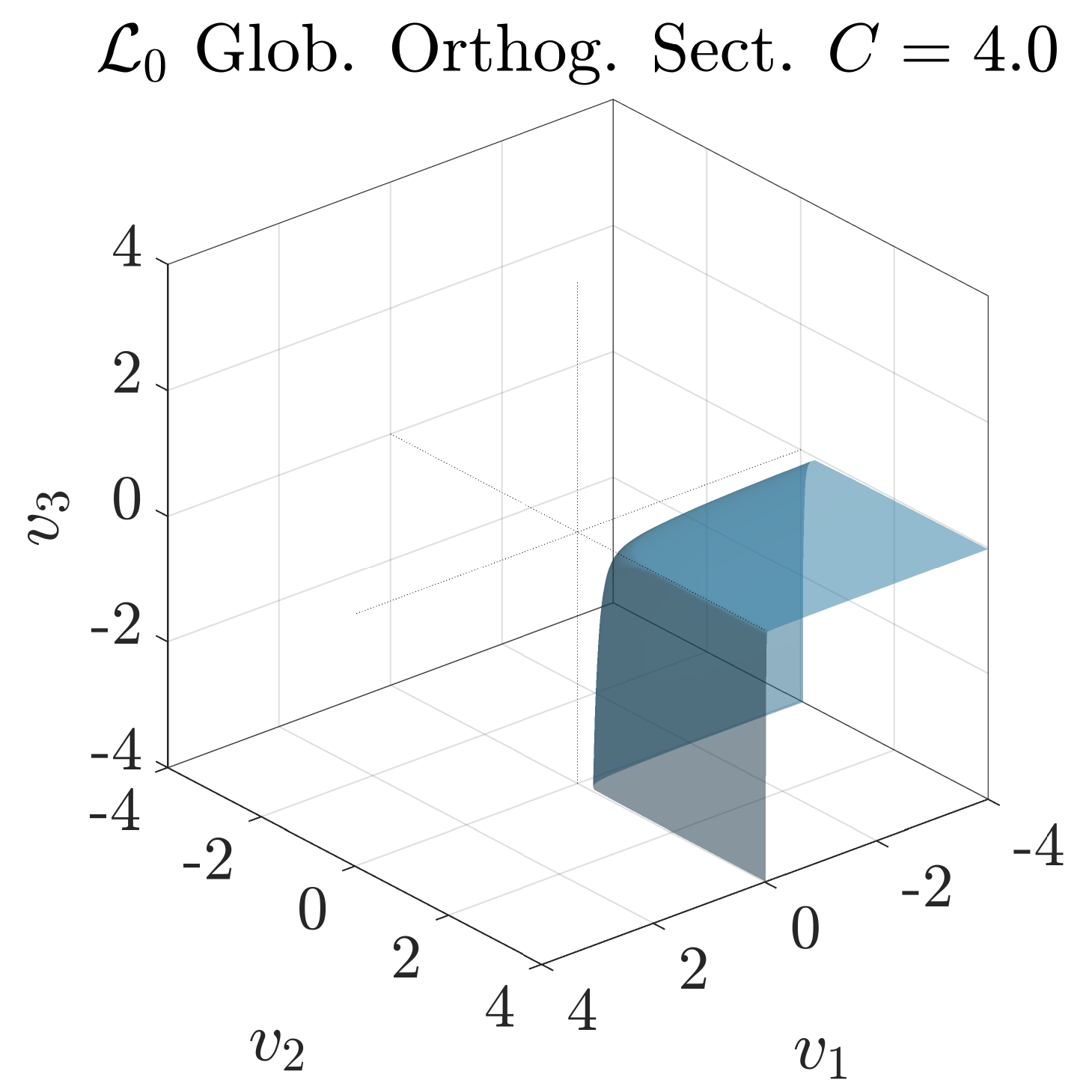}\hfill
     \includegraphics[width=0.24\textwidth]{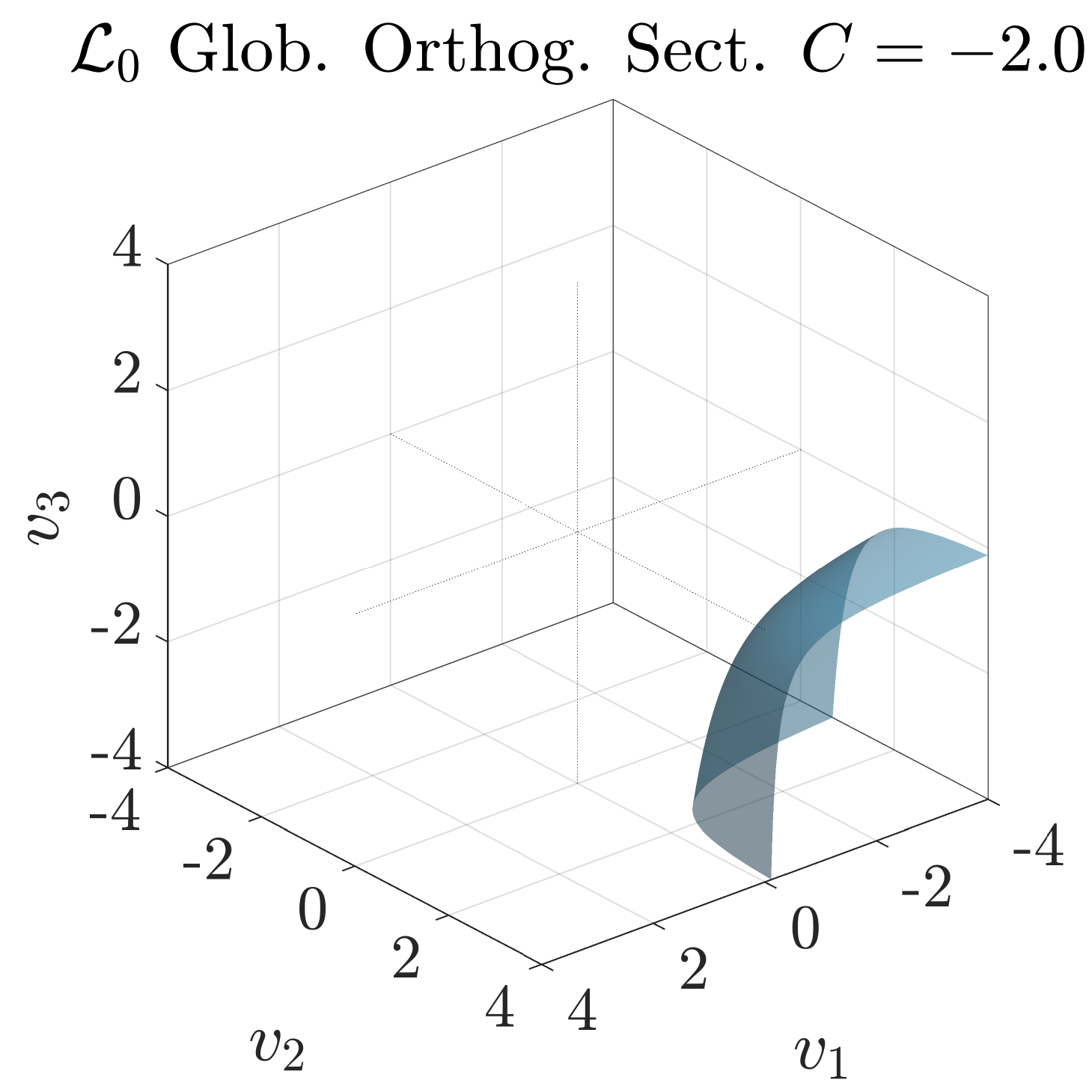}\hfill
     \includegraphics[width=0.24\textwidth]{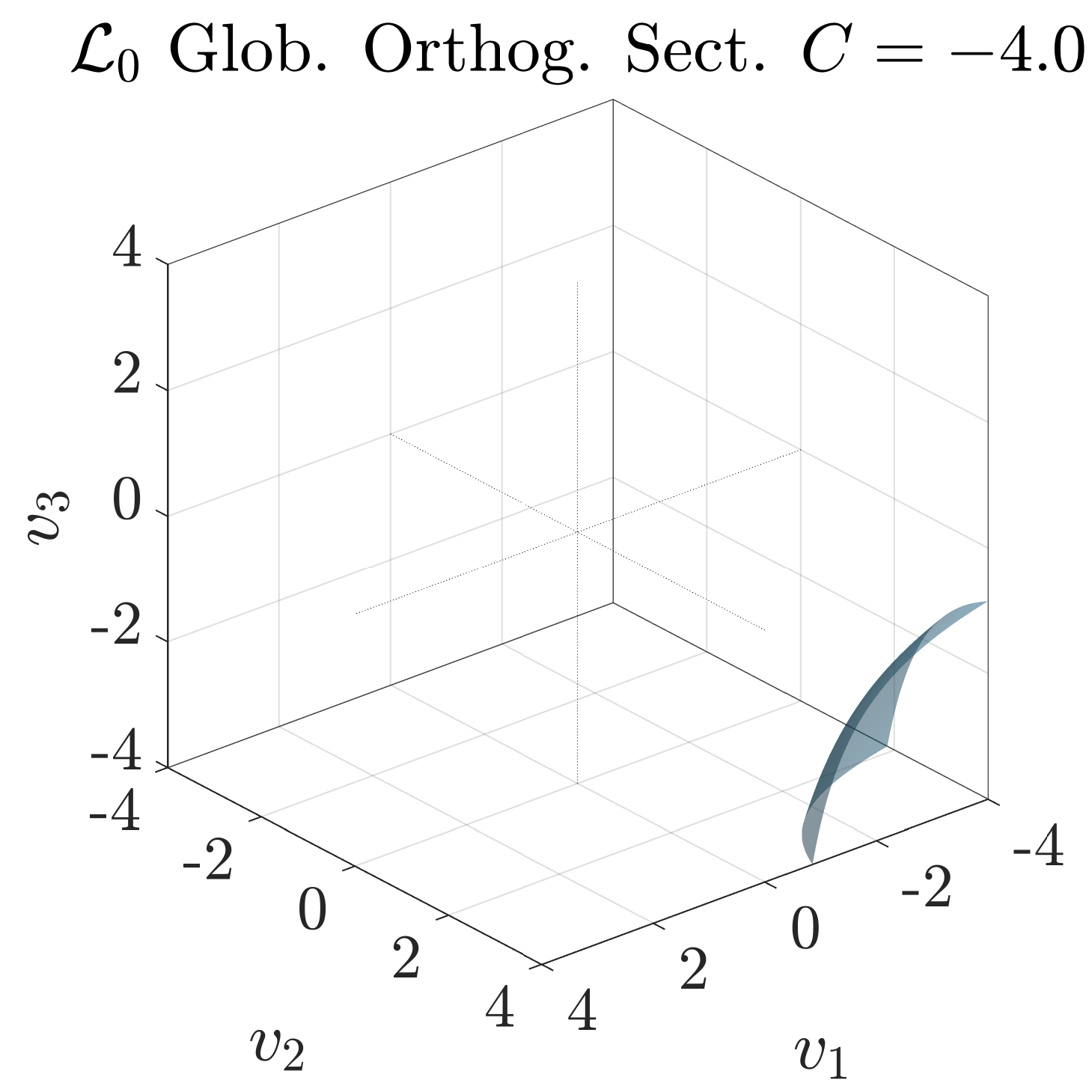}\hfill
     \makebox[0.24\textwidth]{} 
    
     \caption{Sequential progression of the global orthogonal sections already displayed in Fig.~\ref{fig:3d_cases} but this time displayed separately for better visualization. Displayed in reverse order from the positive external orthant  ($\mathcal{L}_3$, top left) back to the negative extremal orthant ($\mathcal{L}_0$, bottom row). Passing through all the intermediate layers and the portals between them.}
     \label{fig:wavefront_sequence}
 \end{figure*}

 \section{Summary: Global Orthogonal Foliation of the Kinetic Space}
 \label{sec:summary_foliation}

 The geometric structure detailed in the preceding analysis can be characterized in the framework of differential geometry as a global orthogonal foliation, formed by exact orthogonal sections of a fiber bundle. We recap and summarize these theoretical results into a list of geometric properties.

 Let $f: \mathcal{V} \to \mathcal{W}$ denote the nonlinear signed-quadratic mapping~\eqref{eq:wrench_map} from the punctured kinetic space $\mathcal{V} \subseteq \mathbb{R}^n \setminus \{0\}$ to the task space $\mathcal{W} \cong \mathbb{R}^m$. The preimages of the task variables define a family of one-dimensional fibers:
 \begin{equation}
     f^{-1}(w) = \{ v \in \mathcal{V} \mid f(v) = w \}.
 \end{equation}

 The analysis establishes the existence of a family of $m$-dimensional orthogonal manifolds, denoted $\mathcal{M}_\kappa$. The index $\kappa = (l, C) \in \mathcal{I}$ uniquely identifies each manifold, where $\mathcal{I} = \{0, \dots, n\} \times \mathbb{R}$. Here, $l$ denotes the manifold's strict confinement to a specific topological layer ($\mathcal{M}_\kappa \subset \mathcal{L}_l$), and $C$ represents its constant potential value ($\Phi(v) = C$). The topological and geometric properties of these manifolds depend strictly on their layer confinement:

 \begin{enumerate}
     \item \emph{Topological Regularity of Extremal Sections:} For orthogonal manifolds constrained entirely within the interior of the extremal layers ($l \in \{0, n\}$, such that $\mathcal{M}_\kappa \subset \mathcal{L}_0^\circ \cup \mathcal{L}_n^\circ$), the orthogonal manifolds never intersect the bounding coordinate hyperplanes. Consequently, the restriction of the map, $f|_{\mathcal{M}_\kappa} : \mathcal{M}_\kappa \to \mathcal{W}$, is a global diffeomorphism. These sections act as exact, globally smooth orthogonal cross-sections of the fiber bundle.

     \item \emph{Singular Boundaries in Transitional Sections:} For orthogonal manifolds spanning the transitional layers ($l \in \{1, \dots, n-1\}$), the sections asymptotically converge toward the origin. Because the origin acts as a topological center but is strictly excluded from $\mathcal{V}$, there is no state $v \in \mathcal{M}_\kappa$ that maps to the origin of $\mathcal{W}$. Therefore, the mapping must be restricted to the punctured task space $\mathcal{W} \setminus \{0\}$. Furthermore, the mapping is structurally partitioned by intersections with a singular set composed of hinges, defined as $\mathcal{H}_\kappa = \{ v \in \mathcal{M}_\kappa \mid \exists i \neq j \text{ s.t. } |v_i|+|v_j| = 0 \}$. These hinges act as lower-dimensional boundaries that connect the disjoint orthant-specific segments (the `petals') of the full manifold. Consequently:
     \begin{itemize}
         \item Topologically, $f|_{\mathcal{M}_\kappa}$ establishes a global homeomorphism between the transitional section $\mathcal{M}_\kappa$ and the punctured task space $\mathcal{W} \setminus \{0\}$.
         \item Differentially, at the hinges $v \in \mathcal{H}_\kappa$, the Jacobian loses full rank, meaning the mapping fails to be a diffeomorphism at these boundaries.
         \item Restricted strictly to the interior of the orthants, the mapping is locally smooth. The manifold domain excluding the hinges, $\mathcal{M}_\kappa \setminus \mathcal{H}_\kappa$, is strictly diffeomorphic to $\mathcal{W} \setminus f(\mathcal{H}_\kappa)$.
         \item As established in Proposition~\ref{prop:hinge_combinatorics}, the closer the transitional layers are to the middle ($l \approx n/2$), the larger the number of hinges. This progressively excises more singular boundaries from the valid mapping, fragmenting the diffeomorphic mapping of $\mathcal{W}$ into an increasingly granular partition of smaller continuous sectors. 
     \end{itemize}
    
     \item \emph{Orthogonal Transversality:} At any non-singular state $v$ ($v \in \mathcal{M}_\kappa$ for extremal sections, and $v \in \mathcal{M}_\kappa \setminus \mathcal{H}_\kappa$ for transitional sections), the tangent space of the manifold is strictly orthogonal to the tangent space of the intersecting fiber. That is, $T_v \mathcal{M}_\kappa \perp T_v \mathcal{F}_{f(v)}$, perfectly satisfying the condition of an orthogonal distribution.
    
     \item \emph{Kinetic Space Foliation:} The continuous family of manifolds partitions the domain $\mathcal{V}$. Every valid kinetic state $v \in \mathcal{V}$ belongs to exactly one manifold $\mathcal{M}_\kappa$. Therefore, the manifolds constitute a complete foliation of the available kinetic space:
     \begin{equation}
         \mathcal{V} = \bigsqcup_{\kappa \in \mathcal{I}} \mathcal{M}_\kappa.
     \end{equation}
 \end{enumerate}

\section{Conclusions}
\label{sec:conclusion}

This work establishes a rigorous differential geometric foundation for redundant systems governed by signed-quadratic actuation maps. Rather than relying on local numerical approximations or algebraic projections, this analysis formalizes the exact native topological structure of the nonlinear kinetic space. 

A primary theoretical result is the proof that the distribution orthogonal to the constant-task fibers is globally integrable. By deriving the exact global logarithmic potential field, we enabled a complete topological classification of the kinetic space. This field foliates the space into $2^n-2$ transitional orthants and $2$ extremal orthants, structurally stratified into $n-1$ transverse layers governed by a strict binomial progression $\binom{n}{l}$. The orthogonal foliation acts as a continuous geometric wavefront, originating in one extremal orthant, traversing the transitional layers via lower-dimensional reciprocal hinges and boundary portals, and terminating in the opposite extremal orthant. 

Exploiting this stratification, we demonstrated that the orthogonal manifolds within the extremal orthants form a global diffeomorphism to the entire unbounded task space. This result proves the theoretical existence of mathematically exact, globally smooth right-inverses that intrinsically guarantee the avoidance of kinematic singularities and geometric rank-loss, fundamentally framing redundancy resolution as a problem of manifold selection rather than constrained numerical optimization. 

While motivated by the physical actuation maps of multirotor and marine vehicles, this topological classification provides a strictly foundational framework for nonlinear systems theory. Future work will leverage these exact topological structures to formulate constrained optimal control problems natively on the manifolds. Specifically, we aim to map physical inequality constraints (actuator saturations) directly onto the bounds of these foliated layers, and to study dynamic flows that optimize kinetic energy strictly along the geometry of the orthogonal distribution, bypassing the structural limitations of Euclidean gradient projections.

\section{Acknowledgements}

The authors thank their colleagues, mentors, and mentees for inspiring discussions. The authors also acknowledge the LLM Gemini 3 (early 2026) for assistance in proofreading and optimizing the data processing and plotting scripts. The functionality of all code and the accuracy of the resulting outputs were manually verified by the authors.

\printbibliography[title={References}]

\begin{IEEEbiography}[{\includegraphics[width=1in,height=1.25in,clip,keepaspectratio]{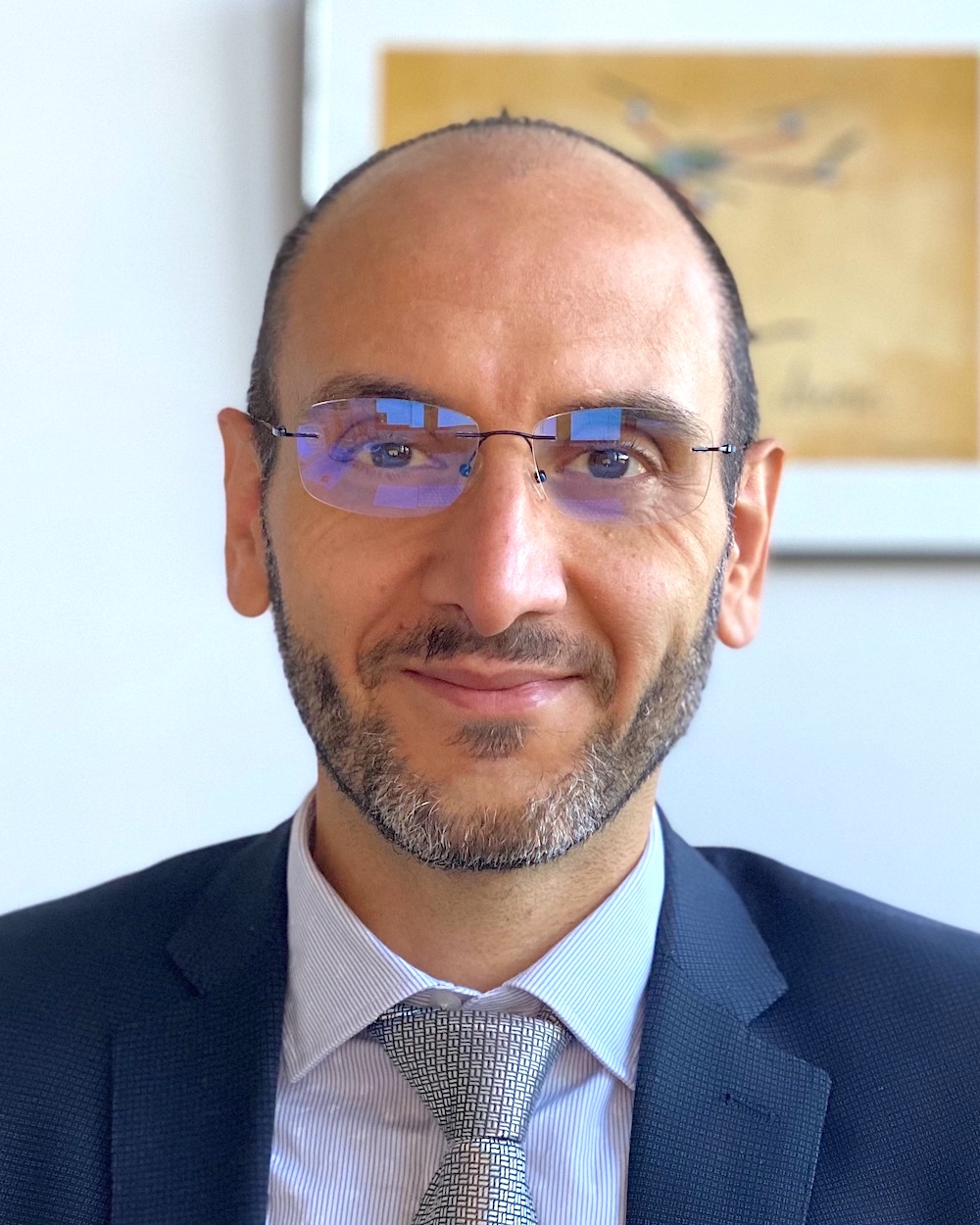}}]{Antonio Franchi} (F'23--SM'16--M'07) received the Ph.D.~degree in system engineering from the Sapienza University of Rome, Rome, Italy in 2010, and the HDR~degree in Science, from the National Polytechnic Institute of Toulouse in 2016. From 2014 to 2019 he was a tenured CNRS researcher at LAAS-CNRS, Toulouse, France. Since 2019 is with the University of Twente. Enschede, The Netherlands where is currently Full Professor in aerial robotics control. Since 2023 he is also Full Professor in the Department of Computer, Control and Management Engineering at Sapienza University of Rome.
He co-authored more than 190 publications in peer-reviewed international journals, books, and conferences on control, estimation, and  design of mechatronic systems, with a particular focus on multi-robot systems and aerial robots.
\end{IEEEbiography}

\end{document}